\providecommand{\U}[1]{\protect\rule{.1in}{.1in}}
\newcommand{\owedge}{ 
	\mathbin{
		\mathchoice
		{\buildowedge{\displaystyle}}
		{\buildowedge{\textstyle}}
		{\buildowedge{\scriptstyle}}
		{\buildowedge{\scriptscriptstyle}}
	} 
}
\newcommand\buildowedge[1]{%
	\begin{tikzpicture}[baseline=(X.base), inner sep=0, outer sep=0]
	\node[draw,circle, line width=0.08mm] (X)  {$#1\wedge$};
	\end{tikzpicture}%
}
\newcommand{\ovee}{ 
	\mathbin{
		\mathchoice
		{\buildovee{\displaystyle}}
		{\buildovee{\textstyle}}
		{\buildovee{\scriptstyle}}
		{\buildovee{\scriptscriptstyle}}
	} 
}
\newcommand\buildovee[1]{%
	\begin{tikzpicture}[baseline=(X.base), inner sep=0, outer sep=0]
	\node[draw,circle, line width=0.08mm] (X)  {$#1\vee$};
	\end{tikzpicture}%
}
\newcommand{\sovee}{ \ {\scriptstyle \ovee} \ }
\newcommand{\bigovee}{\pmb{{\Large \ovee}}}
\newcommand{\sowedge}{\ { \scriptstyle \owedge} \ }
\newcommand{\bigowedge}{\pmb{{\Large \owedge}}}
\newtheorem{theorem}{Theorem}
\newtheorem{definition}[theorem]{Definition}
\newtheorem{example}[theorem]{Example}
\newtheorem{proposition}[theorem]{Proposition}
\newtheorem{remark}[theorem]{Remark}
\newenvironment{proof}[1][Proof]{\noindent\textbf{#1.} }{\ \rule{0.5em}{0.5em}}
\definecolor{darkorange}{rgb}{1.0, 0.55, 0.0}
\newcommand{\z}{{\zeta}}
\newcommand{\hide}[1]{}
\newcommand{\cupdot}{\mathbin{\mathaccent\cdot\cup}}
\newcommand{\bigcupdot}{\bigcup\mkern-12.5mu\cdot\mkern6mu}
\providecommand{\keywords}[1]{\textbf{Keywords:} #1}
\begin{document}

\title{Fuzzy propositional configuration logics}
\author{Paulina Paraponiari\thanks{\protect\includegraphics[height=0.3cm]{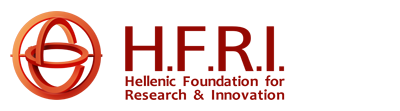}The research work was supported by the Hellenic Foundation for Research and Innovation (HFRI) under the HFRI PhD Fellowship grant (Fellowship Number: 1200).}  \\Department of Mathematics\\Aristotle University of Thessaloniki\\54124 Thessaloniki, Greece\\parapavl@math.auth.gr}
\date{}
\maketitle

\begin{abstract}
	We introduce and investigate a weighted propositional configuration logic over De Morgan algebras. This logic is able to describe software architectures with quantitative features especially the uncertainty of the interactions that occur in the architecture. We deal with the equivalence problem of formulas in our logic by showing that every formula can be written in a specific form. Surprisingly, there are formulas which are equivalent only over specific De Morgan algebras. We provide examples of formulas in our logic which describe well-known software architectures equipped with quantitative features such as the uncertainty and reliability of their interactions.
\end{abstract}

\keywords{Software architectures, Formal methods, Propositional configuration logics, Fuzzy logic, Quantitative features, Uncertainty.}

\section{Introduction}

Uncertainty is inevitable in software architecture \cite{uncerta:risk}. Software architectures are increasingly composed of many components such as workload and servers. Computations between the components run in environments in which resources may have radical variability \cite{soft:uncertain:world}. For instance, software architects may be uncertain about the cost and performance impact of a proposed software architecture. They may be aware of the cost and performance of the interactions in the architecture. However, there may be undesirable outcomes such as failure of a component to interact and complete its task \cite{uncerta:risk}. Uncertainty may affect functional and non-functional architecture requirements \cite{relax}. Hence, it is necessary to consider uncertainty as a basic quantitative characteristic in software architectures. So far, the existing architecture decision-making approaches do not provide a quantitative method of dealing with uncertainty \cite{Deali:uncert}. The motivation of our work is to formally describe and compare software architectures with quantitative features such as the uncertainty. For this, we consider that fuzzy logics rely on the idea that truth comes in degrees. Hence, they constitute a suitable tool in order to deal with uncertainty. Moreover, recently the authors in \cite{fuzzy:iot} used fuzzy logic on IoT devices for assisting the blind people for their safe movement. This is a strong indication for the possible future applications of fuzzy logic.

In this paper we extend the work of \cite{Ka:Pa,Pa_Rah_1,Pa_Rah} by introducing and investigating the fuzzy PCL (fPCL for short) over De Morgan algebras. This work is motivated as follows. In \cite{Pa_Rah_1,Pa_Rah} we introduced the weighted PCL over commutative semirings (wPCL for short). This logic serves as a specification language for the study of software architectures with quantitative features such as the maximum cost of an architecture or the maximum priority of the involvement of a component. Then in \cite{Ka:Pa}, we introduced the weighted PCL over product valuation monoids (w$_{\text{pvm}}$PCL for short)  which serves as a specification language for software architectures with quantitative features such as the average of all interactions' costs of the architecture, and the maximum cost among all costs occurring most frequently within a specific number of components. Those features are not covered in \cite{Pa_Rah_1,Pa_Rah}. The aforementioned works are not able to model the uncertainty that occurs between the interactions in the architecture. In this paper we deal with this problem by introducing and investigating the fuzzy PCL (fPCL for short) which is a weighted PCL over De Morgan algebras. 

The contributions of our work are the following.
\begin{enumerate}[$\bullet$]
	\item We introduce the syntax and semantics of fPCL. The semantics of fPCL formulas are series with values in the De Morgan algebra. This logic is able to describe software architecture with quantitative features such as the uncertainty. Moreover, we are able to compute the weight of an architecture even when unwanted components participate. This is possible since De Morgan algebras are equipped with a complement mapping whereas the algebraic structures in \cite{Ka:Pa,Pa_Rah_1,Pa_Rah} are not.
	\item In the sequel, we construct fPCL formulas which describe the Peer-to-Peer architecture and the Master/Slave architecture for finitely number of components. 
	\item  Lastly, we deal with the decidability of equivalence of fPCL formulas. For this, we examine the existence of a normal form. We show that the construction of the normal form of a fPCL formula depends on the properties of the De Morgan algebra. Hence, there may be fPCL formulas which have the same normal form over the fuzzy algebra but different ones over the Boolean algebra. In other words, two fPCL formulas can be equivalent over the fuzzy algebra but not over the Boolean algebra. We give examples to show our point. In our paper, we prove that for every fPCL formula over a set of ports and a Kleene algebra we can effectively construct an equivalent one in normal form. We note that this construction can be easily adapted for fPCL formulas over a Boolean algebra. We conclude that two fPCL formulas are equivalent over a De Morgan algebra if they have the same normal form considering the properties of the aforementioned De Morgan algebra. For this, we give an algorithm which is able to decide the equivalence of two fPCL formulas in normal form, in polynomial time.  
	
\end{enumerate}

\section{Related Work}
Existing work has investigated the formal description of the qualitative and quantitative properties of software architecture. In particular, the authors in \cite{Ma:Co} introduced the propositional configuration logic (PCL for short) which was proved sufficient to describe the qualitative properties of software architectures. Later in  \cite{Pa_Rah_1,Pa_Rah}, we introduced and investigated a weighted PCL (wPCL for short) over a commutative semiring which serves as a specification language for the study of software architectures with quantitative features such as the maximum cost of an architecture or the maximum priority of a component. We proved that the equivalence problem of wPCL formulas is decidable. In \cite{Ka:Pa} we extended the work of \cite{Pa_Rah_1,Pa_Rah} by introducing and investigating weighted PCL over product valuation monoids (w$_\text{pvm}$PCL for short). This logic is proved to be sufficient to serve as a specification language for software architectures with quantitative properties, such as the average of all interactions' costs of the architecture and the maximum cost among all costs occurring most frequently within a specific number of components in an architecture. However, the aforementioned works do not cover quantitative properties such us the uncertainty and reliability of an architecture. 

The authors in \cite{stoch} address the problem of evaluating the system reliability as a stochastic property of software architectural models in the presence of uncertainty. Also, the authors in \cite{Fram:Unc} develop a conceptual framework for the management of uncertainty in software architecture in order to reduce its impact during the system's life cycle. However, the aforementioned works are lack of formality of the architecture description, which is crucial since non-formal systems can be unreliable at some point.

\section{Preliminaries}

\subsection{Lattices}
Let $K$ be a nonempty set, and $\leq$ a binary relation over $K$ which is reflexive, antisymmetric, and transitive. Then $\leq$ is called a partial order  and the pair  $(K, \leq)$  a partially ordered set (poset for short). If the partial order $\leq $ is understood, then we shall denote the poset $(K,\leq)$ simply by $K$. For $k, k' \in K$ we denote by $k \vee k'$ (resp. $k \wedge k'$) the least upper bound or supremum (resp. the greatest lower bound or infimum) of $k$ and $k'$ if it exists in $K$. 

A poset $K$ is called a \emph{lattice} if  $k\vee k'$ and $k\wedge k'$ exist in $K$ for every $k,k'\in K$. A lattice $K$ is called \emph{distributive} if $ k\wedge(k'\vee k'')=(k\wedge k')\vee(k\wedge k'')$ and $(k \vee k')\wedge k''=(k \wedge k'')\vee(k'\wedge k'')$ for every $k,k',k'' \in K$. Moreover, the absorption laws $k \vee \left( k \wedge k^\prime  \right) = k$ and $ k \wedge \left( k \vee k^\prime  \right) = k$ hold for every $k, k^\prime \in K.$ A poset $K$ is called \emph{bounded} if there are two elements $0,1 \in K$ such that $0 \leq k \leq 1$ for every  $k \in K$.

A \emph{De Morgan algebra} is denoted by $(K,\leq,^{-})$, where $K$ is a bounded distributed lattice (bdl for short) with complement mapping $^- : K \rightarrow K$ which satisfies involution and the De Morgan laws $\overline{\overline{k}}=k$, $\overline{k \vee k'}=\overline{k} \wedge \overline{k'},$ and $\overline{k \wedge k'} = \overline{k} \vee \overline{k'} $ for every $k, k' \in K$. A known De Morgan algebra is the structure $([0,1], \leq, ^-)$ where $\leq$ is the usual order on real numbers and the complement mapping is defined by $\overline{k}=1-k$ for every $k \in [0,1]$. 

The authors in \cite{Dr:Mu,Ra:Fu} show that a semiring $(K, +, \cdot, 0,1)$ equipped with a complement mapping $^-$, which is a monoid morphism from $(K,+,0)$ to $(K,\cdot,1)$ and $\overline{\overline{k}}=k$ for every $k \in K$, is a De Morgan algebra $(K, \leq, ^-)$. The relation $\leq$ is defined as follows: $k \leq k'$ iff $k+k'=k'$. On the other hand, a De Morgan algebra $(K, \leq, ^-)$ induces a semiring $(K, \vee, \wedge, 0, 1)$ with a complement mapping $^-$. In the following, we denote a De Morgan algebra by $(K, \vee, \wedge , 0,1, ^-)$. Moreover, a \emph{Kleene algebra} is a De Morgan algebra that satisfies $k_1 \wedge \overline{k_1} \leq k_2 \vee \overline{k_2}$, or equivalently, $(k_1 \wedge \overline{k_1}) \wedge (k_2 \vee \overline{k_2}) = (k_1\wedge \overline{k_1})$ for every $k_1, k_2 \in K$. A \emph{Boolean algebra} is a Kleene algebra that satisfies $k\wedge \overline{k} =0$ and $k\vee \overline{k} = 1$ for every $k\in K.$ In the following we present the most well-known De Morgan algebras. We refer the reader to \cite{Wa:Ge,Mo:Fu} for further examples of De Morgan algebras.

\begin{figure}[t]
	\begin{subfigure}[b]{0.4\textwidth}
		\centering
		\resizebox{.5\textwidth}{!}{$\begin{array}{|c||c| c|c|}
			\hline
			\vee & 0 & 1 & u \\
			\hhline{|=||=|=|=|} 0 & 0 & 1 & u \\ 
			\hline 1 & 1 & 1 & 1\\
			\hline u & u & 1 &u \\ \hline
			\end{array} \hspace*{0.4cm} \begin{array}{|c||c| c|c|}
			\hline
			\wedge & 0 & 1 & u \\
			\hhline{|=||=|=|=|} 0 & 0 & 0 & 0 \\ 
			\hline 1 & 0 & 1 & u\\
			\hline u & 0 & u &u \\ \hline
			\end{array}$}
		\caption{Three element Kleene algebra}
		\label{three_element}
	\end{subfigure}
	\hfil
	\begin{subfigure}[b]{0.5\textwidth}
		\centering
		\resizebox{.5\textwidth}{!}{$\begin{array}{|c||c| c|c|c|}
			\hline
			\vee & 0 & 1 & u & w  \\
			\hhline{|=||=|=|=|=|} 0 & 0 &1 &u & w \\ \hline 
			1 & 1 & 1& 1& 1\\ \hline 
			u &  u & 1&  u& 1 \\ \hline
			w & w & 1 & 1 & w \\ \hline  
			\end{array}   \hspace*{0.4cm}  \begin{array}{|c||c| c|c|c|}
			\hline
			\wedge  & 0 & 1 & u & w  \\
			\hhline{|=||=|=|=|=|} 0 & 0 &0  & 0& 0 \\ \hline 
			1 & 0 &1 &u & w \\ \hline 
			u & 0  &u & u & 0 \\ \hline
			w & 0 &w  & 0 & w \\ \hline  
			\end{array}$}
		\caption{Four element algebra}
		\label{four_element}
	\end{subfigure}
	\caption{Operators of De Morgan algebras}
	\label{op_de_morg}
\end{figure}

\begin{enumerate}[$\bullet$]
	\item The two element Boolean algebra $\textbf{2} = \left( \{0,1\}, \vee, \wedge, 0,1, ^-  \right)$, where $\overline{0}=1$ and $\overline{1}=0$. 
	\item The three element Kleene algebra $\textbf{3}= \left( \{0,u,1\}, \vee, \wedge, 0,1, ^-  \right)$, where $\overline{0}=1$, $\overline{1}=0$, $\overline{u}=u$. The operators $ \vee, \wedge$ are shown in Figure \ref{three_element}. 
	\item The four element algebra $\textbf{4}=\left( \{0,u,w,1\}, \vee, \wedge, 0,1, ^- \right)$, where $\overline{u} = u$, $\overline{w} = w$, $u\vee w = 1$ and $u\wedge w = 0$. The operators $ \vee$ and $\wedge$ are shown in Figure \ref{four_element}.
	\item The fuzzy algebra $\textbf{F}=\left(  [0,1], \max, \min, 0,1, ^-  \right)$, where for every $k\in [0,1]$ the complement mapping is defined by $\overline{k} = 1-k.$ This algebra is a Kleene algebra. To see this, let $k, k^\prime \in [0,1]$ and note that $\min \{    \min\{ k, \overline{k} \}, \max\{ k^\prime, \overline{k^\prime} \}       \} = \min \{ k, \overline{k} \}.$
\end{enumerate}
\begin{quotation}
	\emph{Throughout the paper, $\mathbf{3}$ and $K_\mathbf{3}$ will denote respectively, the three element Kleene algebra and a De Morgan algebra which is a Kleene algebra. Also, by $\mathbf{2}$ and $\textbf{B}$ we will denote respectively, the two element Boolean algebra and a De Morgan algebra which is a Boolean algebra. By $K$ we will denote an arbitrary De Morgan algebra.} 
\end{quotation}

Lastly, consider $K$ be a De Morgan algebra and $Q$ a set. A \emph{formal series} (or simply \emph{series}) \emph{over}
$Q$ \emph{and} $K$ is a mapping $s:Q\rightarrow K$. We denote by  $K\left\langle \left\langle Q
\right\rangle \right\rangle $ the class of all series over $Q$ and $K$.

\section{Fuzzy Propositional Interaction Logic}
In this section we introduce a quantitative version of PIL where the weights are taken in the De Morgan algebra $K$. Since De Morgan algebras and more generally bdl's found applications in fuzzy theory, we call our weighted PIL a fuzzy PIL. 
\begin{definition}
	The syntax of formulas of \emph{fuzzy PIL} (\emph{fPIL} for short) over $P$ and $K$ is given by the grammar:
	$$\varphi::= true \mid p \mid \  ! \varphi \mid \varphi \sovee \varphi  $$
	where $p \in P$ and the operators $!, \sovee  $ denote the fuzzy negation and the fuzzy disjunction, respectively, among \emph{fPIL} formulas. 
\end{definition}

The fuzzy conjunction operator among fPIL formulas $\sowedge$ is defined by $\varphi_1 \sowedge \varphi_2 : = \ ! (! \varphi_1   \sovee  ! \varphi_2).$

For the semantics of fPIL formulas over $P$ and $K$ we introduce the notion of a $K$-fuzzy interaction. For this we need to recall the $K$-fuzzy sets from \cite{l:fuzzy}. A $K$-fuzzy set $S$ on a non empty set $X$ is a function $S: X\to K$. A \emph{$K$-fuzzy interaction} $\alpha$ on $P$ is a $K$-fuzzy set on $P$ with the restriction that $\alpha(p) \neq 0$ for at least one port $p\in P$. We denote by $fI(P,K)$ the set of $K$-fuzzy interactions $\alpha$ on $P$ and by $fPIL(K,P)$ the set of all fPIL formulas over $P$ and $K$. We interpret fPIL formulas over $P$ and $K$ as series in $K \left\langle \left \langle fI(P,K) \right\rangle \right \rangle$.

\begin{definition}
	Let $\varphi \in fPIL(K,P)$. The semantics of $\varphi$ is a series $\left\Vert \varphi \right\Vert \in K \left\langle \left\langle fI(P,K) \right\rangle \right\rangle$. For every $K$-fuzzy interaction $\alpha\in fI(P,K)$ the value $\left\Vert \varphi \right\Vert (a)$ is defined inductively on the structure of $\varphi$ as follows:
	\begin{enumerate}[$\bullet$]
		\item $\left\Vert true\right\Vert (a) = 1$,
		\item $\left\Vert p\right\Vert (a) = a(p)$,
		\item $\left\Vert ! \varphi \right\Vert (a) = \overline{\left\Vert \varphi \right\Vert (a)}$,
		\item $\left\Vert \varphi_1 \sovee \varphi_2 \right\Vert (a) = \left\Vert \varphi_1 \right\Vert (a) \vee \left\Vert \varphi_2 \right\Vert (a) $.
	\end{enumerate}
\end{definition}

Trivially, we get $\left\Vert\varphi_1 \sowedge \varphi_2 \right\Vert(a)=\left\Vert\varphi_1\right\Vert(a)\wedge \left\Vert\varphi_2\right\Vert(a)$ for every  $\alpha\in fI(P,K)$. Moreover, we define the fPIL formula $! true := false$ and it is valid that $\left\Vert  false \right\Vert (\alpha) = 0$ for every $\alpha \in fI(P,K).$ 

Next, we define the equivalence relation among fPIL formulas. For this, we consider that De Morgan algebras such as the Kleene and the Boolean algebra satisfy some extra properties except from the ones that are valid to every De Morgan algebra by its definition. 

\begin{definition}\label{fpil_equiv}
	Two \emph{fPIL} formulas $\varphi_1, \varphi_2$ over $P$ and a concrete De Morgan algebra $K_{con}$ are called $K_{con}$-equivalent, and we write $\varphi_1 \ \dot{\equiv}_{K_{con}}  \ \varphi_2$, whenever $\left\Vert \varphi_1 \right\Vert (\alpha)=\left\Vert \varphi_2\right\Vert(\alpha)$ for every $\alpha \in fI(P,K_{con}).$ 
	
	Two \emph{fPIL} formulas $\varphi_1, \varphi_2$ over $P$ and an arbitrary De Morgan algebra $K$ are called simply equivalent, and we write $\varphi_1\ \dot{\equiv} \ \varphi_2 $, whenever $\left\Vert \varphi_1 \right\Vert (\alpha)=\left\Vert \varphi_2\right\Vert(\alpha)$ for every $\alpha \in fI(P,K).$
\end{definition}

Let $P=\{ p,q,r \}$ be a set of ports. Following the previous definition and by the properties of De Morgan algebras, we prove that $p\sowedge !p \sowedge \left( q\sovee r \right) \ \dot{\equiv}_{\mathbf{2}} \ false$ and $p\sowedge !p \sowedge \left( q\sovee r \right) \ \dot{\equiv} \ \left(p\sowedge !p\sowedge q  \right) \sovee \left(p\sowedge !p\sowedge r  \right)  $.  We proceed with some properties of our fPIL formulas.

\begin{proposition}\label{neg_i_oplus}
	Let $\varphi_1, \varphi_2 $ be fPIL formulas over $P$ and $K$. Then 
	\[  ! \left( \varphi_1 \sovee \varphi_2 \right) \ \dot{\equiv} \ (!\varphi_1) \sowedge \left( ! \varphi_2 \right).  \]
\end{proposition}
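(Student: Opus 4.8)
The plan is to prove this pointwise, directly from the semantic clauses, and to bottom out in the De Morgan law of the underlying algebra $K$. Since $\dot{\equiv}$ is defined to mean equality of the two series $\left\Vert \cdot \right\Vert$ at every $\alpha \in fI(P,K)$, I would fix an arbitrary $K$-fuzzy interaction $\alpha$ and show that the two formulas take equal values there.

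I would abbreviate $k_1 = \left\Vert \varphi_1 \right\Vert(\alpha)$ and $k_2 = \left\Vert \varphi_2 \right\Vert(\alpha)$, both elements of $K$. The left-hand side unfolds, via the clauses for negation and for $\sovee$, to $\left\Vert !(\varphi_1 \sovee \varphi_2) \right\Vert(\alpha) = \overline{\left\Vert \varphi_1 \sovee \varphi_2 \right\Vert(\alpha)} = \overline{k_1 \vee k_2}$. For the right-hand side I would use the derived identity $\left\Vert \psi_1 \sowedge \psi_2 \right\Vert(\alpha) = \left\Vert \psi_1 \right\Vert(\alpha) \wedge \left\Vert \psi_2 \right\Vert(\alpha)$ recorded just before the proposition, together with the negation clause, to obtain $\left\Vert (!\varphi_1) \sowedge (!\varphi_2) \right\Vert(\alpha) = \overline{k_1} \wedge \overline{k_2}$.

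Comparing the two sides then amounts exactly to the De Morgan identity $\overline{k_1 \vee k_2} = \overline{k_1} \wedge \overline{k_2}$, which holds in every De Morgan algebra by definition; hence the values agree at $\alpha$, and since $\alpha$ was arbitrary, the equivalence follows.

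There is no genuine obstacle here: the statement is merely a formula-level shadow of a defining axiom of $K$. The only care required is to invoke the correct semantic clause at each step, and to observe that the conclusion holds over an \emph{arbitrary} De Morgan algebra $K$ — that is, with the simple equivalence $\dot{\equiv}$ — with no appeal to the extra Kleene or Boolean properties, since the De Morgan law is already guaranteed by the structure of $K$.
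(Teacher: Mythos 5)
Your proof is correct and follows essentially the same route as the paper: fix an arbitrary $\alpha \in fI(P,K)$, unfold both sides via the semantic clauses for $!$, $\sovee$, and the derived clause for $\sowedge$, and conclude by the De Morgan law $\overline{k_1 \vee k_2} = \overline{k_1}\wedge\overline{k_2}$ of $K$. No differences worth noting.
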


\begin{proof}
	Let $\alpha \in fI(P,K)$. Then 
	\begin{align*}
	\left\Vert ! \left( \varphi_1 \sovee \varphi_2  \right) \right\Vert(\alpha) & = \overline{\left\Vert   \varphi_1 \sovee \varphi_2  \right\Vert(\alpha)} \\ & = \overline{  \left\Vert \varphi_1 \right\Vert(\alpha) \vee \left\Vert \varphi_2 \right\Vert(\alpha) } \\ & = \overline{  \left\Vert \varphi_1 \right\Vert(\alpha)  } \wedge \overline{\left\Vert \varphi_2 \right\Vert(\alpha)} \\ & = \left\Vert ! \varphi_1\right\Vert(\alpha) \wedge \left\Vert !\varphi_2\right\Vert(\alpha) \\ & = \left\Vert \left( ! \varphi_1  \right) \sowedge \left( ! \varphi_2  \right) \right\Vert(\alpha).
	\end{align*}   \end{proof}

\begin{proposition}\label{pil_true_false}
	Let $\varphi$ be a \emph{fPIL} formula over $P$ and $K$. Then the following hold:
	\begin{flushleft}
		\begin{tabular}{l l l}
			$ (1)$ $  \varphi \sovee true \ \dot{\equiv} \ true,$ & \hspace*{1cm} $ (3) $ $\varphi \sowedge true \ \dot{\equiv} \ \varphi$, & \hspace*{1cm} $ (5)$ $ !!\varphi \ \dot{\equiv} \ \varphi.  $  \\ $(2)$ $ \varphi \sovee false \ \dot{\equiv}  \ \varphi,$ & \hspace*{1cm} $(4) $ $ \varphi \sowedge false \ \dot{\equiv}  \ false ,$ & 
		\end{tabular}
	\end{flushleft}
\end{proposition}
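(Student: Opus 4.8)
The plan is to establish all five equivalences directly from the inductive semantics, by fixing an arbitrary $K$-fuzzy interaction $\alpha \in fI(P,K)$ and reducing each claim to an identity among the elements $\Vert\varphi\Vert(\alpha)$, $0$, and $1$ of the De Morgan algebra $K$. Since $\dot{\equiv}$ is defined pointwise over all $\alpha$, it suffices to verify each identity at an arbitrary but fixed $\alpha$. Throughout I will use that $\Vert true\Vert(\alpha)=1$ and $\Vert false\Vert(\alpha)=0$, together with the semantic clauses $\Vert\varphi_1 \sovee \varphi_2\Vert(\alpha)=\Vert\varphi_1\Vert(\alpha)\vee\Vert\varphi_2\Vert(\alpha)$ and $\Vert!\varphi\Vert(\alpha)=\overline{\Vert\varphi\Vert(\alpha)}$, as well as the derived clause $\Vert\varphi_1 \sowedge \varphi_2\Vert(\alpha)=\Vert\varphi_1\Vert(\alpha)\wedge\Vert\varphi_2\Vert(\alpha)$ recorded immediately after the semantics.

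For parts $(1)$ and $(2)$, which concern $\sovee$, the key input is that $K$ is a bounded lattice, so $0 \leq \Vert\varphi\Vert(\alpha) \leq 1$. Writing $k=\Vert\varphi\Vert(\alpha)$, the order $k\leq 1$ forces the supremum $k \vee 1 = 1$, which gives $(1)$ since the right-hand side evaluates to $\Vert true\Vert(\alpha)=1$; and $0 \leq k$ forces $k \vee 0 = k$, which gives $(2)$. For parts $(3)$ and $(4)$, which concern $\sowedge$, I first pass through the derived semantics to rewrite the left-hand side as an infimum in $K$. Again with $k=\Vert\varphi\Vert(\alpha)$ and $0 \leq k \leq 1$: the bound $k \leq 1$ yields $k \wedge 1 = k$, establishing $(3)$; and $0 \leq k$ yields $k \wedge 0 = 0 = \Vert false\Vert(\alpha)$, establishing $(4)$. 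Finally, part $(5)$ is immediate from the involution axiom of the De Morgan algebra, namely $\Vert!!\varphi\Vert(\alpha) = \overline{\overline{\Vert\varphi\Vert(\alpha)}} = \Vert\varphi\Vert(\alpha)$.

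No genuine obstacle arises: each identity collapses to a one-line computation in $K$ once the semantic clauses are unfolded. The only points requiring minor attention are to route parts $(3)$ and $(4)$ through the derived product clause for $\sowedge$, rather than expanding $\sowedge$ via its definition $!(!\varphi_1 \sovee !\varphi_2)$, and to record the bounded-lattice facts $k\vee 1 = 1$, $k\vee 0 = k$, $k\wedge 1 = k$, $k\wedge 0 = 0$ as immediate consequences of $0\leq k\leq 1$. Since these facts and the involution law hold in \emph{every} De Morgan algebra, each equivalence is a simple equivalence valid over an arbitrary $K$, as claimed.
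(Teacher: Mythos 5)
Your proof is correct and matches the paper's intent: the paper dismisses this proposition as ``straightforward,'' and your argument is exactly the direct verification it has in mind --- unfold the semantics at an arbitrary $\alpha$ and apply the bounded-lattice identities $k\vee 1=1$, $k\vee 0=k$, $k\wedge 1=k$, $k\wedge 0=0$ together with involution. Nothing further is needed.
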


\begin{proof}
	The proofs are straightforward.
\end{proof}

\begin{proposition}\label{fpil_associa}
	The operators $\sowedge$ and $\sovee$ of the \emph{fPIL} are associative.
\end{proposition}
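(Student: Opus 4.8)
The plan is to reduce the associativity of the two fPIL operators to the associativity of the underlying lattice operations $\vee$ and $\wedge$ of the De Morgan algebra $K$. Recall that $K$ is a bounded distributive lattice, so both the supremum $\vee$ and the infimum $\wedge$ are associative binary operations on $K$; equivalently, viewing $K$ as the induced semiring $(K,\vee,\wedge,0,1)$, associativity of $\vee$ and of $\wedge$ is part of the semiring axioms. These are the only facts about $K$ that the argument needs.

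First I would fix arbitrary fPIL formulas $\varphi_1,\varphi_2,\varphi_3$ over $P$ and $K$ and an arbitrary $K$-fuzzy interaction $\alpha\in fI(P,K)$, and settle the case of $\sovee$. Unfolding the semantics of disjunction twice yields
\begin{align*}
	\left\Vert (\varphi_1 \sovee \varphi_2) \sovee \varphi_3 \right\Vert(\alpha) &= \left( \left\Vert \varphi_1 \right\Vert(\alpha) \vee \left\Vert \varphi_2 \right\Vert(\alpha) \right) \vee \left\Vert \varphi_3 \right\Vert(\alpha) \\
	&= \left\Vert \varphi_1 \right\Vert(\alpha) \vee \left( \left\Vert \varphi_2 \right\Vert(\alpha) \vee \left\Vert \varphi_3 \right\Vert(\alpha) \right) \\
	&= \left\Vert \varphi_1 \sovee (\varphi_2 \sovee \varphi_3) \right\Vert(\alpha),
\end{align*}
where the middle equality is exactly the associativity of $\vee$ in $K$. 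Since $\alpha$ is arbitrary, this gives $(\varphi_1 \sovee \varphi_2) \sovee \varphi_3 \ \dot{\equiv} \ \varphi_1 \sovee (\varphi_2 \sovee \varphi_3)$ by Definition \ref{fpil_equiv}.

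The argument for $\sowedge$ is entirely parallel: it uses the identity $\left\Vert \psi_1 \sowedge \psi_2 \right\Vert(\alpha) = \left\Vert \psi_1 \right\Vert(\alpha) \wedge \left\Vert \psi_2 \right\Vert(\alpha)$ recorded right after the definition of the semantics, together with the associativity of $\wedge$ in $K$, carried out valuewise for every $\alpha$. Alternatively, one can avoid repeating the computation and instead derive associativity of $\sowedge$ from that of $\sovee$: expanding $\varphi_i \sowedge \varphi_j := \ !(!\varphi_i \sovee !\varphi_j)$, applying Proposition \ref{neg_i_oplus} together with the involution law $!!\psi \ \dot{\equiv} \ \psi$ of Proposition \ref{pil_true_false}(5), and pushing the negations through, one transports the already-proved associativity of $\sovee$ onto $\sowedge$; this route additionally relies on $\dot{\equiv}$ being a congruence, which is immediate from the compositional definition of the semantics.

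I expect no genuine obstacle here, since the content is simply that the lattice operations of $K$ are associative and that the semantics of $\sovee$ and $\sowedge$ are defined pointwise by those operations. The only point requiring mild care is that associativity is asserted at the level of the equivalence relation $\dot{\equiv}$ rather than as a syntactic identity, so the reduction must be performed valuewise for every $\alpha \in fI(P,K)$, which is precisely what the displayed chain of equalities accomplishes.
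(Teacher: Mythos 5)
Your proof is correct and follows exactly the route the paper takes: the paper's proof is the one-line observation that $\vee$ and $\wedge$ are associative in $K$, and your pointwise unfolding of the semantics for each $\alpha \in fI(P,K)$ is just the explicit version of that argument. The alternative derivation of the $\sowedge$ case via Proposition \ref{neg_i_oplus} and involution is also valid but adds nothing the direct computation does not already give.
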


\begin{proof}
	The proposition holds since the operators $\wedge$ and $\vee$ are associative.	
\end{proof}

\begin{proposition}\label{otimes_over_oplus_i}
	Let $\varphi, \varphi_1, \varphi_2 \in fPIL(K,P)$. Then 
	\[  \varphi \sowedge \left(  \varphi_1 \sovee \varphi_2 \right) \ \dot{\equiv} \ \left( \varphi \sowedge  \varphi_1\right) \sovee \left( \varphi \sowedge \varphi_2 \right). \]
\end{proposition}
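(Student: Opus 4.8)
The plan is to prove this equivalence purely semantically, by evaluating both formulas at an arbitrary $K$-fuzzy interaction and reducing the identity to the distributive law of the underlying lattice. Since the claimed relation is simple equivalence $\dot{\equiv}$, I would fix an arbitrary De Morgan algebra $K$ and an arbitrary $\alpha \in fI(P,K)$, and show that $\left\Vert \varphi \sowedge \left( \varphi_1 \sovee \varphi_2 \right) \right\Vert(\alpha) = \left\Vert \left( \varphi \sowedge \varphi_1 \right) \sovee \left( \varphi \sowedge \varphi_2 \right) \right\Vert(\alpha)$.

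First I would abbreviate $k := \left\Vert \varphi \right\Vert(\alpha)$, $k_1 := \left\Vert \varphi_1 \right\Vert(\alpha)$, and $k_2 := \left\Vert \varphi_2 \right\Vert(\alpha)$, all elements of $K$. Applying the semantic clause for $\sowedge$, namely the identity $\left\Vert \psi_1 \sowedge \psi_2 \right\Vert(\alpha) = \left\Vert \psi_1 \right\Vert(\alpha) \wedge \left\Vert \psi_2 \right\Vert(\alpha)$ derived immediately after the semantics definition, together with the clause $\left\Vert \psi_1 \sovee \psi_2 \right\Vert(\alpha) = \left\Vert \psi_1 \right\Vert(\alpha) \vee \left\Vert \psi_2 \right\Vert(\alpha)$, the left-hand side becomes $k \wedge \left( k_1 \vee k_2 \right)$. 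The same two clauses rewrite the right-hand side as $\left( k \wedge k_1 \right) \vee \left( k \wedge k_2 \right)$.

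It then remains to observe that these two lattice expressions coincide. This is exactly the first distributive law $k \wedge \left( k' \vee k'' \right) = \left( k \wedge k' \right) \vee \left( k \wedge k'' \right)$, which holds in every bounded distributive lattice and hence in $K$ by the definition of a De Morgan algebra. Because distributivity is available in every De Morgan algebra, no special hypothesis on $K$ is needed; this is precisely why the statement is phrased with $\dot{\equiv}$ rather than being restricted to a concrete algebra such as $\mathbf{2}$ or $\mathbf{3}$.

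There is no genuine obstacle in this argument, and in structure it mirrors the proof of Proposition \ref{neg_i_oplus}: the only step with any content is recognising that the distributivity of $\sowedge$ over $\sovee$ at the level of formulas is inherited verbatim from the distributivity of $\wedge$ over $\vee$ in the lattice, once the semantic identity for $\sowedge$ has been invoked. The difference from Proposition \ref{neg_i_oplus} is merely that there one appeals to a De Morgan law, whereas here one appeals to the distributive law.
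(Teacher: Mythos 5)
Your argument is correct and is essentially the paper's own proof, which simply appeals to the distributivity of $\wedge$ over $\vee$; you have merely spelled out the pointwise evaluation at an arbitrary $\alpha \in fI(P,K)$ that this appeal implicitly relies on. No further comment is needed.
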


\begin{proof}
	Since $\wedge$ distributes over $\vee$ we get the proposition. 
\end{proof}

Next, we give the absorption and idempotent laws among fPIL formulas.

\begin{proposition}\label{absorpt_pil}
	Let $\varphi, \varphi^\prime \in fPIL(K,P)$. Then 
	
	\begin{tabular}{l l l l}
		$(1)$ & $\varphi \sowedge \left( \varphi \sovee \varphi^\prime \right) \  \dot{\equiv} \ \varphi $ & \hspace*{1cm} $(3)$ & $\varphi \sovee \varphi \ \dot{\equiv} \ \varphi $. \\[0.1cm] $(2)$ & $\varphi \sovee \left( \varphi \sowedge \varphi^\prime \right) \ \dot{\equiv} \ \varphi $. & \hspace*{1cm} $(4)$ & $\varphi \sowedge \varphi \ \dot{\equiv} \ \varphi.$
	\end{tabular}
	
\end{proposition}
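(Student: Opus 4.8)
The plan is to reduce each of the four statements to a lattice identity that holds pointwise in $K$. I would fix an arbitrary $K$-fuzzy interaction $\alpha \in fI(P,K)$ and, to keep the bookkeeping light, abbreviate $k := \left\Vert \varphi \right\Vert(\alpha)$ and $k' := \left\Vert \varphi^\prime \right\Vert(\alpha)$, both of which lie in $K$. The entire content of the proposition is then carried by the semantic clauses for $\sovee$ and $\sowedge$, which translate these operators into the lattice operations $\vee$ and $\wedge$ respectively (recall $\left\Vert \psi_1 \sowedge \psi_2 \right\Vert(\alpha) = \left\Vert \psi_1 \right\Vert(\alpha) \wedge \left\Vert \psi_2 \right\Vert(\alpha)$, noted just after the definition of the semantics).

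For part $(1)$ I would unfold the semantics to get $\left\Vert \varphi \sowedge \left( \varphi \sovee \varphi^\prime \right) \right\Vert(\alpha) = k \wedge (k \vee k')$, and then invoke the absorption law $k \wedge (k \vee k') = k$, which is available since $K$ is a bounded distributive lattice (the absorption laws are recorded in the Preliminaries). This equals $\left\Vert \varphi \right\Vert(\alpha)$, and since $\alpha$ was arbitrary, part $(1)$ follows. Part $(2)$ is the dual computation: $\left\Vert \varphi \sovee \left( \varphi \sowedge \varphi^\prime \right) \right\Vert(\alpha) = k \vee (k \wedge k') = k$ by the second absorption law.

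For parts $(3)$ and $(4)$ I would similarly compute $\left\Vert \varphi \sovee \varphi \right\Vert(\alpha) = k \vee k$ and $\left\Vert \varphi \sowedge \varphi \right\Vert(\alpha) = k \wedge k$, and then apply the idempotency of $\vee$ and $\wedge$. Idempotency holds in every lattice: $k \vee k$ is the least upper bound of $k$ with itself, hence equal to $k$, and dually for the meet (alternatively, it is an immediate consequence of the absorption laws already used). All four equalities are valid in an arbitrary De Morgan algebra $K$, so the resulting equivalences are simple equivalences $\dot{\equiv}$ rather than equivalences relative to a concrete algebra.

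There is essentially no obstacle here: the only point that deserves a moment's care is the purely conceptual one, namely that no special property of $K$ (such as being a Kleene or Boolean algebra) is needed, since the absorption and idempotency laws belong to the lattice axioms themselves. Everything else is a routine translation of the fPIL connectives into the underlying lattice operations via the semantic clauses.
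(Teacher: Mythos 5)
Your proposal is correct and follows essentially the same route as the paper: unfold the semantics pointwise at an arbitrary $\alpha$ and apply the absorption laws of the underlying bounded distributive lattice for $(1)$--$(2)$ and the idempotency of $\vee$ and $\wedge$ for $(3)$--$(4)$. The paper merely states this in one line, while you spell out the pointwise reduction explicitly; there is no substantive difference.
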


\begin{proof}
	For the proof of (1) and (2) we apply the absorption laws of De Morgan algebras. The other are valid since $\wedge $ and $\vee$ are idempotent.  
\end{proof}

\section{Fuzzy Propositional Configuration Logic}
In this section we introduce and investigate the fuzzy PCL over $P$ and $K$.

\begin{definition}
	The syntax of formulas of \emph{fuzzy  PCL} (\emph{fPCL} for short) \emph{over} $P$ \emph{and} $K$ is given by the grammar:
	$$ \zeta :: = \varphi \mid  \neg \zeta \mid \zeta \oplus \zeta \mid  \zeta \uplus \zeta   $$
	where  $\varphi$ is a  \emph{fPIL} formula over $P$ and $K$, $\neg$, $\oplus$ and $\uplus$ denote the fuzzy negation, the fuzzy disjunction and the fuzzy coalescing operator, respectively. 
\end{definition}

Let $\zeta, \zeta^\prime $ be fPCL formulas over $P$ and $K$. The fuzzy conjunction operator among $\zeta$ and $\zeta^\prime$ and the closure operator of $\zeta$ are defined, respectively, as follows:
\begin{flushleft}
	$\begin{array}{l l}
	(1) \ \zeta \otimes \zeta^\prime : = \neg (\neg \zeta \oplus \neg \zeta^\prime), & \hspace*{1cm} (2) \   \sim \zeta := \zeta \uplus true.
	\end{array}$
\end{flushleft}

Next, we denote by $fC(P,K)$ the set of nonempty sets of $K$-fuzzy interactions in $fI(P,K)$, and by $fPCL(K,P)$ the set of fPCL formulas over $P$ and $K$. We define the semantics of fPCL formulas over $P$ and $K$ as series in $K \left\langle\left\langle fC(P,K) \right\rangle\right\rangle $.  

\begin{definition}
	Let $\zeta$ be a \emph{fPCL} formula over $P$ and $K$. The semantics of $\zeta$ is a series $\left\Vert \zeta \right\Vert \in K \left\langle\left\langle fC(P,K) \right\rangle\right\rangle$. For every set $\gamma \in fC(P,K)$ the value  $\left\Vert \zeta \right\Vert (\gamma)$ is defined inductively on the structure of $\z$ as follows:
	\begin{enumerate}[$\bullet$]
		\item $\left\Vert \varphi \right\Vert (\gamma) = \underset{\alpha\in\gamma}{\bigwedge} \left\Vert \varphi \right \Vert(a)$,
		\item $\left\Vert \neg \zeta \right\Vert (\gamma) = \overline{\left\Vert \zeta \right\Vert (\gamma)}$,
		\item $\left\Vert \zeta_1 \oplus \zeta_2 \right\Vert (\gamma) = \left\Vert \zeta_1 \right\Vert (\gamma) \vee \left\Vert \zeta_2 \right\Vert (\gamma), $
		\item $\left\Vert \zeta_1 \uplus \zeta_2\right\Vert (\gamma) = \underset{\gamma=\gamma_1 \cup \gamma_2}{\bigvee}  \left(  \left\Vert \zeta_1 \right\Vert (\gamma_1) \wedge \left\Vert \zeta_2 \right\Vert (\gamma_2) \right)$.
	\end{enumerate}
	
\end{definition}

\noindent It is easy to prove that $\left\Vert true \right\Vert (\gamma) = 1$ and $\left\Vert false \right\Vert (\gamma) = 0$ for every $\gamma \in fC(P,K). $

\begin{definition}\label{fpcl_equiv}
	Two \emph{fPCL} formulas $\zeta_1, \zeta_2$ over $P$ and a concrete De Morgan algebra $K_{con}$ are called $K_{con}$-equivalent, and we write $\zeta_1\equiv_{K_{con}} \zeta_2$, whenever $\left\Vert \zeta_1 \right\Vert (\gamma)=\left\Vert \zeta_2\right\Vert(\gamma)$ for every $\gamma \in fC(P,K_{con}).$ 
	
	Two \emph{fPCL} formulas $\zeta_1, \zeta_2$ over $P$ and an arbitrary De Morgan algebra $K$ are called simply equivalent, and we write $\zeta_1\equiv \zeta_2 $, whenever $\left\Vert \zeta_1 \right\Vert (\gamma)=\left\Vert \zeta_2\right\Vert(\gamma)$ for every $\gamma \in fC(P,K).$ 
\end{definition}

In the following, we examine the relation between the fPIL and fPCL operators on fPIL formulas. Firstly, we show that the application of negation operators $!$ and $\neg$ on a fPIL formula derive in general non equivalent fPCL formulas. Indeed, let $p\in P$ and $\gamma =\{a_1,a_2\} \in fC(P,K)$. Then we have
$$
\Vert ! p \Vert (\gamma) = \bigwedge_{a\in \gamma} \Vert ! p \Vert(a) = \overline{\Vert  p \Vert (a_1)} \wedge  \overline{\Vert  p \Vert (a_2)}  = \overline{a_1(p)} \wedge \overline{a_2(p)}
$$
and
$$
\Vert\neg p \Vert (\gamma) = \overline{\Vert p \Vert(\gamma)}  = \overline{\bigwedge_{a\in \gamma} \Vert  p \Vert(a)}  = \bigvee_{a\in \gamma} \overline{ \Vert  p \Vert(a)} = \overline{\Vert  p \Vert (a_1)} \vee  \overline{\Vert  p \Vert (a_2)}  = \overline{a_1(p)} \vee \overline{a_2(p)}$$
which implies that $ ! p \not \equiv \neg p$.

Similarly, we show that in general $\varphi \sovee \varphi^\prime \not \equiv \varphi \oplus \varphi^\prime$ where $\varphi, \varphi^\prime $ are fPIL formulas. For this, let $P$ be a set of ports, $\varphi=p\in P$ and $\varphi^\prime =p^\prime \in P$, where $p\not = p^\prime$. If $\gamma = \{ \alpha_1, \alpha_2 \}\in fC(P,K)$, then we get $\left\Vert p \sovee p' \right\Vert (\gamma) \not = \left\Vert p \oplus p' \right\Vert (\gamma) $ and so $p \sovee p' \not \equiv p \oplus p'. $ 

However, as we show in the next proposition, the application of the operators  $\sowedge$ and $\otimes$ on fPIL formulas produce equivalent fPCL formulas.

\begin{proposition}\label{otimes_i}
	Let $\varphi_1 , \varphi_2$ be \emph{fPIL} formulas over $P$ and $K$. Then
	\[  \varphi_1 \sowedge \varphi_2 \equiv \varphi_1 \otimes \varphi_2. \]
\end{proposition}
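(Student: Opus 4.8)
The plan is to fix an arbitrary $\gamma \in fC(P,K)$ and to show directly that $\left\Vert \varphi_1 \sowedge \varphi_2 \right\Vert(\gamma) = \left\Vert \varphi_1 \otimes \varphi_2 \right\Vert(\gamma)$; since $\gamma$ is arbitrary, this yields the claimed equivalence by Definition \ref{fpcl_equiv}. The whole argument is a mechanical unfolding of the semantic clauses together with the De Morgan and involution identities of $K$, so I expect no genuine obstacle — the only step needing a word of justification is a reordering of meets at the very end.

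First I would unfold the right-hand side. From the definition $\varphi_1 \otimes \varphi_2 := \neg(\neg \varphi_1 \oplus \neg \varphi_2)$ and the fPCL semantic clauses for $\neg$ and $\oplus$, one gets
$$\left\Vert \varphi_1 \otimes \varphi_2 \right\Vert(\gamma) = \overline{\,\overline{\left\Vert \varphi_1 \right\Vert(\gamma)} \vee \overline{\left\Vert \varphi_2 \right\Vert(\gamma)}\,}.$$
Applying the De Morgan law $\overline{k \vee k'} = \overline{k} \wedge \overline{k'}$ and the involution $\overline{\overline{k}} = k$ of $K$ collapses this to $\left\Vert \varphi_1 \right\Vert(\gamma) \wedge \left\Vert \varphi_2 \right\Vert(\gamma)$. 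Since $\varphi_1$ and $\varphi_2$ are fPIL formulas, the first semantic clause for fPCL gives $\left\Vert \varphi_i \right\Vert(\gamma) = \bigwedge_{\alpha \in \gamma} \left\Vert \varphi_i \right\Vert(\alpha)$, so the right-hand side becomes $\left(\bigwedge_{\alpha \in \gamma} \left\Vert \varphi_1 \right\Vert(\alpha)\right) \wedge \left(\bigwedge_{\alpha \in \gamma} \left\Vert \varphi_2 \right\Vert(\alpha)\right)$.

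For the left-hand side I would use that $\varphi_1 \sowedge \varphi_2$ is itself a single fPIL formula, so its fPCL interpretation over $\gamma$ is, again by the first clause, $\bigwedge_{\alpha \in \gamma} \left\Vert \varphi_1 \sowedge \varphi_2 \right\Vert(\alpha)$; and by the remark following the fPIL semantics we have $\left\Vert \varphi_1 \sowedge \varphi_2 \right\Vert(\alpha) = \left\Vert \varphi_1 \right\Vert(\alpha) \wedge \left\Vert \varphi_2 \right\Vert(\alpha)$ for each $\alpha$. Hence the left-hand side equals $\bigwedge_{\alpha \in \gamma}\left(\left\Vert \varphi_1 \right\Vert(\alpha) \wedge \left\Vert \varphi_2 \right\Vert(\alpha)\right)$.

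It then remains to match the two expressions, i.e.\ to observe that $\left(\bigwedge_{\alpha \in \gamma} a_\alpha\right) \wedge \left(\bigwedge_{\alpha \in \gamma} b_\alpha\right) = \bigwedge_{\alpha \in \gamma}\left(a_\alpha \wedge b_\alpha\right)$ with $a_\alpha = \left\Vert \varphi_1 \right\Vert(\alpha)$ and $b_\alpha = \left\Vert \varphi_2 \right\Vert(\alpha)$. This is the single step that is not purely definitional, and it holds because $\wedge$ is the meet of the lattice $K$ and is therefore associative and commutative, which lets us interleave and regroup the two meets taken over the common index set $\gamma$. Combining this identity with the two computations above gives $\left\Vert \varphi_1 \sowedge \varphi_2 \right\Vert(\gamma) = \left\Vert \varphi_1 \otimes \varphi_2 \right\Vert(\gamma)$, completing the proof.
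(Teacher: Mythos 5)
Your proposal is correct and follows essentially the same route as the paper's own proof: both unfold the semantics of $\sowedge$ pointwise over $\gamma$, regroup the meets by commutativity and associativity of $\wedge$, and identify the result with $\left\Vert \varphi_1 \otimes \varphi_2 \right\Vert(\gamma)$ via the De Morgan law and involution. The only cosmetic difference is that you compute the two sides separately and meet in the middle, whereas the paper writes a single chain of equalities starting from the left-hand side.
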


\begin{proof}
	For every  $\gamma \in fC(P,K)$ we compute
	\begin{align*}
	\left\Vert \varphi_1 \sowedge \varphi_2 \right\Vert (\gamma) & = \bigwedge_{\alpha\in \gamma} \left\Vert \varphi_1 \sowedge \varphi_2 \right\Vert (\alpha)  \\ & = \bigwedge_{\alpha\in \gamma} \left(\left\Vert \varphi_1\right\Vert(\alpha) \wedge\left\Vert \varphi_2 \right\Vert (\alpha) \right)  \\ & = \left(\bigwedge_{\alpha\in \gamma} \left\Vert \varphi_1\right\Vert(\alpha)\right) \wedge \left( \bigwedge_{\alpha\in \gamma}  \left\Vert \varphi_2 \right\Vert (\alpha) \right) \\ & = \left\Vert \varphi_1\right\Vert(\gamma) \wedge \left\Vert \varphi_2\right\Vert(\gamma) \\ & = \overline{  \overline{\left\Vert \varphi_1 \right\Vert(\gamma)} \vee \overline{\left\Vert \varphi_2 \right\Vert(\gamma)} } \\ & = \left\Vert \neg \left( \neg \varphi_1 \oplus \neg \varphi_2  \right) \right\Vert (\gamma) \\ & = \left\Vert  \varphi_1 \otimes \varphi_2  \right\Vert (\gamma),
	\end{align*}
	where the third equality holds by the commutativity and associativity of $\wedge$.
\end{proof}

In the sequel, we prove several properties of our fPCL formulas. 

\begin{proposition}
	The \emph{fPCL} operators $\oplus, \otimes $ and $\uplus $ are associative and commutative.
\end{proposition}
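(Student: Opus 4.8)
The plan is to verify each property on an arbitrary configuration $\gamma \in fC(P,K)$ by unfolding the semantics and reducing to the corresponding identities in the lattice $K$. For $\oplus$ this is immediate: since $\Vert \zeta_1 \oplus \zeta_2 \Vert(\gamma) = \Vert \zeta_1 \Vert(\gamma) \vee \Vert \zeta_2 \Vert(\gamma)$, both commutativity and associativity of $\oplus$ follow at once from the commutativity and associativity of $\vee$ in $K$. For $\otimes$ I would first record the derived semantic clause $\Vert \zeta_1 \otimes \zeta_2 \Vert(\gamma) = \Vert \zeta_1 \Vert(\gamma) \wedge \Vert \zeta_2 \Vert(\gamma)$, obtained from $\zeta_1 \otimes \zeta_2 := \neg(\neg \zeta_1 \oplus \neg \zeta_2)$ by the semantics of $\neg$ and $\oplus$ together with the De Morgan laws and involution (this is exactly the computation already used in Proposition \ref{otimes_i}). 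With this clause in hand, commutativity and associativity of $\otimes$ are inherited verbatim from those of $\wedge$.

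The substantive case is $\uplus$. Commutativity is still easy: in $\Vert \zeta_1 \uplus \zeta_2 \Vert(\gamma) = \bigvee_{\gamma = \gamma_1 \cup \gamma_2}(\Vert \zeta_1 \Vert(\gamma_1) \wedge \Vert \zeta_2 \Vert(\gamma_2))$ the family of decompositions $\gamma = \gamma_1 \cup \gamma_2$ is symmetric under swapping $\gamma_1$ and $\gamma_2$, and $\wedge$ is commutative, so the supremum is unchanged. For associativity I would expand $\Vert (\zeta_1 \uplus \zeta_2) \uplus \zeta_3 \Vert(\gamma)$ as $\bigvee_{\gamma = \delta \cup \gamma_3}(\Vert \zeta_1 \uplus \zeta_2 \Vert(\delta) \wedge \Vert \zeta_3 \Vert(\gamma_3))$ and substitute the definition of the inner $\uplus$. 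The crucial step is to use the distributivity of $\wedge$ over $\vee$ to pull the factor $\Vert \zeta_3 \Vert(\gamma_3)$ inside the inner join, collapsing the nested suprema into a single join of terms $\Vert \zeta_1 \Vert(\gamma_1) \wedge \Vert \zeta_2 \Vert(\gamma_2) \wedge \Vert \zeta_3 \Vert(\gamma_3)$ indexed by all pairs of decompositions $\delta = \gamma_1 \cup \gamma_2$ and $\gamma = \delta \cup \gamma_3$. Such pairs correspond exactly to the ternary decompositions $\gamma = \gamma_1 \cup \gamma_2 \cup \gamma_3$. The symmetric computation for $\Vert \zeta_1 \uplus (\zeta_2 \uplus \zeta_3) \Vert(\gamma)$ yields the same join over all ternary decompositions, which establishes the identity.

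The main obstacle is justifying the distributivity step, since the definition of $K$ only guarantees distributivity of $\wedge$ over a binary $\vee$, whereas $\uplus$ uses joins over whole families of decompositions. I would resolve this by using that each $\gamma \in fC(P,K)$ is finite, so every $\bigvee$ ranges over a finite index set; the join then reduces to an iterated binary $\vee$ and the distributive law of $K$ applies repeatedly. The only bookkeeping that needs care is the exact correspondence between nested binary decompositions and ternary ones: I would make this explicit by checking that on both sides the triple $(\gamma_1, \gamma_2, \gamma_3)$ ranges over precisely the triples with $\gamma_1 \cup \gamma_2 \cup \gamma_3 = \gamma$, so that no decomposition is gained or lost when the grouping is reassociated.
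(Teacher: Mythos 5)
Your proposal is correct and follows essentially the same route as the paper: reduce $\oplus$ and $\otimes$ to the commutativity and associativity of $\vee$ and $\wedge$, and prove associativity of $\uplus$ by expanding the nested joins, distributing $\wedge$ over $\vee$, and matching the nested binary decompositions of $\gamma$ with the ternary ones. Your extra remarks on the finiteness of the index sets and the exact correspondence of decompositions only make explicit what the paper leaves implicit.
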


\begin{proof}
	We prove only the associativity of the $\uplus$ operator. The rest are analogously proved. Let $\zeta_1, \zeta_2, \zeta_3 \in fPCL(K,P)$ and $\gamma \in fC(P,K)$. Then 
	\begin{align*}
	\left\Vert \zeta_1 \uplus \left(  \zeta_2 \uplus \zeta_3 \right) \right\Vert(\gamma) & = \bigvee_{\gamma = \gamma_1 \cup \gamma^\prime} \left(  \left\Vert \zeta_1 \right\Vert (\gamma_1) \wedge \left\Vert \zeta_2\uplus \zeta_3 \right\Vert(\gamma^\prime) \right) \\ & = \bigvee_{\gamma = \gamma_1 \cup \gamma^\prime} \left(  \left\Vert \zeta_1 \right\Vert (\gamma_1) \wedge \left( \bigvee_{\gamma^\prime =\gamma_2\cup \gamma_3} \left( \left\Vert \zeta_2\right\Vert(\gamma_2) \wedge \left\Vert\zeta_3\right\Vert(\gamma_3) \right)  \right) \right) \\ & =  \bigvee_{\gamma = \gamma_1 \cup \gamma^\prime} \bigvee_{\gamma^\prime =\gamma_2\cup \gamma_3} \left( \left\Vert \zeta_1\right\Vert(\gamma_1) \wedge \left( \left\Vert \zeta_2\right\Vert(\gamma_2)  \wedge \left\Vert \zeta_3\right\Vert(\gamma_3) \right)  \right)  \\ & =  \bigvee_{\gamma = \gamma^\prime \cup \gamma_3} \bigvee_{\gamma^\prime = \gamma_1\cup \gamma_2}  \left(\left( \left\Vert \zeta_1\right\Vert(\gamma_1) \wedge  \left\Vert \zeta_2\right\Vert(\gamma_2) \right) \wedge \left\Vert \zeta_3\right\Vert(\gamma_3)  \right) \\ & =  \bigvee_{\gamma = \gamma^\prime \cup \gamma_3}  \left(\left(\bigvee_{\gamma^\prime = \gamma_1\cup \gamma_2} \left( \left\Vert \zeta_1\right\Vert(\gamma_1) \wedge  \left\Vert \zeta_2\right\Vert(\gamma_2) \right)\right) \wedge \left\Vert \zeta_3\right\Vert(\gamma_3)  \right) \\ & = \bigvee_{\gamma = \gamma^\prime \cup \gamma_3} \left(  \left\Vert \zeta_1 \uplus \zeta_2\right\Vert(\gamma^\prime) \wedge \left\Vert \zeta_3\right\Vert(\gamma_3)  \right) \\ & = \left\Vert \left( \zeta_1\uplus \zeta_2 \right) \uplus \zeta_3 \right\Vert(\gamma)
	\end{align*}
	
	\noindent where the third and fifth equalities hold since $\wedge$ distributes over $\vee$ and the fourth one by the associativity of the $\wedge $ operator.
\end{proof}

\begin{proposition}
	Let $\zeta\in fPCL(K,P)$. Then 
	\[ \left\Vert \sim \zeta \right\Vert(\gamma) = \bigvee_{\gamma^\prime \subseteq \gamma} \left\Vert \zeta \right\Vert(\gamma^\prime) \]
	
	\noindent for every $\gamma\in fC(P,K). $
\end{proposition}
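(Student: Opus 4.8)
The plan is to unfold the definition $\sim\zeta := \zeta \uplus true$ and then reindex the resulting join. First I would apply the semantics of the coalescing operator to obtain
\[
\left\Vert \sim \zeta \right\Vert(\gamma) = \left\Vert \zeta \uplus true \right\Vert(\gamma) = \bigvee_{\gamma = \gamma_1 \cup \gamma_2} \left( \left\Vert \zeta \right\Vert(\gamma_1) \wedge \left\Vert true \right\Vert(\gamma_2) \right),
\]
where the decompositions range over pairs $\gamma_1, \gamma_2 \in fC(P,K)$ with $\gamma_1 \cup \gamma_2 = \gamma$.

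Next, using the already established identity $\left\Vert true \right\Vert(\gamma_2) = 1$ together with the fact that $1$ is the top of the lattice (so $k \wedge 1 = k$ for all $k \in K$), each term of the join collapses to $\left\Vert \zeta \right\Vert(\gamma_1)$, giving $\left\Vert \sim\zeta \right\Vert(\gamma) = \bigvee_{\gamma=\gamma_1\cup\gamma_2} \left\Vert \zeta \right\Vert(\gamma_1)$.

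The heart of the argument, and the only step that needs care, is the reindexing from decompositions to subsets. I would show that $\{\gamma_1 : \exists\, \gamma_2 \in fC(P,K),\ \gamma_1\cup\gamma_2 = \gamma\}$ is exactly the family of nonempty subsets of $\gamma$. One inclusion is immediate, since $\gamma_1 \subseteq \gamma_1 \cup \gamma_2 = \gamma$. For the converse, given any nonempty $\gamma'\subseteq \gamma$, choosing $\gamma_2 = \gamma$ (which lies in $fC(P,K)$ because $\gamma$ is nonempty) yields $\gamma'\cup\gamma = \gamma$, so $\gamma'$ occurs as a first component. Since the term of the join now depends only on $\gamma_1$, and a given subset $\gamma'$ may arise from several decompositions, I would invoke idempotency of $\vee$ to conclude that the repeated occurrences do not change the value of the join.

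Putting these together yields $\left\Vert \sim\zeta \right\Vert(\gamma) = \bigvee_{\gamma'\subseteq\gamma}\left\Vert \zeta \right\Vert(\gamma')$, the join ranging over the nonempty subsets $\gamma'$ (the only ones on which $\left\Vert \zeta \right\Vert$ is defined). The main obstacle is purely bookkeeping: checking that the nonemptiness constraint built into $fC(P,K)$ excludes no subset and that the two index sets genuinely coincide. No algebraic subtlety beyond the top law $k \wedge 1 = k$ and the idempotency of $\vee$ is required.
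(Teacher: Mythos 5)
Your proposal is correct and follows essentially the same route as the paper: unfold $\sim\zeta=\zeta\uplus true$, use $\left\Vert true\right\Vert(\gamma_2)=1$ and $k\wedge 1=k$, and reindex the decompositions $\gamma=\gamma_1\cup\gamma_2$ as the nonempty subsets of $\gamma$. The paper compresses this into two lines, whereas you make explicit the reindexing and the appeal to idempotency of $\vee$ for repeated first components; this is only added bookkeeping, not a different argument.
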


\begin{proof}
	For every $\gamma\in fC(P,K)$ we have
	\begin{align*}
	\left\Vert \zeta \right\Vert (\gamma) & = \bigvee_{\gamma=\gamma^\prime \cup \gamma^{\prime \prime}} \left(  \left\Vert \zeta\right\Vert(\gamma^\prime) \wedge \left\Vert true\right\Vert(\gamma^{\prime \prime}) \right)   \\ & = \bigvee_{\gamma^\prime \subseteq \gamma } \left\Vert \zeta \right\Vert (\gamma^\prime).
	\end{align*}
\end{proof}

\begin{proposition}\label{uplus_over_oplus}
	Let $\zeta, \zeta_1, \zeta_2 \in fPCL(K,P)$. Then
	\[ \zeta \uplus (\zeta_1 \oplus\zeta_2) \equiv (\zeta \uplus \zeta_1) \oplus (\zeta \uplus \zeta_2).   \]
\end{proposition}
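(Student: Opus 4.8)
The plan is to fix an arbitrary $\gamma \in fC(P,K)$ and evaluate the semantics of both sides, reducing the whole equivalence to distributivity of $\wedge$ over $\vee$ in the underlying De Morgan algebra together with the associativity and commutativity of $\vee$. First I would unfold the left-hand side using the definitions of $\uplus$ and $\oplus$:
\[
\left\Vert \zeta \uplus (\zeta_1 \oplus \zeta_2) \right\Vert(\gamma) = \bigvee_{\gamma = \gamma^\prime \cup \gamma^{\prime\prime}} \left( \left\Vert \zeta \right\Vert(\gamma^\prime) \wedge \left( \left\Vert \zeta_1 \right\Vert(\gamma^{\prime\prime}) \vee \left\Vert \zeta_2 \right\Vert(\gamma^{\prime\prime}) \right) \right).
\]

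Next I would apply distributivity of $\wedge$ over $\vee$ to each summand, rewriting the term in parentheses as $\left( \left\Vert \zeta \right\Vert(\gamma^\prime) \wedge \left\Vert \zeta_1 \right\Vert(\gamma^{\prime\prime}) \right) \vee \left( \left\Vert \zeta \right\Vert(\gamma^\prime) \wedge \left\Vert \zeta_2 \right\Vert(\gamma^{\prime\prime}) \right)$. The crucial move is then to split the single large supremum over all decompositions $\gamma = \gamma^\prime \cup \gamma^{\prime\prime}$ into the join of two suprema, one for each disjunct; this is legitimate because $\vee$ is associative and commutative, so a join of joins over the same index set may be regrouped freely. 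The two resulting suprema are exactly $\left\Vert \zeta \uplus \zeta_1 \right\Vert(\gamma)$ and $\left\Vert \zeta \uplus \zeta_2 \right\Vert(\gamma)$, and their join is $\left\Vert (\zeta \uplus \zeta_1) \oplus (\zeta \uplus \zeta_2) \right\Vert(\gamma)$ by the definition of $\oplus$, which closes the argument.

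The main obstacle I anticipate is the bookkeeping in the splitting step: both inner suprema range over the same index set of decompositions $\gamma = \gamma^\prime \cup \gamma^{\prime\prime}$, so after regrouping I must check that each join $\bigvee_{\gamma = \gamma^\prime \cup \gamma^{\prime\prime}} \left( \left\Vert \zeta \right\Vert(\gamma^\prime) \wedge \left\Vert \zeta_i \right\Vert(\gamma^{\prime\prime}) \right)$ matches the definition of $\left\Vert \zeta \uplus \zeta_i \right\Vert(\gamma)$ verbatim. Since the decompositions of $\gamma$ form precisely the index set used throughout the $\uplus$-semantics, the regrouping introduces no new existence requirement beyond what is already implicit in the definition of $\uplus$. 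I expect no appeal to the complement map or to any Kleene- or Boolean-specific property, so the equivalence holds over an arbitrary De Morgan algebra $K$, which is consistent with the use of $\equiv$ rather than $\equiv_{K_{con}}$ in the statement.
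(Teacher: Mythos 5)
Your proposal is correct and follows the paper's own proof essentially verbatim: fix $\gamma$, unfold the semantics of $\uplus$ and $\oplus$, distribute $\wedge$ over $\vee$ inside each summand, and split the single supremum over decompositions $\gamma=\gamma^\prime\cup\gamma^{\prime\prime}$ into two suprema via associativity (and commutativity) of $\vee$. Your added remark that the regrouping uses the same index set of decompositions and requires no algebra-specific properties is accurate and matches the paper's use of the unrestricted $\equiv$.
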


\begin{proof}
	For every $\gamma\in fC(P,K)$ we have
	\begin{align*}
	\left\Vert \zeta \uplus (\zeta_1 \oplus \zeta_2) \right\Vert(\gamma) & = \bigvee_{\gamma=\gamma_1\cup \gamma_2} \left( \left\Vert \zeta \right\Vert(\gamma_1) \wedge \left\Vert \zeta_1\oplus \zeta_2\right\Vert(\gamma_2) \right) \\ & =  \bigvee_{\gamma=\gamma_1\cup \gamma_2} \left( \left\Vert \zeta \right\Vert(\gamma_1) \wedge \left(\left\Vert \zeta_1\right\Vert(\gamma_2)\vee \left\Vert \zeta_2\right\Vert(\gamma_2) \right)\right) \\ & =  \bigvee_{\gamma=\gamma_1\cup \gamma_2}\left( \left( \left\Vert \zeta \right\Vert(\gamma_1) \wedge\left\Vert \zeta_1\right\Vert(\gamma_2)\right) \vee \left( \left\Vert \zeta \right\Vert(\gamma_1) \wedge\left\Vert \zeta_2\right\Vert(\gamma_2)\right) \right) \\ & = \bigvee_{\gamma=\gamma_1\cup \gamma_2} \left( \left\Vert \zeta \right\Vert(\gamma_1) \wedge\left\Vert \zeta_1\right\Vert(\gamma_2)\right) \vee \bigvee_{\gamma=\gamma_1\cup \gamma_2} \left( \left\Vert \zeta \right\Vert(\gamma_1) \wedge\left\Vert \zeta_2\right\Vert(\gamma_2)\right)  \\ & = \left\Vert \left(\zeta \uplus \zeta_1\right) \oplus \left(\zeta \uplus \zeta_2 \right)\right\Vert(\gamma)
	\end{align*}
	where the third equality holds since $\wedge $ distributes over $\vee$ and the fourth one by the associativity of $\vee$.
\end{proof}

\begin{proposition}\label{otimes_over_oplus}
	Let $\zeta, \zeta_1, \zeta_2 \in fPCL(K,P)$. Then
	\[ \zeta \otimes (\zeta_1 \oplus\zeta_2) \equiv (\zeta \otimes \zeta_1) \oplus (\zeta \otimes \zeta_2).   \]
\end{proposition}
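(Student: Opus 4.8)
The plan is to verify the identity pointwise: evaluate both sides at an arbitrary $\gamma\in fC(P,K)$ and reduce everything to an identity inside the De Morgan algebra $K$. The only structural ingredient beyond the definitions is the distributivity of $\wedge$ over $\vee$, which is available because $K$ is a distributive lattice.

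First I would record the semantic behaviour of $\otimes$, namely that $\left\Vert \zeta\otimes\zeta'\right\Vert(\gamma) = \left\Vert\zeta\right\Vert(\gamma)\wedge\left\Vert\zeta'\right\Vert(\gamma)$ for every $\gamma$. This follows exactly as in the final lines of the computation in Proposition \ref{otimes_i}: unfolding the abbreviation $\zeta\otimes\zeta' := \neg(\neg\zeta\oplus\neg\zeta')$ and applying the semantics of $\neg$ and $\oplus$ yields $\overline{\overline{\left\Vert\zeta\right\Vert(\gamma)}\vee\overline{\left\Vert\zeta'\right\Vert(\gamma)}}$, which equals $\left\Vert\zeta\right\Vert(\gamma)\wedge\left\Vert\zeta'\right\Vert(\gamma)$ by the De Morgan law $\overline{\overline{a}\vee\overline{b}}=a\wedge b$ together with involution.

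With this in hand, fix $\gamma$ and compute $\left\Vert\zeta\otimes(\zeta_1\oplus\zeta_2)\right\Vert(\gamma) = \left\Vert\zeta\right\Vert(\gamma)\wedge\left(\left\Vert\zeta_1\right\Vert(\gamma)\vee\left\Vert\zeta_2\right\Vert(\gamma)\right)$, using first the characterization of $\otimes$ and then the semantics of $\oplus$. Applying distributivity of $\wedge$ over $\vee$ rewrites this as $\left(\left\Vert\zeta\right\Vert(\gamma)\wedge\left\Vert\zeta_1\right\Vert(\gamma)\right)\vee\left(\left\Vert\zeta\right\Vert(\gamma)\wedge\left\Vert\zeta_2\right\Vert(\gamma)\right)$. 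Finally I would fold each conjunct back into $\otimes$ form and the outer disjunction into $\oplus$, obtaining $\left\Vert(\zeta\otimes\zeta_1)\oplus(\zeta\otimes\zeta_2)\right\Vert(\gamma)$. Since $\gamma$ was arbitrary, the two formulas are equivalent.

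There is no real obstacle here: the argument is a direct pointwise computation that parallels Propositions \ref{otimes_i} and \ref{uplus_over_oplus}. The single substantive step is the appeal to distributivity, which holds because every De Morgan algebra is by definition a distributive lattice; no Kleene- or Boolean-specific property is invoked, so the equivalence holds over an arbitrary $K$ (hence the plain $\equiv$ rather than $\equiv_{K_{con}}$).
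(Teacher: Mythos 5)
Your proposal is correct and follows essentially the same route as the paper, whose entire proof is the one-line appeal to distributivity of $\wedge$ over $\vee$; you merely make explicit the preliminary observation that $\left\Vert \zeta\otimes\zeta'\right\Vert(\gamma)=\left\Vert\zeta\right\Vert(\gamma)\wedge\left\Vert\zeta'\right\Vert(\gamma)$, which the paper leaves implicit (it is the same unfolding already carried out in the proof of Proposition \ref{otimes_i}). Nothing further is needed.
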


\begin{proof}
	By the distributivity of $\wedge$ over $\vee$ we get the subsequent proposition.
	
\end{proof}

\begin{proposition}\label{absorpt_fpcl}
	Let $ \zeta \in fPCL(K,P)$. Then
	
	\begin{tabular}{l l l l}
		$(1)$ & $\neg \neg \zeta \equiv \zeta .$ & \hspace*{1.5cm} $(5)$ & $\zeta \oplus false \equiv \zeta.$ \\[0.1cm]  $(2)$ & $\zeta\oplus \zeta \equiv \zeta.$ & \hspace*{1.5cm} $(6)$ & $\zeta \otimes true \equiv \zeta.$ \\[0.1cm] $(3)$ & $\zeta\otimes \zeta \equiv \zeta.$ & \hspace*{1.5cm} $(7)$ & $\zeta \otimes false \equiv false \equiv \zeta \uplus false.$ \\[0.1cm] $(4)$ & $\zeta \oplus true \equiv true.$ & & 
	\end{tabular}

\end{proposition}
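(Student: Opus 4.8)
The plan is to verify all seven identities pointwise: for an arbitrary $\gamma \in fC(P,K)$ I would unfold the semantics of each side and reduce the claimed equivalence to an identity that already holds in the De Morgan algebra $K$. Before treating the individual items, I would record the auxiliary computation that
\[ \left\Vert \zeta \otimes \zeta^\prime \right\Vert(\gamma) = \left\Vert \zeta \right\Vert(\gamma) \wedge \left\Vert \zeta^\prime \right\Vert(\gamma) \]
for all fPCL formulas $\zeta, \zeta^\prime$. This follows from the definition $\zeta \otimes \zeta^\prime := \neg(\neg\zeta \oplus \neg\zeta^\prime)$ by applying the semantics of $\neg$ and $\oplus$ together with the De Morgan law $\overline{a \vee b} = \overline{a} \wedge \overline{b}$ and involution $\overline{\overline{a}} = a$, exactly as in the chain of equalities proving Proposition \ref{otimes_i}. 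With this in hand, $\otimes$ behaves pointwise like $\wedge$ and $\oplus$ like $\vee$, and every item becomes a one-line lattice identity.

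For items $(1)$--$(3)$ I would invoke the lattice-theoretic laws directly. Item $(1)$ is immediate from involution, since $\left\Vert \neg\neg\zeta \right\Vert(\gamma) = \overline{\overline{\left\Vert \zeta\right\Vert(\gamma)}} = \left\Vert \zeta\right\Vert(\gamma)$. Item $(2)$ is idempotency of $\vee$, as $\left\Vert \zeta \oplus \zeta\right\Vert(\gamma) = \left\Vert \zeta\right\Vert(\gamma) \vee \left\Vert \zeta\right\Vert(\gamma) = \left\Vert \zeta\right\Vert(\gamma)$. Item $(3)$ is then idempotency of $\wedge$ via the auxiliary fact, $\left\Vert \zeta \otimes \zeta\right\Vert(\gamma) = \left\Vert \zeta\right\Vert(\gamma) \wedge \left\Vert \zeta\right\Vert(\gamma) = \left\Vert \zeta\right\Vert(\gamma)$.

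For items $(4)$--$(7)$ I would use that $0$ and $1$ are the bottom and top of the bounded lattice, together with $\left\Vert true\right\Vert(\gamma) = 1$ and $\left\Vert false\right\Vert(\gamma) = 0$. Items $(4)$ and $(5)$ reduce to $k \vee 1 = 1$ and $k \vee 0 = k$; item $(6)$ reduces to $k \wedge 1 = k$; and the first equation of $(7)$ reduces to $k \wedge 0 = 0$, the latter three through the auxiliary fact for the $\otimes$ cases. The only step needing slightly more care is the second equation of $(7)$, $\zeta \uplus false \equiv false$, which I would prove by unfolding the coalescing semantics:
\[ \left\Vert \zeta \uplus false\right\Vert(\gamma) = \bigvee_{\gamma = \gamma_1 \cup \gamma_2}\left(\left\Vert \zeta\right\Vert(\gamma_1) \wedge \left\Vert false\right\Vert(\gamma_2)\right), \]
and since $\left\Vert false\right\Vert(\gamma_2) = 0$ for every $\gamma_2$, each summand equals $\left\Vert \zeta\right\Vert(\gamma_1) \wedge 0 = 0$, so the whole supremum collapses to $0 = \left\Vert false\right\Vert(\gamma)$. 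I expect this last item to be the only genuine obstacle, and even there the sole subtlety is bookkeeping about which decompositions $\gamma = \gamma_1 \cup \gamma_2$ are admitted; since $0$ is absorbing for $\wedge$ and neutral for $\vee$, the argument is insensitive to that convention and the collapse to $0$ holds regardless.
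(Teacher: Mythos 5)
Your proposal is correct and follows the same route the paper intends: the paper's proof is the one-line remark that the identities follow from the properties of De Morgan algebras, and your pointwise unfolding (including the observation that $\left\Vert \zeta \otimes \zeta^\prime \right\Vert(\gamma) = \left\Vert \zeta \right\Vert(\gamma) \wedge \left\Vert \zeta^\prime \right\Vert(\gamma)$ and the collapse of the supremum in $\zeta \uplus false$ because $0$ is absorbing for $\wedge$) simply supplies the details the paper omits. No gaps.
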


\begin{proof} By the properties of De Morgan algebras we can prove the above properties. 
\end{proof}

\begin{proposition}\label{neg}
	Let $ \zeta_1, \zeta_2 \in fPCL(K,P)$. Then
	
	\begin{enumerate}[$(1)$]
		\item $\neg \left( \zeta_1 \oplus \zeta_2 \right) \equiv \left( \neg \zeta_1 \right) \otimes \left( \neg \zeta_2 \right).$
		\item $\neg \left( \zeta_1 \otimes \zeta_2 \right) \equiv \left( \neg \zeta_1 \right) \oplus \left( \neg \zeta_2 \right).$
	\end{enumerate}	
	
\end{proposition}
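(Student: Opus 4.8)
The plan is to reduce both identities to the behaviour of the semantics at a fixed $\gamma \in fC(P,K)$, so that everything comes down to the De Morgan laws $\overline{k \vee k'} = \overline{k} \wedge \overline{k'}$, $\overline{k \wedge k'} = \overline{k} \vee \overline{k'}$ and the involution $\overline{\overline{k}} = k$ in the algebra $K$. I would work entirely at the level of values in $K$, rather than invoking any substitution principle for $\equiv$, since the paper has not separately established that $\equiv$ is a congruence with respect to $\oplus$ and $\neg$.

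First I would record the semantic formula for the derived operator $\otimes$. Unfolding its definition $\zeta \otimes \zeta' := \neg(\neg \zeta \oplus \neg \zeta')$ and applying, in order, the clause for $\neg$, the clause for $\oplus$, the De Morgan law for $\overline{k \vee k'}$, and involution, yields $\left\Vert \zeta \otimes \zeta' \right\Vert(\gamma) = \left\Vert \zeta \right\Vert(\gamma) \wedge \left\Vert \zeta' \right\Vert(\gamma)$ for every $\gamma$. This is precisely the chain of equalities already carried out inside the proof of Proposition \ref{otimes_i}, so it introduces no new ingredient but packages the needed fact cleanly.

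For part (1), I would fix $\gamma$ and compute $\left\Vert \neg(\zeta_1 \oplus \zeta_2) \right\Vert(\gamma) = \overline{\left\Vert \zeta_1 \right\Vert(\gamma) \vee \left\Vert \zeta_2 \right\Vert(\gamma)}$, then apply the De Morgan law to rewrite this as $\overline{\left\Vert \zeta_1 \right\Vert(\gamma)} \wedge \overline{\left\Vert \zeta_2 \right\Vert(\gamma)} = \left\Vert \neg \zeta_1 \right\Vert(\gamma) \wedge \left\Vert \neg \zeta_2 \right\Vert(\gamma)$. Matching this against the $\otimes$-formula from the previous step, with $\zeta = \neg \zeta_1$ and $\zeta' = \neg \zeta_2$, identifies it with $\left\Vert (\neg \zeta_1) \otimes (\neg \zeta_2) \right\Vert(\gamma)$, which is the claim. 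For part (2) the same template runs with the other De Morgan law: $\left\Vert \neg(\zeta_1 \otimes \zeta_2) \right\Vert(\gamma) = \overline{\left\Vert \zeta_1 \right\Vert(\gamma) \wedge \left\Vert \zeta_2 \right\Vert(\gamma)} = \overline{\left\Vert \zeta_1 \right\Vert(\gamma)} \vee \overline{\left\Vert \zeta_2 \right\Vert(\gamma)} = \left\Vert (\neg \zeta_1) \oplus (\neg \zeta_2) \right\Vert(\gamma)$, where the left-hand side is evaluated using the $\otimes$-formula once more.

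There is essentially no genuine obstacle: both statements are just the lift of the algebraic De Morgan laws of $K$ through the pointwise semantic clauses, and they hold over every De Morgan algebra, with no Kleene or Boolean hypothesis required. If a purely syntactic argument were preferred, part (2) would follow immediately from the definition of $\otimes$ together with $\neg\neg\zeta \equiv \zeta$ (Proposition \ref{absorpt_fpcl}(1)), and part (1) symmetrically; but I would still favour the semantic route precisely because it does not presuppose closure of $\equiv$ under the connectives.
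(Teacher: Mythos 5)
Your proof is correct and follows essentially the same route as the paper, which simply cites the De Morgan law $\overline{k\vee k'}=\overline{k}\wedge\overline{k'}$ applied pointwise to the semantics; you merely spell out the computation at a fixed $\gamma$ that the paper leaves implicit. No further comment is needed.
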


\begin{proof}
	This proposition holds since $\overline{k\vee k^\prime} = \overline{k} \wedge \overline{k^\prime}$ for every $k,k^\prime \in K$.
\end{proof}

In the following proposition, we present the absorbing laws of our fPCL formulas.

\begin{proposition}\label{absorpt_pcl}
	Let $\zeta, \zeta^\prime \in fPCL(K,P)$. Then 
	\begin{enumerate}[$(1)$]
		\item $\zeta \otimes \left( \zeta \oplus \zeta^\prime \right) \equiv \zeta $.
		
		\item $\zeta \oplus \left( \zeta \otimes \zeta^\prime \right) \equiv \zeta $.
		
	\end{enumerate}
\end{proposition}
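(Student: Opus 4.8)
The plan is to reduce both absorption laws to the corresponding lattice absorption laws $k \wedge (k \vee k') = k$ and $k \vee (k \wedge k') = k$, which hold in every bounded distributive lattice and hence in every De Morgan algebra $K$ (they are listed explicitly in the preliminaries). The reduction is pointwise: I fix an arbitrary $\gamma \in fC(P,K)$ and show that the two sides of each equivalence take the same value at $\gamma$; since $\gamma$ is arbitrary, the equivalences follow by Definition \ref{fpcl_equiv}.

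The one ingredient I need beyond the raw semantic clauses is that $\otimes$ computes the meet of the values of its arguments: for any fPCL formulas $\zeta_1, \zeta_2$ and any $\gamma \in fC(P,K)$,
\[
\left\Vert \zeta_1 \otimes \zeta_2 \right\Vert(\gamma) = \left\Vert \zeta_1 \right\Vert(\gamma) \wedge \left\Vert \zeta_2 \right\Vert(\gamma).
\]
This follows at once from the abbreviation $\zeta_1 \otimes \zeta_2 := \neg(\neg \zeta_1 \oplus \neg \zeta_2)$, the semantic clauses for $\neg$ and $\oplus$, the involution $\overline{\overline{k}} = k$, and the De Morgan law $\overline{\overline{a} \vee \overline{b}} = a \wedge b$. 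It is exactly the chain of equalities already carried out inside the proof of Proposition \ref{otimes_i}; there $\zeta_1, \zeta_2$ were fPIL formulas, but that chain never used the particular form of the arguments, so it transfers verbatim to arbitrary fPCL formulas.

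For $(1)$ I then compute, for the fixed $\gamma$,
\[
\left\Vert \zeta \otimes (\zeta \oplus \zeta^\prime) \right\Vert(\gamma) = \left\Vert \zeta \right\Vert(\gamma) \wedge \left( \left\Vert \zeta \right\Vert(\gamma) \vee \left\Vert \zeta^\prime \right\Vert(\gamma) \right),
\]
applying the meet-identity for $\otimes$ followed by the $\oplus$-clause, and then invoke $k \wedge (k \vee k') = k$ with $k = \left\Vert \zeta \right\Vert(\gamma)$ and $k' = \left\Vert \zeta^\prime \right\Vert(\gamma)$ to get $\left\Vert \zeta \right\Vert(\gamma)$. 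For $(2)$ the computation is dual:
\[
\left\Vert \zeta \oplus (\zeta \otimes \zeta^\prime) \right\Vert(\gamma) = \left\Vert \zeta \right\Vert(\gamma) \vee \left( \left\Vert \zeta \right\Vert(\gamma) \wedge \left\Vert \zeta^\prime \right\Vert(\gamma) \right),
\]
now via the $\oplus$-clause followed by the $\otimes$-identity, and then $k \vee (k \wedge k') = k$ closes the argument.

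There is no genuine obstacle: once the meet-identity for $\otimes$ is recorded, each law is a single application of a lattice absorption axiom at each $\gamma$. The only point deserving a sentence of care is precisely that identity, i.e.\ confirming that the De Morgan computation used for fPIL arguments in Proposition \ref{otimes_i} carries over unchanged to arbitrary fPCL arguments. Everything else is routine and, notably, does not even require distributivity, only the absorption axioms valid in any bounded distributive lattice.
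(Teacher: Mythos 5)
Your proposal is correct and follows essentially the same route as the paper, which proves the statement by reducing pointwise to the absorption laws of the underlying De Morgan algebra (the paper simply says the proof is analogous to that of Proposition \ref{absorpt_pil}). Your explicit verification that $\left\Vert \zeta_1 \otimes \zeta_2 \right\Vert(\gamma) = \left\Vert \zeta_1 \right\Vert(\gamma) \wedge \left\Vert \zeta_2 \right\Vert(\gamma)$ holds for arbitrary fPCL formulas, via involution and the De Morgan law, is exactly the detail the paper leaves implicit, and it is sound.
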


\begin{proof}
	This proof is done analogously to the proof of Proposition \ref{absorpt_pil}.
\end{proof}

\begin{proposition}\label{otimes_distib_coale}
	Let $\varphi \in fPIL(K,P)$ and $\zeta_1, \zeta_2 \in fPCL(K,P)$. Then
	\[  \varphi \otimes (\zeta_1 \uplus \zeta_2)  \equiv (\varphi \otimes \zeta_1) \uplus (\varphi \otimes \zeta_2). \] 
\end{proposition}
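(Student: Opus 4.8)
The plan is to fix an arbitrary $\gamma \in fC(P,K)$, compute the semantics $\left\Vert \cdot \right\Vert(\gamma)$ of each side, and show the two values coincide, exactly in the style of the proofs of Propositions \ref{uplus_over_oplus} and \ref{otimes_over_oplus}. First I would expand the left-hand side. From the definition $\zeta \otimes \zeta^\prime := \neg(\neg\zeta \oplus \neg\zeta^\prime)$ and the De Morgan laws (as in the computation inside the proof of Proposition \ref{otimes_i}) we have $\left\Vert \varphi \otimes (\zeta_1 \uplus \zeta_2)\right\Vert(\gamma) = \left\Vert \varphi\right\Vert(\gamma) \wedge \left\Vert \zeta_1 \uplus \zeta_2\right\Vert(\gamma)$. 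Unfolding the coalescing operator and distributing $\wedge$ over the supremum indexed by the decompositions $\gamma = \gamma_1 \cup \gamma_2$ (permitted since $\wedge$ distributes over $\vee$) yields
$$\left\Vert \varphi \otimes (\zeta_1 \uplus \zeta_2)\right\Vert(\gamma) = \bigvee_{\gamma=\gamma_1\cup\gamma_2}\left(\left\Vert \varphi\right\Vert(\gamma) \wedge \left\Vert \zeta_1\right\Vert(\gamma_1) \wedge \left\Vert \zeta_2\right\Vert(\gamma_2)\right).$$

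Next I would expand the right-hand side. Unfolding $\uplus$ and then each occurrence of $\otimes$ gives
$$\left\Vert (\varphi \otimes \zeta_1)\uplus(\varphi\otimes \zeta_2)\right\Vert(\gamma)=\bigvee_{\gamma=\gamma_1\cup\gamma_2}\left(\left\Vert\varphi\right\Vert(\gamma_1)\wedge\left\Vert\zeta_1\right\Vert(\gamma_1)\wedge\left\Vert\varphi\right\Vert(\gamma_2)\wedge\left\Vert\zeta_2\right\Vert(\gamma_2)\right).$$

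The crux of the argument, and the only place where the hypothesis that $\varphi$ is a \emph{fPIL} formula (rather than an arbitrary \emph{fPCL} formula) is used, is the identity $\left\Vert\varphi\right\Vert(\gamma_1)\wedge\left\Vert\varphi\right\Vert(\gamma_2)=\left\Vert\varphi\right\Vert(\gamma)$ valid for every decomposition $\gamma=\gamma_1\cup\gamma_2$. Indeed, since $\varphi$ is a \emph{fPIL} formula, its \emph{fPCL} semantics is the infimum $\left\Vert\varphi\right\Vert(\gamma^\prime)=\bigwedge_{\alpha\in\gamma^\prime}\left\Vert\varphi\right\Vert(\alpha)$, and hence $\left\Vert\varphi\right\Vert(\gamma_1)\wedge\left\Vert\varphi\right\Vert(\gamma_2)=\bigwedge_{\alpha\in\gamma_1}\left\Vert\varphi\right\Vert(\alpha)\wedge\bigwedge_{\alpha\in\gamma_2}\left\Vert\varphi\right\Vert(\alpha)=\bigwedge_{\alpha\in\gamma_1\cup\gamma_2}\left\Vert\varphi\right\Vert(\alpha)=\left\Vert\varphi\right\Vert(\gamma)$, where the middle equality uses the commutativity and idempotency of $\wedge$ to absorb any interactions lying in $\gamma_1\cap\gamma_2$ (recall the union in the definition of $\uplus$ need not be disjoint). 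Substituting this identity into the expanded right-hand side makes it literally equal to the expanded left-hand side, which finishes the proof.

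I expect the main obstacle to be precisely this step, and I would flag that it is exactly what fails for an arbitrary \emph{fPCL} formula in place of $\varphi$: a general $\zeta$ does not satisfy $\left\Vert\zeta\right\Vert(\gamma_1)\wedge\left\Vert\zeta\right\Vert(\gamma_2)=\left\Vert\zeta\right\Vert(\gamma)$, so the distributivity of $\otimes$ over $\uplus$ genuinely requires the first argument to be a \emph{fPIL} formula. The remainder is routine manipulation of finite suprema and infima in the De Morgan algebra, which I would present as a single displayed chain of equalities rather than grinding through each intermediate step in prose.
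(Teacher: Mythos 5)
Your proposal is correct and follows essentially the same route as the paper: unfold both sides over the decompositions $\gamma=\gamma_1\cup\gamma_2$, distribute $\wedge$ over the supremum, and use the idempotency and associativity of $\wedge$ to justify $\left\Vert\varphi\right\Vert(\gamma)=\left\Vert\varphi\right\Vert(\gamma_1)\wedge\left\Vert\varphi\right\Vert(\gamma_2)$, which is exactly the paper's fifth equality. You correctly identify this as the step that hinges on $\varphi$ being a \emph{fPIL} formula, with the only cosmetic difference being that you expand the two sides separately and meet in the middle rather than writing one chain of equalities.
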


\begin{proof}
	For every $\gamma\in fC(P,K)$ we compute 
	\begin{align*}
	\left\Vert  \varphi \  \otimes \right. & \left. (\zeta_1 \uplus \zeta_2) \right\Vert (\gamma) \\ & = \left\Vert \varphi \right\Vert(\gamma) \wedge \left\Vert \zeta_1 \uplus \zeta_2 \right\Vert(\gamma) \\ & = \left\Vert \varphi \right\Vert(\gamma) \wedge \bigvee_{\gamma=\gamma_1 \cup \gamma_2} \left(  \left\Vert \zeta_1 \right\Vert(\gamma_1) \wedge \left\Vert \zeta_2\right\Vert(\gamma_2)  \right) \\ & =  \bigvee_{\gamma=\gamma_1 \cup \gamma_2} \left\Vert \varphi \right\Vert(\gamma) \wedge \left(  \left\Vert \zeta_1 \right\Vert(\gamma_1) \wedge \left\Vert \zeta_2\right\Vert(\gamma_2)  \right) \\ & =  \bigvee_{\gamma=\gamma_1 \cup\gamma_2} \left(\bigwedge_{\alpha\in \gamma} \left\Vert \varphi \right\Vert(\alpha) \wedge   \left\Vert \zeta_1 \right\Vert(\gamma_1) \wedge \left\Vert \zeta_2\right\Vert(\gamma_2)  \right) \\ & =  \bigvee_{\gamma=\gamma_1 \cup \gamma_2} \left(\bigwedge_{\alpha_1\in \gamma_1} \left\Vert \varphi \right\Vert(\alpha_1) \wedge \bigwedge_{\alpha_2\in \gamma_2} \left\Vert \varphi \right\Vert(\alpha_2) \wedge  \left\Vert \zeta_1 \right\Vert(\gamma_1) \wedge \left\Vert \zeta_2\right\Vert(\gamma_2)  \right) \\ & =  \bigvee_{\gamma=\gamma_1 \cup \gamma_2} \left( \left\Vert \varphi \right\Vert(\gamma_1) \wedge \left\Vert \varphi \right\Vert(\gamma_2) \wedge  \left\Vert \zeta_1 \right\Vert(\gamma_1) \wedge \left\Vert \zeta_2\right\Vert(\gamma_2)  \right) \\ & =  \bigvee_{\gamma=\gamma_1 \cup \gamma_2} \left(  \left\Vert \varphi \otimes \zeta_1 \right\Vert(\gamma_1) \wedge  \left\Vert \varphi \otimes \zeta_2 \right\Vert(\gamma_2) \right) \\ & = \left\Vert \left( \varphi \otimes \zeta_1\right) \uplus \left( \varphi \otimes \zeta_2  \right) \right\Vert(\gamma)
	\end{align*}
	where the third equality holds since $\wedge $ distributes over $\vee$ and the fifth one by the idempotency and associativity of the $\wedge$ operator. \end{proof}

\begin{proposition}\label{pil_prop}
	Let $\varphi \in fPIL(K,P)$. Then
	\begin{flushleft}
		$\begin{array}{l l l}
		(1) \  \varphi \uplus \varphi \equiv \varphi. & \hspace*{1.2cm} (2) \ \neg \sim \varphi \equiv \  ! \varphi.  & \hspace*{1.2cm} (3) \ \neg \varphi \equiv \ \sim \  ! \varphi.
		\end{array}$
	\end{flushleft}
	
\end{proposition}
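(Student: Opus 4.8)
The plan is to establish each identity by evaluating both sides on an arbitrary $\gamma \in fC(P,K)$, using three ingredients: the fPIL clause $\left\Vert \varphi \right\Vert(\gamma) = \bigwedge_{\alpha \in \gamma} \left\Vert \varphi \right\Vert(\alpha)$ for a fPIL formula read as a fPCL formula, the formula $\left\Vert \sim \zeta \right\Vert(\gamma) = \bigvee_{\gamma^\prime \subseteq \gamma} \left\Vert \zeta \right\Vert(\gamma^\prime)$ proved above (where the $\gamma^\prime$ range over the nonempty subsets of $\gamma$, since members of $fC(P,K)$ are nonempty), and the De Morgan laws of $K$.

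For $(1)$ I would expand $\left\Vert \varphi \uplus \varphi \right\Vert(\gamma) = \bigvee_{\gamma = \gamma_1 \cup \gamma_2} \left( \left\Vert \varphi \right\Vert(\gamma_1) \wedge \left\Vert \varphi \right\Vert(\gamma_2) \right)$ and note that for every decomposition $\gamma = \gamma_1 \cup \gamma_2$ we have $\left\Vert \varphi \right\Vert(\gamma_1) \wedge \left\Vert \varphi \right\Vert(\gamma_2) = \bigwedge_{\alpha \in \gamma_1} \left\Vert \varphi \right\Vert(\alpha) \wedge \bigwedge_{\alpha \in \gamma_2} \left\Vert \varphi \right\Vert(\alpha) = \bigwedge_{\alpha \in \gamma_1 \cup \gamma_2} \left\Vert \varphi \right\Vert(\alpha) = \left\Vert \varphi \right\Vert(\gamma)$, where the middle step uses the idempotency and associativity of $\wedge$. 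Thus every term of the join already equals $\left\Vert \varphi \right\Vert(\gamma)$, so the join collapses to $\left\Vert \varphi \right\Vert(\gamma)$, which gives $(1)$.

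The core of $(2)$ and $(3)$ is the auxiliary observation that, for any fPIL formula $\psi$, one has $\bigvee_{\emptyset \neq \gamma^\prime \subseteq \gamma} \left\Vert \psi \right\Vert(\gamma^\prime) = \bigvee_{\alpha \in \gamma} \left\Vert \psi \right\Vert(\alpha)$. Indeed $\left\Vert \psi \right\Vert(\gamma^\prime) = \bigwedge_{\alpha \in \gamma^\prime} \left\Vert \psi \right\Vert(\alpha) \leq \left\Vert \psi \right\Vert(\{\beta\})$ for any $\beta \in \gamma^\prime$, so every subset term is bounded above by a singleton term, while the singletons are themselves admissible values of $\gamma^\prime$; hence the join is attained on singletons. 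Applying this with $\psi = \varphi$ yields $\left\Vert \neg \sim \varphi \right\Vert(\gamma) = \overline{\left\Vert \sim \varphi \right\Vert(\gamma)} = \overline{\bigvee_{\alpha \in \gamma} \left\Vert \varphi \right\Vert(\alpha)} = \bigwedge_{\alpha \in \gamma} \overline{\left\Vert \varphi \right\Vert(\alpha)} = \left\Vert \,!\varphi \right\Vert(\gamma)$, which is $(2)$. For $(3)$ I would apply the same observation with $\psi = \,!\varphi$, obtaining $\left\Vert \sim !\varphi \right\Vert(\gamma) = \bigvee_{\alpha \in \gamma} \overline{\left\Vert \varphi \right\Vert(\alpha)} = \overline{\bigwedge_{\alpha \in \gamma} \left\Vert \varphi \right\Vert(\alpha)} = \left\Vert \neg \varphi \right\Vert(\gamma)$; alternatively, $(3)$ follows from $(2)$ applied to $!\varphi$ together with $\neg \neg \zeta \equiv \zeta$ and $\,!!\varphi \ \dot{\equiv}\ \varphi$.

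The only genuinely non-routine point is this collapse of the join over subsets to the join over singletons, which rests on the fact that the infimum over a larger interaction set is no larger; everything else is a mechanical application of the semantic clauses and the De Morgan laws.
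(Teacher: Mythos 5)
Your proof is correct, and parts (1) and the derivation of (3) from (2) coincide with the paper's argument; but for (2) you take a genuinely different and shorter route. You collapse the join $\bigvee_{\emptyset \neq \gamma^\prime \subseteq \gamma} \Vert \varphi \Vert(\gamma^\prime)$ to the join over singletons \emph{before} complementing, by observing that $\Vert \varphi \Vert(\gamma^\prime) = \bigwedge_{\alpha \in \gamma^\prime} \Vert \varphi \Vert(\alpha) \leq \Vert \varphi \Vert(\{\beta\})$ for any $\beta \in \gamma^\prime$, so every subset term is dominated by a singleton term already present in the join; then a single application of the De Morgan law finishes. The paper instead complements first, arriving at $\bigwedge_{\gamma^\prime \subseteq \gamma} \bigvee_{\alpha \in \gamma^\prime} \overline{\Vert \varphi \Vert(\alpha)}$, and then collapses that meet by explicitly enumerating the singletons and the $2^n-(n+1)$ larger subsets of $\gamma$ and invoking idempotency of $\wedge$ and $\vee$ term by term. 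The two arguments are dual forms of the same absorption phenomenon: yours exploits the lattice order directly (the infimum over a larger interaction set is no larger), which avoids the enumeration and is cleaner, while the paper's stays within purely equational manipulations of $\vee$ and $\wedge$, matching the style of its other proofs. Your remark that the subsets $\gamma^\prime$ range over nonempty sets is also the right precaution, since elements of $fC(P,K)$ are nonempty; the paper relies on the same convention implicitly.
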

\begin{proof}
	For every $\gamma\in fC(P,K)$ we have 
	\begin{enumerate}[(1)]
		\item \begin{align*}
		\left\Vert \varphi \uplus \varphi \right\Vert(\gamma) & = \bigvee_{\gamma=\gamma_1\cup \gamma_2} \left(\left\Vert\varphi \right\Vert(\gamma_1) \wedge \left\Vert\varphi \right\Vert(\gamma_2)\right) \\ & = \bigvee_{\gamma=\gamma_1 \cup\gamma_2} \left( \bigwedge_{\alpha_1\in \gamma_1} \left\Vert \varphi\right\Vert(\alpha_1) \wedge \bigwedge_{\alpha_2\in \gamma_2}\left\Vert \varphi\right\Vert(\alpha_2) \right) \\ & = \bigvee_{\gamma=\gamma_1 \cup\gamma_2} \left( \bigwedge_{\alpha\in \gamma_1\cup \gamma_2} \left\Vert \varphi\right\Vert(\alpha) \right)\\ & = \bigwedge_{\alpha\in \gamma} \left\Vert\varphi\right\Vert(\alpha)   = \left\Vert \varphi \right\Vert(\gamma)
		\end{align*}
		where the third and fourth equalities hold since the operators $\wedge$ and $ \vee $ are idempotent.
		
		\item 
		\begin{align*}
		\left\Vert \neg \sim \varphi \right\Vert (\gamma) & =   \left\Vert \neg \left( \varphi \uplus true \right) \right\Vert (\gamma) \\ & = \overline{\left\Vert  \varphi \uplus true  \right\Vert (\gamma) } \\ &  = \overline{\bigvee_{\gamma=\gamma_1\cup \gamma_2}\left(\left\Vert  \varphi \right\Vert(\gamma_1) \wedge \left\Vert  true  \right\Vert (\gamma_2) \right) } \\ &  = \overline{\bigvee_{\gamma^\prime \subseteq \gamma}\left\Vert  \varphi \right\Vert(\gamma^\prime) } \\ &  = \overline{\bigvee_{\gamma^\prime \subseteq \gamma} \bigwedge_{\alpha\in \gamma^\prime}\left\Vert  \varphi \right\Vert(\alpha) } \\ &  = \bigwedge_{\gamma^\prime \subseteq \gamma} \bigvee_{\alpha\in \gamma^\prime}\overline{\left\Vert  \varphi \right\Vert(\alpha) }.
		\end{align*}
		
		\noindent Let $\gamma=\{ \alpha_1, \dots, \alpha_n \}$ and $\{\alpha_1\},\dots, \{\alpha_n\}, \gamma_1, \dots, \gamma_k $ be all possible subsets of $\gamma$, where $k=2^n - (n+1)$ and $|\gamma_i|>1$ for every $i\in \{1, \dots,k\}$. Therefore, we get
		\begin{align*}
		\left\Vert \neg(\varphi \uplus 1) \right\Vert(\gamma) & = \bigwedge_{\gamma^\prime \subseteq \gamma} \bigvee_{\alpha\in \gamma^\prime}\overline{\left\Vert  \varphi \right\Vert(\alpha) } \\ & = \left(\bigwedge_{i=1}^n \bigvee_{\alpha\in \{a_i\}} \overline{\left\Vert \varphi \right\Vert(\alpha)} \right)\wedge \left(  \bigwedge_{j=1}^k \bigvee_{\alpha\in \gamma_j} \overline{\left\Vert\varphi \right\Vert(\alpha)} \right) \\ & = \left(\overline{\left\Vert \varphi\right\Vert(\alpha_1)} \wedge \dots \wedge  \overline{\left\Vert \varphi\right\Vert(\alpha_n)} \right)\wedge \left(  \bigwedge_{j=1}^k \bigvee_{\alpha\in \gamma_j} \overline{\left\Vert\varphi \right\Vert(\alpha)} \right) \\ & = \left(\bigwedge_{\alpha\in \gamma}\overline{\left\Vert \varphi\right\Vert(\alpha)} \right)\wedge \left(  \bigwedge_{j=1}^k \bigvee_{\alpha^\prime\in \gamma_j} \overline{\left\Vert\varphi \right\Vert(\alpha^\prime)} \right) \\ & = \left(\bigwedge_{\alpha\in\gamma}\left\Vert !  \varphi\right\Vert(\alpha) \right)\wedge \left(  \bigwedge_{j=1}^k \bigvee_{\alpha^\prime\in \gamma_j} \left\Vert !\varphi \right\Vert(\alpha^\prime) \right) \\ & = \bigwedge_{j=1}^k \bigvee_{\alpha^\prime\in \gamma_j} \left(  \bigwedge_{\alpha\in \gamma}\left\Vert ! \varphi\right\Vert(\alpha) \wedge \left\Vert ! \varphi \right\Vert(\alpha^\prime) \right) \\ & = \bigwedge_{\alpha\in \gamma}\left\Vert ! \varphi\right\Vert(\alpha) \\ & = \left\Vert ! \varphi\right\Vert(\gamma)
		\end{align*}
		
		\noindent where for the validity of the last equality we give the following explanation. For every $j\in \{1, \dots, k\}$ and for every $\alpha^\prime\in \gamma_j$ we have $\bigwedge_{\alpha\in \gamma}\left\Vert ! \varphi\right\Vert(\alpha) \wedge \left\Vert ! \varphi \right\Vert(\alpha^\prime) =  \bigwedge_{\alpha\in \gamma}\left\Vert ! \varphi\right\Vert(\alpha)$ since $\alpha^\prime \in \gamma$ and $\wedge $ is idempotent. Lastly, by the idempotency of $\vee$ we get the last equality.

		\item 	\begin{align*}
		\left\Vert \sim \ !  \varphi  \right\Vert (\gamma) & = \left\Vert \neg \left( \neg \sim \ ! \varphi  \right) \right\Vert (\gamma) \\ & = \left\Vert \neg ! (! \varphi) \right\Vert (\gamma) \\ & = \left\Vert \neg \left( ! ! \varphi \right) \right\Vert(\gamma) \\ & = \left\Vert \neg \varphi \right\Vert(\gamma)
		\end{align*}
		where the second equality holds by Proposition \ref{pil_prop}(2).
		
	\end{enumerate}
\end{proof}

\begin{proposition}\label{fpil_ovee_to_fpcl}
	Let $J$ be a finite index set and $\varphi_j $ a \emph{fPIL} formula over $P$ and $K$ for every $j\in J$. Let $\gamma\in fC(P,K)$. Then  
	\[  \left\Vert  \underset{j\in J}{\bigovee} \varphi_j  \right\Vert(\gamma)  = \bigvee_{J^\prime \subseteq J} \bigvee_{ \underset{j^\prime \in J^\prime}{\bigcupdot}\gamma_{j^\prime} =\gamma  }  \left( \bigwedge_{j^\prime \in J^\prime} \left\Vert \varphi_{j^\prime} \right\Vert(\gamma_{j^\prime})   \right)   \]
	
	\noindent where $\cupdot$ denotes the disjoint union of sets.
	
\end{proposition}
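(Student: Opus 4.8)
The plan is to unfold the left-hand side with the two relevant semantic clauses and then push a distributive law through. Since $\bigovee_{j\in J}\varphi_j$ is again a fPIL formula, its value on a configuration $\gamma$ is governed first by the clause $\left\Vert \varphi \right\Vert(\gamma)=\bigwedge_{\alpha\in\gamma}\left\Vert \varphi\right\Vert(\alpha)$ and then by the pointwise disjunction clause, so I would begin from
\[ \left\Vert \underset{j\in J}{\bigovee}\varphi_j\right\Vert(\gamma)=\bigwedge_{\alpha\in\gamma}\left\Vert \underset{j\in J}{\bigovee}\varphi_j\right\Vert(\alpha)=\bigwedge_{\alpha\in\gamma}\bigvee_{j\in J}\left\Vert \varphi_j\right\Vert(\alpha). \]
This reduces the whole statement to a purely lattice-theoretic identity about the numbers $\left\Vert \varphi_j\right\Vert(\alpha)\in K$.

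The key step is the finitary distributive law: since $K$ is a bounded \emph{distributive} lattice and both $J$ and $\gamma$ are finite, $\wedge$ distributes over $\vee$, and iterating this over the finitely many elements of $\gamma$ converts the meet-of-joins into a join over all choice functions $f\colon\gamma\to J$,
\[ \bigwedge_{\alpha\in\gamma}\bigvee_{j\in J}\left\Vert \varphi_j\right\Vert(\alpha)=\bigvee_{f\colon\gamma\to J}\ \bigwedge_{\alpha\in\gamma}\left\Vert \varphi_{f(\alpha)}\right\Vert(\alpha). \]
I would then reindex this join. To each $f$ I associate its image $J^\prime=f(\gamma)\subseteq J$ together with the fibres $\gamma_{j^\prime}=f^{-1}(j^\prime)$ for $j^\prime\in J^\prime$; these fibres are nonempty (hence lie in $fC(P,K)$), pairwise disjoint, and satisfy $\bigcupdot_{j^\prime\in J^\prime}\gamma_{j^\prime}=\gamma$. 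Conversely, any subset $J^\prime\subseteq J$ with a disjoint decomposition $\bigcupdot_{j^\prime\in J^\prime}\gamma_{j^\prime}=\gamma$ into nonempty parts determines a unique $f$ by $f(\alpha)=j^\prime$ for $\alpha\in\gamma_{j^\prime}$. Grouping the inner meet along the fibres and using $\bigwedge_{\alpha\in\gamma_{j^\prime}}\left\Vert \varphi_{j^\prime}\right\Vert(\alpha)=\left\Vert \varphi_{j^\prime}\right\Vert(\gamma_{j^\prime})$ turns the right-hand side of the displayed equation into exactly the claimed expression $\bigvee_{J^\prime\subseteq J}\bigvee_{\bigcupdot_{j^\prime\in J^\prime}\gamma_{j^\prime}=\gamma}\bigl(\bigwedge_{j^\prime\in J^\prime}\left\Vert \varphi_{j^\prime}\right\Vert(\gamma_{j^\prime})\bigr)$, which closes the argument.

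The part that needs genuine care is the reindexing bijection, not the algebra. I must check that the correspondence $f\longleftrightarrow (J^\prime,(\gamma_{j^\prime})_{j^\prime\in J^\prime})$ is a bijection with no double counting: the nonemptiness of each fibre is what forces $J^\prime$ to be precisely the image of $f$, so distinct $f$ give distinct indexed decompositions and every admissible decomposition arises from exactly one $f$. This is also where the nonemptiness requirement built into $fC(P,K)$ is used essentially, matching the condition that each $\gamma_{j^\prime}$ in the statement is a bona fide configuration. A harmless degenerate case worth noting is $J^\prime=\emptyset$: since $\gamma$ is nonempty it admits no decomposition into zero parts, so that term is an empty join and contributes the bottom element, consistently with the fact that no $f\colon\gamma\to J$ has empty image. (One could alternatively prove the identity by induction on $|J|$ using Proposition~\ref{otimes_distib_coale} and Proposition~\ref{uplus_over_oplus}, but the direct distributive-law computation above is the cleaner route.)
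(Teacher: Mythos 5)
Your proof is correct and follows essentially the same route as the paper's: unfold the semantics to $\bigwedge_{\alpha\in\gamma}\bigvee_{j\in J}\left\Vert\varphi_j\right\Vert(\alpha)$, distribute to obtain a join over all choice functions (the paper phrases this as a join over tuples $(j_1,\dots,j_n)\in J^n$, which is the same thing), and reindex by image and fibres to arrive at the disjoint-union form. Your explicit verification of the bijection $f\longleftrightarrow(J^\prime,(\gamma_{j^\prime})_{j^\prime\in J^\prime})$ and the remark on the empty-image case are in fact more careful than the paper's verbal justification of its last equality.
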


\begin{proof}
	Let $\gamma = \{ \alpha_1, \dots, \alpha_n \} \in fC(P,K)$ and a finite index set $J$. Then 
	\begin{align*}
	\left\Vert  \underset{j\in J}{\bigovee} \varphi_j  \right\Vert(\gamma) & = \bigwedge_{\alpha\in \gamma} \left\Vert \underset{j\in J}{\bigovee} \varphi_j \right\Vert(\alpha) \\ & = \bigwedge_{\alpha\in \gamma} \left( \bigvee_{j\in J} \left\Vert \varphi_j \right\Vert(\alpha) \right) \\ & = \bigvee_{(j_1, \dots, j_n) \in J^n} \left(  \left\Vert \varphi_{j_1} \right\Vert(\alpha_1) \wedge \left\Vert \varphi_{j_2} \right\Vert(\alpha_2) \wedge \dots \wedge \left\Vert \varphi_{j_n} \right\Vert(\alpha_n) \right)  \\ & =  \bigvee_{J^\prime \subseteq J} \bigvee_{ \underset{j^\prime \in J^\prime}{\bigcupdot} \gamma_{j^\prime} = \gamma} \left( \bigwedge_{j^\prime\in J^\prime} \left\Vert \varphi_{j^\prime} \right\Vert(\gamma_{j^\prime}) \right)
	\end{align*}
	
	\noindent  where for the validity of the last equality we give the following explanation. In the third equality we have all possible n-tuples with elements from $J$. Hence, there are cases where we have repetitions of some $j_{i}$'s. Moreover, in every parenthesis in the third equality, each $\alpha_{j}$ appears exactly once and therefore we get the disjoint unions of sets that are equal to $\gamma$ in the last equality. Lastly, considering the above, the commutativity of the operators $\wedge, $ $\vee$ and the idempotency of the $\vee$ operator, we get the last equality.  
	
\end{proof}

\begin{proposition}\label{pil_coal_n}
	Let $J$ be a finite index set and $\varphi_j $ a \emph{fPIL} formula over $P$ and $K$ for every $j\in J$. Then 
	\[ \biguplus_{j\in J} \varphi_j \equiv \bigotimes_{j\in J} \left( \sim \varphi_j \right) \otimes \left( \underset{j\in J}{\bigovee}  \ \varphi_j  \right).  \]
\end{proposition}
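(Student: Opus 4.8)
The plan is to fix an arbitrary $\gamma\in fC(P,K)$ and to expand both sides into a join of meets of the scalars $\left\Vert\varphi_j\right\Vert(\alpha)$ ($j\in J$, $\alpha\in\gamma$), and then to match the two expansions. For the left-hand side I would iterate the semantics of $\uplus$ and use its associativity to obtain that $\left\Vert\biguplus_{j\in J}\varphi_j\right\Vert(\gamma)$ equals the join $\bigvee\bigwedge_{j\in J}\left\Vert\varphi_j\right\Vert(\gamma_j)$ taken over all families $(\gamma_j)_{j\in J}$ of \emph{nonempty} subsets of $\gamma$ with $\bigcup_{j\in J}\gamma_j=\gamma$; note that this covering need not be disjoint. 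Since each $\varphi_j$ is an fPIL formula, $\left\Vert\varphi_j\right\Vert(\gamma_j)=\bigwedge_{\alpha\in\gamma_j}\left\Vert\varphi_j\right\Vert(\alpha)$, so every summand is the meet of the scalars attached to the pairs $(j,\alpha)$ with $\alpha\in\gamma_j$.

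For the right-hand side I would first observe that $\otimes$ is interpreted by $\wedge$, that is $\left\Vert\zeta\otimes\zeta'\right\Vert(\gamma)=\left\Vert\zeta\right\Vert(\gamma)\wedge\left\Vert\zeta'\right\Vert(\gamma)$ (computed from the definition of $\otimes$ exactly as in the proof of Proposition \ref{otimes_i}); this splits the value into $\bigl(\bigwedge_{j\in J}\left\Vert\sim\varphi_j\right\Vert(\gamma)\bigr)\wedge\left\Vert\bigovee_{j\in J}\varphi_j\right\Vert(\gamma)$. I would then rewrite each closure factor by the closure formula established earlier, $\left\Vert\sim\varphi_j\right\Vert(\gamma)=\bigvee_{\emptyset\neq\delta\subseteq\gamma}\left\Vert\varphi_j\right\Vert(\delta)$, rewrite the disjunction factor by Proposition \ref{fpil_ovee_to_fpcl}, and push all joins outside using the distributivity of $\wedge$ over $\vee$. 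This exhibits the right-hand side as a join, over all choices of a nonempty $\delta_j\subseteq\gamma$ for each $j\in J$, a subset $J'\subseteq J$, and a disjoint covering $\bigcupdot_{j'\in J'}\eta_{j'}=\gamma$, of the meet $\bigwedge_{j\in J}\left\Vert\varphi_j\right\Vert(\delta_j)\wedge\bigwedge_{j'\in J'}\left\Vert\varphi_{j'}\right\Vert(\eta_{j'})$.

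The heart of the proof is then to show these two joins are equal, and the main obstacle is exactly the mismatch between the non-disjoint covering coming from $\uplus$ and the disjoint covering coming from $\bigovee$, which the closure factors $\sim\varphi_j$ must absorb. For the inequality \emph{right} $\leq$ \emph{left} I would merge the pieces by setting $\gamma_j:=\delta_j\cup\eta_j$ for $j\in J'$ and $\gamma_j:=\delta_j$ otherwise; since $\left\Vert\varphi_j\right\Vert(\delta_j)\wedge\left\Vert\varphi_j\right\Vert(\eta_j)=\left\Vert\varphi_j\right\Vert(\delta_j\cup\eta_j)$ by the idempotency and associativity of $\wedge$ on the elementwise meets, each right-hand summand becomes $\bigwedge_{j\in J}\left\Vert\varphi_j\right\Vert(\gamma_j)$ for the nonempty covering $(\gamma_j)_{j\in J}$, hence is an admissible left-hand summand.

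For \emph{left} $\leq$ \emph{right}, given a covering $(\gamma_j)_{j\in J}$ I would choose a map $g\colon\gamma\to J$ with $\alpha\in\gamma_{g(\alpha)}$ for all $\alpha$, and take $J':=g(\gamma)$, $\eta_{j'}:=g^{-1}(j')\subseteq\gamma_{j'}$ and $\delta_j:=\gamma_j$; because $\eta_{j'}\subseteq\gamma_{j'}$ gives $\left\Vert\varphi_{j'}\right\Vert(\eta_{j'})\geq\left\Vert\varphi_{j'}\right\Vert(\gamma_{j'})$, the extra disjunction meet is absorbed and the resulting right-hand summand equals $\bigwedge_{j\in J}\left\Vert\varphi_j\right\Vert(\gamma_j)$. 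Thus every summand of one side occurs on the other, the two joins coincide, and the equivalence follows. The only delicate points, besides this covering bookkeeping, are keeping the nonemptiness constraints on all pieces consistent and invoking idempotency of $\vee$ when the same meet is produced by several choices.
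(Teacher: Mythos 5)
Your proposal is correct and follows essentially the same route as the paper: both expand the right-hand side via the closure formula and Proposition \ref{fpil_ovee_to_fpcl}, push the joins outward by distributivity, and then match the resulting join of meets against the covering-indexed join defining the iterated $\uplus$. Your explicit two-inequality matching (merging $\delta_j\cup\eta_j$ in one direction, the choice function $g$ producing a disjoint refinement in the other) is a rigorous rendering of the step the paper only states informally, namely that the sets $\gamma_{j'}^\prime\cup\gamma_{j'}$ and $\gamma_j$ range over exactly the coverings of $\gamma$.
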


\begin{proof}
	Let $\gamma\in fC(P,K)$ and $J=\{ 1, \dots, n \}$ a finite index set. Then we get 
	\begin{align*}
	& \left\Vert \bigotimes_{j\in J} \left( \sim \varphi_j \right) \otimes \left( \underset{j\in J}{\bigovee} \ \varphi_j  \right) \right\Vert(\gamma) \\& = \bigwedge_{j\in J}\left( \bigvee_{\gamma_j\subseteq \gamma} \left\Vert \varphi_j\right\Vert(\gamma_j) \right)\wedge \left( \bigvee_{J^\prime \subseteq J} \bigvee_{ \underset{j^\prime \in J^\prime}{\bigcupdot}\gamma_{j^\prime}^\prime = \gamma   }  \left( \bigwedge_{j^\prime \in J^\prime} \left\Vert \varphi_{j^\prime} \right\Vert(\gamma_{j^\prime}^\prime)   \right) \right)  \\ & = \bigvee_{\gamma_1\subseteq \gamma} \ldots \bigvee_{\gamma_n\subseteq \gamma}\left(  \bigwedge_{j\in J}\left\Vert \varphi_j\right\Vert(\gamma_j)    \right) \wedge \left( \bigvee_{J^\prime \subseteq J} \bigvee_{ \underset{j^\prime \in J^\prime}{\bigcupdot}\gamma_{j^\prime}^\prime =\gamma  }  \left( \bigwedge_{j^\prime \in J^\prime} \left\Vert \varphi_{j^\prime} \right\Vert(\gamma_{j^\prime}^\prime)   \right) \right) \\ & = \bigvee_{\gamma_1\subseteq \gamma} \ldots \bigvee_{\gamma_n\subseteq \gamma}\bigvee_{J^\prime \subseteq J} \bigvee_{ \underset{j^\prime \in J^\prime}{\bigcupdot}\gamma_{j^\prime}^\prime =\gamma   } \left( \left\Vert \varphi_1\right\Vert(\gamma_1) \wedge \dots \wedge  \left\Vert \varphi_n\right\Vert(\gamma_n) \wedge  \bigwedge_{j^\prime \in J^\prime} \left\Vert \varphi_{j^\prime} \right\Vert(\gamma_{j^\prime}^\prime)\right) \\ & = \bigvee_{\gamma_1\subseteq \gamma} \ldots \bigvee_{\gamma_n\subseteq \gamma}\bigvee_{J^\prime \subseteq J} \bigvee_{ \underset{j^\prime \in J^\prime}{\bigcupdot}\gamma_{j^\prime}^\prime =\gamma   } \left( \bigwedge_{j^\prime \in J^\prime} \left\Vert \varphi_{j^\prime } \right\Vert(\gamma_{j^\prime}^\prime \cup \gamma_{j^\prime}) \wedge \bigwedge_{j\in J\backslash J^\prime} \left\Vert \varphi_j \right\Vert(\gamma_j) \right)  
	\end{align*}
	
	\noindent where the first equality holds by Proposition \ref{fpil_ovee_to_fpcl} and the fourth one by the idempotency of $\wedge$. We observe that the sets $\gamma_{j^\prime}^\prime \cup \gamma_{j^\prime}$ and $\gamma_{j} $ for every $j^\prime \in J^\prime$ and $j\in J\backslash J^\prime$, consist all possible subsets of $\gamma$ where the union of them is equal to $\gamma$. So, by the idempotency of $\wedge$ and $\vee$ we get
	\begin{align*}
	\left\Vert \bigotimes_{j\in J} \left( \sim \varphi_j \right) \otimes \left( \underset{j\in J}{\bigovee} \ \varphi_j  \right) \right\Vert(\gamma) & = \bigvee_{\gamma_1^{\prime \prime } \cup \dots \cup \gamma_n^{\prime \prime}=\gamma} \left( \left\Vert \varphi_1\right\Vert(\gamma_1^{\prime \prime}) \wedge \dots \wedge  \left\Vert \varphi_n\right\Vert(\gamma_n^{\prime \prime})\right) \\ & = \left\Vert \varphi_1\uplus \dots \uplus  \varphi_n \right\Vert (\gamma).
	\end{align*} \end{proof}

\begin{proposition}\label{pil_coal_neg}
	Let $J$ be a finite index set and $\varphi_j$ a \emph{fPIL} formula for every $j\in J$. Then 
	\[  \neg \biguplus_{j\in J} \varphi_j \equiv \bigoplus_{j\in J} \left(  ! \varphi_j \right) \oplus \sim \left( \underset{j\in J}{\bigowedge}  ! \varphi_j  \right)    \] 
\end{proposition}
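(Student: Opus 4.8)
The plan is to reduce the statement to an application of the De Morgan law for $\neg$ over $\otimes$ (Proposition~\ref{neg}(2)) together with the normal form for the coalescing operator already established in Proposition~\ref{pil_coal_n}. First I would rewrite the left-hand side using Proposition~\ref{pil_coal_n}, which gives
\[ \biguplus_{j\in J} \varphi_j \equiv \bigotimes_{j\in J}\left(\sim\varphi_j\right)\otimes\left(\underset{j\in J}{\bigovee}\varphi_j\right), \]
so that applying $\neg$ to both sides yields $\neg\biguplus_{j\in J}\varphi_j \equiv \neg\left(\bigotimes_{j\in J}(\sim\varphi_j)\otimes(\bigovee_{j\in J}\varphi_j)\right)$.

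Next I would push the outer negation inside the big conjunction. Proposition~\ref{neg}(2) states $\neg(\zeta_1\otimes\zeta_2)\equiv(\neg\zeta_1)\oplus(\neg\zeta_2)$; since $\otimes$ and $\oplus$ are associative and commutative, this extends by a routine induction to a finite index set, so that $\neg$ distributes over the whole $\bigotimes$ and turns it into a $\bigoplus$. This produces
\[ \neg\biguplus_{j\in J}\varphi_j \equiv \bigoplus_{j\in J}\neg\left(\sim\varphi_j\right)\oplus\neg\left(\underset{j\in J}{\bigovee}\varphi_j\right). \]

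It then remains to simplify the two kinds of summands. For each $j$, Proposition~\ref{pil_prop}(2) gives $\neg\sim\varphi_j\equiv\,!\varphi_j$ directly. For the last summand I would use that $\bigovee_{j\in J}\varphi_j$ is itself a fPIL formula, so Proposition~\ref{pil_prop}(3) applies and yields $\neg(\bigovee_{j\in J}\varphi_j)\equiv\,\sim\,!(\bigovee_{j\in J}\varphi_j)$; finally Proposition~\ref{neg_i_oplus}, extended from two formulas to the finite index set $J$ (again by induction, using associativity of $\sovee$ and $\sowedge$ from Proposition~\ref{fpil_associa}), converts $!(\bigovee_{j\in J}\varphi_j)$ into $\bigowedge_{j\in J}!\varphi_j$. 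Substituting both simplifications into the displayed equivalence gives exactly the claimed right-hand side.

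The computation is essentially bookkeeping once the right lemmas are lined up; the one point that needs care is the passage from the binary De Morgan and negation identities (Propositions~\ref{neg}(2) and~\ref{neg_i_oplus}) to their $J$-indexed versions, which is why I would explicitly note that these extensions are justified by the commutativity and associativity of the operators involved rather than treat them as immediate.
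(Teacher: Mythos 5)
Your proposal is correct and follows essentially the same route as the paper: rewrite $\biguplus_{j\in J}\varphi_j$ via Proposition~\ref{pil_coal_n}, distribute $\neg$ over $\bigotimes$ using Proposition~\ref{neg}, simplify with Proposition~\ref{pil_prop}(2)--(3), and finish with the fPIL De Morgan law of Proposition~\ref{neg_i_oplus}. Your explicit remark that the binary identities must be extended to $J$-indexed versions by associativity and commutativity is a point the paper leaves implicit, but the argument is the same.
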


\begin{proof} We get  \begin{align*}
	\neg \biguplus_{j\in J} \varphi_j & \equiv \neg \left( \bigotimes_{j\in J} \left( \sim \varphi_j \right) \otimes \left( \underset{j\in J}{\bigovee}  \varphi_j  \right)   \right) \\ & \equiv \bigoplus_{j\in J} \left(  \neg \sim \varphi_j \right) \oplus  \neg \left( \underset{j\in J}{\bigovee} \ \varphi_j  \right)    \\ & \equiv \bigoplus_{j\in J} \left( ! \varphi_j \right) \oplus \sim \ ! \left( \underset{j\in J}{\bigovee} \varphi_j  \right)  \\ & \equiv \bigoplus_{j\in J} \left( ! \varphi_j \right) \oplus \sim \underset{j\in J}{\bigowedge} \left(! \varphi_j\right) 
	\end{align*}
	\noindent where the third equivalence holds by Proposition \ref{pil_prop}. 
\end{proof}

\begin{proposition}\label{otimes_to_uplus}
	Let $J$ be a finite index set and $\varphi_j$ a \emph{fPIL} formula for every $j\in J$. Then 
	\[ \bigotimes_{j\in J} (\sim \varphi_j)  \equiv \  \sim  \biguplus_{j\in J} \varphi_j . \]
\end{proposition}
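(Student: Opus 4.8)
The plan is to prove the identity by directly computing the semantics of both sides and showing that, for every $\gamma \in fC(P,K)$, the two values coincide with one common expression, namely a single supremum indexed by tuples of nonempty subsets of $\gamma$. Throughout I would fix $J=\{1,\dots,n\}$ and keep in mind that every $\gamma_j$ and every $\gamma^\prime$ ranges over nonempty elements of $fC(P,K)$, since this is what makes the index sets on the two sides match.

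First I would evaluate the left-hand side. The tensor of fPCL formulas satisfies $\left\Vert \zeta_1 \otimes \zeta_2 \right\Vert(\gamma) = \left\Vert \zeta_1 \right\Vert(\gamma) \wedge \left\Vert \zeta_2 \right\Vert(\gamma)$, an immediate consequence of the De Morgan law exactly as in the proof of Proposition \ref{otimes_i}; iterating it gives $\left\Vert \bigotimes_{j\in J}(\sim \varphi_j)\right\Vert(\gamma) = \bigwedge_{j\in J}\left\Vert \sim \varphi_j\right\Vert(\gamma)$. Applying the closure formula $\left\Vert \sim \zeta\right\Vert(\gamma) = \bigvee_{\gamma^\prime \subseteq \gamma}\left\Vert \zeta\right\Vert(\gamma^\prime)$ to each factor turns this into $\bigwedge_{j\in J}\bigl(\bigvee_{\gamma_j \subseteq \gamma}\left\Vert \varphi_j\right\Vert(\gamma_j)\bigr)$. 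The key algebraic step is to invoke the distributivity of $\wedge$ over $\vee$ (finitely iterated) to push the meet inside, producing a single supremum $\bigvee \bigwedge_{j\in J}\left\Vert \varphi_j\right\Vert(\gamma_j)$ indexed by all tuples $(\gamma_j)_{j\in J}$ of nonempty subsets of $\gamma$.

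Next I would evaluate the right-hand side. The closure formula gives $\left\Vert \sim \biguplus_{j\in J}\varphi_j\right\Vert(\gamma) = \bigvee_{\gamma^\prime \subseteq \gamma}\left\Vert \biguplus_{j\in J}\varphi_j\right\Vert(\gamma^\prime)$, and expanding the $n$-fold coalescing, via its associativity together with the semantics of $\uplus$, yields $\bigvee_{\gamma^\prime \subseteq \gamma}\bigvee_{\gamma^\prime = \gamma_1 \cup \dots \cup \gamma_n}\bigwedge_{j\in J}\left\Vert \varphi_j\right\Vert(\gamma_j)$ with the $\gamma_j$ nonempty. The decisive observation is that the two stacked suprema collapse: since $\gamma^\prime$ ranges over all subsets of $\gamma$, prescribing nonempty $\gamma_1,\dots,\gamma_n \subseteq \gamma$ and taking $\gamma^\prime = \gamma_1 \cup \dots \cup \gamma_n$ realizes every admissible tuple exactly once, so the expression reduces to the same single supremum $\bigvee \bigwedge_{j\in J}\left\Vert \varphi_j\right\Vert(\gamma_j)$ over all tuples of nonempty subsets. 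Comparing with the reduced left-hand side then closes the argument.

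I expect the main obstacle to be the bookkeeping in this collapse of the double supremum: one must verify that replacing the union constraint $\gamma^\prime = \gamma_1 \cup \dots \cup \gamma_n$ by letting $\gamma^\prime$ float freely over all subsets neither adds nor drops any term, and that the resulting index set of tuples of nonempty subsets of $\gamma$ coincides precisely with the index set produced by distributivity on the left. As an alternative, one could argue by induction on $|J|$ using the two-factor identity $(\sim \zeta_1) \otimes (\sim \zeta_2) \equiv \sim(\zeta_1 \uplus \zeta_2)$, which holds for arbitrary fPCL formulas $\zeta_1,\zeta_2$ by the same computation; but the direct route above seems cleaner.
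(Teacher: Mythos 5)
Your proposal is correct and follows essentially the same route as the paper: both reduce the left side to $\bigwedge_{j\in J}\bigl(\bigvee_{\gamma_j\subseteq\gamma}\left\Vert\varphi_j\right\Vert(\gamma_j)\bigr)$, distribute $\wedge$ over $\vee$ to obtain a single supremum over tuples of (nonempty) subsets, and then regroup that supremum by the union $\gamma^\prime=\bigcup_{j\in J}\gamma_j$ to recover $\bigvee_{\gamma^\prime\subseteq\gamma}\left\Vert\biguplus_{j\in J}\varphi_j\right\Vert(\gamma^\prime)$. The only cosmetic difference is that you compute both sides and meet in the middle, whereas the paper writes one chain of equalities from left to right.
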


\begin{proof}
	Let $\gamma\in fC(P,K)$. Then 
	\begin{align*}
	\left\Vert \bigotimes_{j\in J} (\sim \varphi_j)  \right\Vert(\gamma) & = \bigwedge_{j\in J} \left(  \bigvee_{\gamma_j\subseteq \gamma} \left\Vert \varphi_j\right\Vert(\gamma_j)  \right)  \\ & = \bigvee_{\bigcup_{j\in J}\gamma_j \subseteq \gamma} \left( \bigwedge_{j\in J} \left\Vert\varphi_j\right\Vert(\gamma_j) \right) \\ & = \bigvee_{\gamma^\prime \subseteq \gamma} \left(\bigvee_{\bigcup_{j\in J}\gamma_j = \gamma^\prime} \left( \bigwedge_{j\in J} \left\Vert\varphi_j\right\Vert(\gamma_j) \right) \right) \\ & = \bigvee_{\gamma^\prime \subseteq \gamma} \left\Vert \biguplus_{j\in J} \varphi_j \right\Vert(\gamma^\prime) \\ & = \left\Vert \sim \biguplus_{j\in J} \varphi_j \right\Vert(\gamma)
	\end{align*}
	\noindent where the second equality holds since $\wedge$ distributes over $\vee$.
\end{proof}

\begin{proposition}\label{pil_coal_conj}
	Let $J$ and $K$ be finite index sets and $\varphi_j$, $\varphi_k^\prime$ \emph{fPIL} formulas for every $j\in J$ and $k\in K$. Then 
	\[ \left(\biguplus_{j\in J} \varphi_{j}\right) \otimes \left( \biguplus_{k\in K} \varphi_{k}^\prime  \right) \equiv \ \sim \left( \biguplus_{j\in J} \varphi_j \uplus \biguplus_{k\in K} \varphi_k^\prime \right) \otimes \left( \underset{(j,k)\in J\times K}{\bigovee} \left( \varphi_j \sowedge \varphi_k^\prime \right)   \right)  \]
\end{proposition}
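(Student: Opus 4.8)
The plan is to reduce each of the two coalescings to the canonical form supplied by Proposition \ref{pil_coal_n} and then to reorganize the resulting $\otimes$-product. Applying Proposition \ref{pil_coal_n} to each factor gives $\biguplus_{j\in J}\varphi_j \equiv \bigotimes_{j\in J}(\sim\varphi_j)\otimes\left(\bigovee_{j\in J}\varphi_j\right)$ and $\biguplus_{k\in K}\varphi_k^\prime \equiv \bigotimes_{k\in K}(\sim\varphi_k^\prime)\otimes\left(\bigovee_{k\in K}\varphi_k^\prime\right)$. Substituting these into the left-hand side and using that $\otimes$ is associative and commutative, I would regroup the product so that the two closure blocks $\bigotimes_{j\in J}(\sim\varphi_j)$ and $\bigotimes_{k\in K}(\sim\varphi_k^\prime)$ stand together and the two fuzzy disjunctions $\bigovee_{j\in J}\varphi_j$ and $\bigovee_{k\in K}\varphi_k^\prime$ stand together, i.e.
\[ \left(\biguplus_{j\in J}\varphi_j\right)\otimes\left(\biguplus_{k\in K}\varphi_k^\prime\right) \equiv \left(\bigotimes_{j\in J}(\sim\varphi_j)\otimes\bigotimes_{k\in K}(\sim\varphi_k^\prime)\right)\otimes\left(\left(\bigovee_{j\in J}\varphi_j\right)\otimes\left(\bigovee_{k\in K}\varphi_k^\prime\right)\right). \]

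For the first block I would read $\bigotimes_{j\in J}(\sim\varphi_j)\otimes\bigotimes_{k\in K}(\sim\varphi_k^\prime)$ as a single $\otimes$-product of closures indexed by the disjoint union $J\cupdot K$ (again using the commutativity and associativity of $\otimes$) and apply Proposition \ref{otimes_to_uplus} to obtain $\sim\left(\biguplus_{j\in J}\varphi_j\uplus\biguplus_{k\in K}\varphi_k^\prime\right)$, where the single coalescing over $J\cupdot K$ is rewritten via the associativity and commutativity of $\uplus$. This produces exactly the closure factor appearing on the right-hand side.

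For the second block, the key observation is that $\bigovee_{j\in J}\varphi_j$ and $\bigovee_{k\in K}\varphi_k^\prime$ are themselves \emph{fPIL} formulas. Hence Proposition \ref{otimes_i} lets me replace their $\otimes$ by the \emph{fPIL} conjunction $\sowedge$, giving $\left(\bigovee_{j\in J}\varphi_j\right)\sowedge\left(\bigovee_{k\in K}\varphi_k^\prime\right)$; then repeated use of the distributivity of $\sowedge$ over $\sovee$ (Proposition \ref{otimes_over_oplus_i}), together with the associativity and commutativity of $\sowedge$ and $\sovee$, yields $\bigovee_{(j,k)\in J\times K}(\varphi_j\sowedge\varphi_k^\prime)$; as these are \emph{fPIL} formulas, their \emph{fPIL}-equivalence entails the corresponding \emph{fPCL}-equivalence. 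Substituting the two simplified blocks back into the display gives precisely the right-hand side of the statement.

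The computation is essentially bookkeeping, and the two genuine points to get right are the regrouping of the single $\otimes$-product into the two blocks and, inside the disjunction block, the careful expansion of $\left(\bigovee_{j}\varphi_j\right)\sowedge\left(\bigovee_{k}\varphi_k^\prime\right)$ over the double index set $J\times K$. I expect the only delicate step to be justifying that Proposition \ref{otimes_to_uplus} applies verbatim to the merged index set $J\cupdot K$, i.e. that merging the two closure products into one is legitimate; this follows purely from the commutativity and associativity of $\otimes$ and so requires no new semantic computation.
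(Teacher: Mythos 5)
Your proof is correct, and it converges to exactly the same intermediate expression as the paper's, namely $\bigotimes_{j\in J}(\sim\varphi_j)\otimes\bigl(\bigovee_{j\in J}\varphi_j\bigr)\otimes\bigotimes_{k\in K}(\sim\varphi_k^\prime)\otimes\bigl(\bigovee_{k\in K}\varphi_k^\prime\bigr)$, from which both arguments finish identically via Proposition \ref{otimes_to_uplus} on the merged closure block and Proposition \ref{otimes_i} plus the distributivity of $\sowedge$ over $\sovee$ on the disjunction block. The difference is in how you reach that expression: the paper takes a double-negation detour, writing the left-hand side as $\neg\neg(\cdots)$ by Proposition \ref{absorpt_fpcl}(1), distributing the inner negation over $\otimes$ by Proposition \ref{neg}, invoking Proposition \ref{pil_coal_neg} on each negated coalescing, and then undoing the negation with Propositions \ref{neg} and \ref{pil_prop} --- in effect re-deriving Proposition \ref{pil_coal_n} from its negated form. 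You instead apply Proposition \ref{pil_coal_n} directly to each factor, which is shorter and avoids the negation bookkeeping entirely; the only thing you must (and do) supply in exchange is the observation that merging $\bigotimes_{j\in J}(\sim\varphi_j)\otimes\bigotimes_{k\in K}(\sim\varphi_k^\prime)$ into a single product over $J\cupdot K$ before applying Proposition \ref{otimes_to_uplus} is licensed by the associativity and commutativity of $\otimes$ and $\uplus$. Your explicit remark that $\dot\equiv$ between \emph{fPIL} formulas entails $\equiv$ (because $\Vert\varphi\Vert(\gamma)=\bigwedge_{\alpha\in\gamma}\Vert\varphi\Vert(\alpha)$) is a point the paper leaves implicit, and it is worth keeping.
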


\begin{proof}
	By Proposition \ref{absorpt_fpcl}(1) we get
	\begin{align*}
	\left(\biguplus_{j\in J} \varphi_{j}\right) \otimes \left( \biguplus_{k\in K} \varphi_{k}^\prime \right) \equiv \neg \neg \left(\left(\biguplus_{j\in J} \varphi_{j}\right) \otimes \left( \biguplus_{k\in K} \varphi_{k}^\prime \right) \right). 
	\end{align*}
	
	\noindent By Proposition \ref{pil_coal_neg} it is valid that
	\begin{align*}
	\neg \left( \left(\biguplus_{j\in J} \varphi_{j}\right) \otimes \right. & \left. \left( \biguplus_{k\in K} \varphi_{k}^\prime  \right) \right) \\ & \equiv \neg  \left( \biguplus_{j\in J} \varphi_{j} \right) \oplus \neg \left( \biguplus_{k\in K} \varphi_{k}^\prime \right) \\ & \equiv \bigoplus_{j\in J} \left(  ! \varphi_j \right) \oplus \sim \left( \underset{j\in J}{\bigowedge}  ! \varphi_j  \right) \oplus \bigoplus_{k \in K} \left(  ! \varphi_k^\prime  \right) \oplus \sim \left( \underset{k\in K}{\bigowedge}  ! \varphi_k^\prime  \right)
	\end{align*}
	
	\noindent and so
	\begin{align*}
	\neg \neg \left( \left(\biguplus_{j\in J} \right. \right.  & \left. \left.\varphi_{j\in J}\right) \otimes \left( \biguplus_{k\in K} \varphi_{k}^\prime  \right) \right) \\ & \equiv \neg \left( \bigoplus_{j\in J} \left(  ! \varphi_j \right) \oplus \sim \left( \bigowedge_{j\in J}  ! \varphi_j  \right) \oplus \bigoplus_{k \in K} \left(  ! \varphi_k^\prime  \right) \oplus \sim \left( \underset{k\in K}{\bigowedge}   ! \varphi_k^\prime  \right)  \right) \\ & \equiv \bigotimes_{j\in J} \left( \neg ! \varphi_j \right) \otimes \neg\sim \left( \bigowedge_{j\in J}  ! \varphi_j  \right) \otimes \bigotimes_{k\in K} \left( \neg ! \varphi_{k}^\prime  \right) \otimes \neg \sim \left( \underset{k\in K}{\bigowedge}   ! \varphi_k^\prime  \right)  \\ & \equiv \bigotimes_{j\in J} \left( \sim  \varphi_j \right) \otimes ! \left( \underset{j\in J}{\bigowedge}  !\varphi_j  \right) \otimes \bigotimes_{k\in K} \left(\sim  \varphi_{k}^\prime  \right) \otimes ! \left( \underset{k\in K}{\bigowedge} ! \varphi_k^\prime  \right) \\ & \equiv \bigotimes_{j\in J} \left( \sim  \varphi_j \right) \otimes  \left(\underset{j\in J}{\bigovee} \varphi_j  \right) \otimes \bigotimes_{k\in K} \left(\sim  \varphi_{k}^\prime  \right) \otimes  \left( \underset{k\in K}{\bigovee}   \varphi_k^\prime  \right)  \\ &  \equiv  \bigotimes_{j\in J} \left( \sim  \varphi_j \right) \otimes  \bigotimes_{k\in K} \left(\sim  \varphi_{k}^\prime  \right) \otimes \left(\underset{j\in J}{\bigovee}  \varphi_j  \right) \otimes \left( \underset{k\in K}{\bigovee}   \varphi_k^\prime  \right) \\ &  \equiv  \bigotimes_{j\in J} \left( \sim  \varphi_j \right) \otimes  \bigotimes_{k\in K} \left(\sim  \varphi_{k}^\prime  \right) \otimes \left(\left(\underset{j\in J}{\bigovee}  \varphi_j  \right) \sowedge \left( \underset{k\in K}{\bigovee}   \varphi_k^\prime  \right) \right)   \\ &  \equiv \sim \left( \biguplus_{j\in J} \varphi_j \uplus \biguplus_{k\in K} \varphi_k^\prime \right) \otimes \left(\underset{(j,k)\in J\times K}{\bigovee}   \left( \varphi_j \sowedge \varphi_k^\prime \right)   \right)
	\end{align*}
	
	\noindent where the third equivalence holds by Proposition \ref{pil_prop} and the last one by Proposition \ref{otimes_to_uplus}.
\end{proof}

We proceed with an important property of fPCL formulas over $P$ and a Kleene algebra.

\begin{proposition}\label{kleene_ports}
	Let $P$ be a set of ports and $K_\mathbf{3}$ a Kleene algebra. Then 
	\[ \left(  p \ \otimes \ ! p \right) \otimes \left(q  \sovee  ! q \right) \equiv_{K_\mathbf{3}}  p \ \otimes \ ! p   \]
	
	\noindent where $p,q \in P.$
	
\end{proposition}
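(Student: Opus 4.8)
The plan is to collapse the entire left-hand side to a single fPIL formula and then evaluate its fPCL semantics pointwise over the interactions of a given set $\gamma$, where the defining Kleene identity does all the work.

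First I would rewrite everything in terms of the fPIL conjunction $\sowedge$. Since $p$, $!p$, and $q \sovee !q$ are all fPIL formulas, Proposition \ref{otimes_i} gives $p \otimes !p \equiv p \sowedge !p$, and applying it once more yields
\[ \left( p \otimes !p \right) \otimes \left( q \sovee !q \right) \equiv \left( p \sowedge !p \right) \otimes \left( q \sovee !q \right) \equiv \left( p \sowedge !p \right) \sowedge \left( q \sovee !q \right). \]
The crucial observation is that this last expression is itself an fPIL formula, so its fPCL value on a set $\gamma$ is just the infimum of its fPIL values over the members of $\gamma$.

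Next I would compute the semantics explicitly. For $\gamma \in fC(P,K_\mathbf{3})$, unfolding the fPCL clause for an fPIL formula together with the fPIL clauses for $!$, $\sovee$, and $\sowedge$ gives
\[ \left\Vert \left( p \sowedge !p \right)\sowedge\left( q\sovee !q \right)\right\Vert(\gamma) = \bigwedge_{\alpha \in \gamma}\left( \left(\alpha(p) \wedge \overline{\alpha(p)}\right) \wedge \left(\alpha(q) \vee \overline{\alpha(q)}\right) \right), \]
whereas $\left\Vert p \otimes !p\right\Vert(\gamma) = \left\Vert p \sowedge !p\right\Vert(\gamma) = \bigwedge_{\alpha \in \gamma}\left(\alpha(p) \wedge \overline{\alpha(p)}\right)$. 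Comparing the two, it suffices to show that the two infimands coincide for each $\alpha \in \gamma$.

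The heart of the argument is then the equivalent characterization of a Kleene algebra stated in the Preliminaries: for every $k_1, k_2 \in K_\mathbf{3}$ one has $(k_1 \wedge \overline{k_1}) \wedge (k_2 \vee \overline{k_2}) = k_1 \wedge \overline{k_1}$. Setting $k_1 = \alpha(p)$ and $k_2 = \alpha(q)$ shows each factor inside the infimum collapses to $\alpha(p) \wedge \overline{\alpha(p)}$, and taking the infimum over $\alpha \in \gamma$ of both sides yields the desired equality of series, which is the claim. The only genuine subtlety — and the reason the hypothesis is a Kleene algebra $K_\mathbf{3}$ rather than an arbitrary De Morgan algebra $K$ — is precisely this absorption step: over a general $K$ the term $\alpha(q) \vee \overline{\alpha(q)}$ need not dominate $\alpha(p) \wedge \overline{\alpha(p)}$, so the reduction would fail. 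Everything else is a routine unfolding of the fPIL and fPCL semantics combined with Proposition \ref{otimes_i}.
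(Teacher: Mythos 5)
Your proof is correct and follows essentially the same route as the paper: unfold the semantics pointwise over $\gamma$ and apply the Kleene identity $(k_1 \wedge \overline{k_1}) \wedge (k_2 \vee \overline{k_2}) = k_1 \wedge \overline{k_1}$ to each infimand. The only difference is cosmetic — you explicitly invoke Proposition \ref{otimes_i} to justify reading the whole left-hand side as an fPIL formula evaluated pointwise, a step the paper performs implicitly in its first equality.
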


\begin{proof}
	Let $\gamma\in fC(P,K_\mathbf{3})$. Then 
	\begin{align*}
	\left\Vert \left(  p\otimes ! p \right) \otimes \left(q  \sovee ! q \right) \right\Vert(\gamma) & = \bigwedge_{\alpha\in \gamma} \left\Vert \left(  p\otimes ! p \right) \otimes \left(q  \sovee ! q \right)\right\Vert(\alpha) \\ & = \bigwedge_{\alpha\in \gamma} \left( \left(\alpha(p)\wedge \overline{\alpha(p)} \right) \wedge \left( \alpha(q) \vee \overline{\alpha(q)} \right)\right) \\ & = \bigwedge_{\alpha\in \gamma} \left(\alpha(p)\wedge \overline{\alpha(p)} \right)  \\ & = \left\Vert p \otimes ! p\right\Vert(\gamma)
	\end{align*}
	where the third equality holds since $\left( k\wedge \overline{k} \right) \wedge \left( k^\prime \vee \overline{k^\prime} \right) = k\wedge \overline{k}$ for every $k,k^\prime\in K_{\mathbf{3}}$.
\end{proof}

Next, we give properties of $fPCL$ formulas over $P$ and a Boolean algebra.

\begin{proposition}\label{boolean_prop}
	Let $\varphi$ and $\zeta$ be a \emph{fPIL} and a \emph{fPCL} formula, respectively, over $P$ and a Boolean algebra. Then

	\begin{tabular}{l l l l}
		$(1)$ & $\varphi \ \otimes \ ! \varphi \equiv_{\mathbf{B}} false .$ & \hspace*{1.5cm} $(3)$ &  $\zeta \otimes \neg \zeta \equiv_{\mathbf{B}}  false.$ \\[0.1cm]  $(2)$ & $\varphi   \sovee  ! \varphi \equiv_{\mathbf{B}} true .$ & \hspace*{1.5cm} $(4)$ & $ \zeta \oplus \neg \zeta \equiv_{\mathbf{B}}  true. $ 
	\end{tabular}

\end{proposition}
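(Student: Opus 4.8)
The plan is to evaluate each of the four formulas on an arbitrary $\gamma \in fC(P,\mathbf{B})$ and to reduce everything to the two defining identities of a Boolean algebra, namely $k \wedge \overline{k} = 0$ and $k \vee \overline{k} = 1$ for all $k \in \mathbf{B}$. All four statements are of the form ``weight $\equiv_{\mathbf{B}} true$'' or ``weight $\equiv_{\mathbf{B}} false$'', so it suffices to show $\|\cdot\|(\gamma) = 1$ or $\|\cdot\|(\gamma) = 0$ respectively, using that $\|true\|(\gamma)=1$ and $\|false\|(\gamma)=0$.

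I would begin with the two statements that involve only fPCL operators applied to a single fPCL formula $\zeta$, since these are the most direct. For (4), unfolding the semantics gives $\|\zeta \oplus \neg\zeta\|(\gamma) = \|\zeta\|(\gamma) \vee \overline{\|\zeta\|(\gamma)}$; writing $k := \|\zeta\|(\gamma) \in \mathbf{B}$, the Boolean identity $k \vee \overline{k} = 1 = \|true\|(\gamma)$ finishes it. Statement (3) is identical up to replacing $\oplus$ by $\otimes$: since $\|\zeta \otimes \neg\zeta\|(\gamma) = \|\zeta\|(\gamma) \wedge \overline{\|\zeta\|(\gamma)}$ (using $\zeta \otimes \zeta' = \neg(\neg\zeta \oplus \neg\zeta')$ together with the De Morgan laws and involution), the identity $k \wedge \overline{k} = 0 = \|false\|(\gamma)$ applies. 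Both are essentially immediate.

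For the two statements involving the fPIL formula $\varphi$, I would first handle (2). Here $\varphi \sovee \, !\varphi$ is itself a fPIL formula, so its fPCL semantics on $\gamma$ is $\bigwedge_{\alpha \in \gamma} \|\varphi \sovee \, !\varphi\|(\alpha) = \bigwedge_{\alpha \in \gamma}\bigl(\|\varphi\|(\alpha) \vee \overline{\|\varphi\|(\alpha)}\bigr)$. Each factor equals $1$ by the Boolean law, and the meet of $1$'s over the nonempty set $\gamma$ is again $1 = \|true\|(\gamma)$. Statement (1) is the one that needs a small extra idea: $\otimes$ is the fPCL conjunction, not the fPIL one, so I cannot immediately push it inside the meet over $\alpha$. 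The natural fix is to invoke Proposition \ref{otimes_i}, which gives $\varphi \otimes \, !\varphi \equiv \varphi \sowedge \, !\varphi$; then $\|\varphi \sowedge \, !\varphi\|(\gamma) = \bigwedge_{\alpha \in \gamma}\bigl(\|\varphi\|(\alpha) \wedge \overline{\|\varphi\|(\alpha)}\bigr) = \bigwedge_{\alpha \in \gamma} 0 = 0 = \|false\|(\gamma)$, using that $\gamma \neq \emptyset$.

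The only place where any care is required is (1): one must resist computing $\|\varphi \otimes \, !\varphi\|(\gamma) = \|\varphi\|(\gamma) \wedge \|!\varphi\|(\gamma)$ directly, because $\|!\varphi\|(\gamma) = \bigwedge_{\alpha}\overline{\|\varphi\|(\alpha)}$ is \emph{not} the complement of $\|\varphi\|(\gamma) = \bigwedge_{\alpha}\|\varphi\|(\alpha)$. Routing through Proposition \ref{otimes_i} to interchange $\otimes$ with the interaction-level $\sowedge$ sidesteps this cleanly; alternatively one could simply note that for any fixed $\beta \in \gamma$ the product $(\bigwedge_\alpha \|\varphi\|(\alpha)) \wedge (\bigwedge_\alpha \overline{\|\varphi\|(\alpha)}) \le \|\varphi\|(\beta) \wedge \overline{\|\varphi\|(\beta)} = 0$. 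Neither route is difficult, so I expect no genuine obstacle beyond this bookkeeping distinction between the fPIL and fPCL conjunctions.
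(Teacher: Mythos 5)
Your proof is correct and follows the same route the paper intends: the paper's own proof is just the one-line remark ``by the properties of the Boolean algebra,'' and your argument is the natural elaboration, reducing each claim to $k\wedge\overline{k}=0$ and $k\vee\overline{k}=1$ on the semantics. Your observation about case (1) --- that $\Vert !\varphi\Vert(\gamma)$ is not the complement of $\Vert\varphi\Vert(\gamma)$, so one must pass through Proposition \ref{otimes_i} (or argue pointwise over $\alpha\in\gamma$) rather than apply the Boolean law at the configuration level --- is exactly the one nontrivial point, and you handle it correctly.
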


\begin{proof}
	By the properties of the Boolean algebra we get the proposition. 
\end{proof}

Let $P$ be a finite set of ports. By the previous results we get that $p \ \otimes \ !p \equiv_{\mathbf{2}} false $ but $p \ \otimes \ !p \not \equiv_{\mathbf{3}} false$ for every $p\in P.$

\section{Examples}

\begin{figure}[t]
	\begin{center}
		\begin{tikzpicture}[scale=0.6]
		
		%first peer
		\filldraw[color=black!60, fill=black!5, very thick](0,0) circle (0.8);
		\node at (0,0.28) { $C_1$};
		
		\filldraw[color=black!60, fill=white!5, very thick](-0.4,-0.3) circle (0.3);
		\node at (-0.4,-0.3) { $r_1$};
		\draw[fill] (-0.6,-0.5) circle [radius=2pt];
		
		\filldraw[color=black!60, fill=white!5, very thick](0.4,-0.3) circle (0.3);
		\node  at (0.4,-0.3) { $s_1$};
		\draw[fill]  (0.6,-0.5) circle [radius=2pt];
		
		%second peer
		\filldraw[color=black!60, fill=black!5, very thick](-2.5,-2.5) circle (0.8);
		\node at (-2.8,-2.5) { $C_2$};
		
		\filldraw[color=black!60, fill=white!5, very thick](-2.1,-2.8) circle (0.3);
		\node at (-2.1,-2.8) { $r_2$};
		\draw[fill] (-1.9,-3) circle [radius=2pt];
		
		\filldraw[color=black!60, fill=white!5, very thick](-2.25,-2.05) circle (0.3);
		\node at (-2.25,-2.05) { $s_2$};
		\draw[fill] (-2.1,-1.8) circle [radius=2pt];

		%third peer
		\filldraw[color=black!60, fill=black!5, very thick](2.5,-2.5) circle (0.8);
		\node at (2.8,-2.5) { $C_3$};
		
		\filldraw[color=black!60, fill=white!5, very thick](2.1,-2.8) circle (0.3);
		\node at (2.1,-2.8) { $s_3$};
		\draw[fill] (1.9,-3) circle [radius=2pt];
		
		\filldraw[color=black!60, fill=white!5, very thick](2.25,-2.05) circle (0.3);
		\node at (2.25,-2.05) { $r_3$};
		\draw[fill] (2.1,-1.8) circle [radius=2pt];
		
		%fourth peer
		\filldraw[color=black!60, fill=black!5, very thick](0,-4) circle (0.8);
		\node at (0,-4.3) { $C_4$};
		
		\filldraw[color=black!60, fill=white!5, very thick](-0.4,-3.7) circle (0.3);
		\node at (-0.4,-3.7) { $s_4$};
		\draw[fill] (-0.6,-3.5) circle [radius=2pt];
		
		\filldraw[color=black!60, fill=white!5, very thick](0.4,-3.7) circle (0.3);
		\node  at (0.4,-3.7) { $r_4$};
		\draw[fill] (0.6,-3.5) circle [radius=2pt];
		
		%lines
		
		\draw (0.6,-0.5)--(0.6,-3.5);
		
		\draw (0.6,-0.5)--(2.1,-1.8);
		
		\draw (-2.1,-1.8) --(2.1,-1.8);
		\draw (-2.1,-1.8) --(-0.6,-0.5);

		\end{tikzpicture}
	\end{center}
	\caption{Peer-to-peer architecture.}
	\label{p2p:image}
\end{figure}

\begin{example}[Peer-to-Peer architecture]

	Peer-to-peer architecture (P2P for short) is a commonly used computer networking architecture. All peers in the architecture have some available resources, such as processing power and network bandwidth (cf. \cite{p2p_def}). Those resources are available to the other peers that participate in the architecture without the need of a central component to coordinate their interactions. This is not the case in Request/Response architecture where a coordinator is needed (cf. \cite{Ma:Co}). All peers in the architecture are both suppliers and consumers of their resources and so there is no distinction between them.

	In our example, we consider four components $C_1, C_2, C_3$ and $C_4$ (Figure \ref{p2p:image}). Every component has two ports denoted by $r$ and $s$ which represent, respectively, the functions receive and send. Let $J=\{1,2,3,4\}$. So, the set of ports is $P= \bigcup_{j\in J}\{r_j,s_j\}$. Each component can receive and send information to as many other components in the architecture except from itself. One possible architecture scheme is shown in Figure \ref{p2p:image}. In the sequel we construct a fPCL formula which describes the P2P architecture with four components with a fPCL formula.

	Firstly, let two distinct components $C_j$ and $C_{j^\prime}$ where $j,j^\prime \in J$. The interaction between $C_j$ and $C_{j^\prime}$ where $C_j$ rceives information from $C_{j^\prime}$, is characterised by the following \emph{fPIL} formula
	\[  \varphi_{j,j^\prime} =  r_j\otimes s_{j^\prime } \otimes ! s_j \otimes ! r_{j^\prime} \otimes \bigotimes_{j^{\prime \prime} \in J \backslash\{ j,j^\prime \} } \left( ! r_{j^{\prime \prime}} \otimes ! s_{j^{\prime \prime}} \right)  \]
	\noindent Next, as it was mentioned above, $C_j$ can receive information from more than one components. Let $J^\prime \subseteq J\backslash \{j \} $. The interactions between $C_{j}$ and $C_{j^\prime }$, where $j^\prime \in J^\prime$ are characterized by $  \zeta_{j, J^\prime} = \biguplus_{j^\prime \in J^\prime} \varphi_{j,j^\prime}.$ However, $J^\prime $ can be any non empty subset of $J\backslash\{ j \}$. So the \emph{fPCL} formula 
	\[  \zeta_j  = \bigoplus_{J^\prime \in \mathcal{P}(J\backslash\{j\})\backslash \{ \emptyset\} }  \zeta_{j, J^\prime} \]
	\noindent, where $\mathcal{P}(J\backslash\{j\})$ denotes the power set of $J\backslash\{j\}$, describes all possible architecture schemes between $C_j$ and the rest components in the architecture. Lastly, some components may not interact at all with the others. Therefore, the \emph{fPCL} formula
	\[ \zeta =  \bigoplus_{J^{\prime\prime} \in \mathcal{P}(J)\backslash\{ \emptyset \}} \biguplus_{j \in J^{\prime\prime}} \zeta_{j} \]
	\noindent describes all possible architecture schemes of the P2P architecture with four components.

	In our example, we consider that every port has a degree of uncertainty. Let the fuzzy algebra and a configuration set $\gamma\in fC(P,\mathbf{F})$. For every $\alpha\in fI(P,\textbf{F})$ the value $\alpha(p)$ represents the degree of uncertainty of the port $p\in P.$ If $\alpha(p) = 0$ then the port has an absolute uncertain bahavior. If $\alpha(p) = 1 $ the port will participate with no uncertainty, i.e., it will participate with no fault in its behavior. Then the value $\left\Vert \sim \zeta \right\Vert(\gamma)$ gives the maximum uncertainty that can occur in the architecture considering the given interactions of $\gamma$.

\end{example}

\begin{example}
	We recall from \cite{Ma:Co} the Master/Slave architecture for two masters $M_1, M_2$ and two slaves $S_1, S_2$ with ports $m_1, m_2$ and $s_1, s_2$, respectively. Masters can interact only with slaves, and vice versa. Each slave can interact with only one master.
	
	As it was mentioned in the Introduction, software architectures have a degree of uncertainty. We show how we can compute the uncertainty of the Master/Slave architecture over a finite number of components. We consider the fuzzy algebra and the set of ports $P=\{s_1, m_1, s_2, m_2 \}$.  
	
	Next, we construct the \emph{fPCL} formula which describes the architecture. The interaction between a master $m \in \{m_1, m_2\}$ and a slave $s\in \{s_1, s_2\}$ is described by the \emph{fPIL} formula 
	\[ \varphi_{s,m} = s\otimes m\otimes \ ! s^\prime \otimes \ ! m^\prime \]
	\noindent  where $s^\prime \not = s$ and $m^\prime \not = m$. Moreover, as it was mentioned above, every master can interact with only one slave, and vice versa. Hence, the \emph{fPCL} formula 
	\[ \zeta = \left(  \varphi_{s_1,m_1} \oplus \varphi_{s_2, m_1} \right) \uplus \left(  \varphi_{s_1,m_2} \oplus \varphi_{s_2, m_2} \right) \]
	\noindent describes the Master /Slave architecture with two masters and two slaves. Let $\gamma\in fC(P,K_\mathbf{F})$ be a set  of estimations of uncertainty of the ports in the architecture and the \emph{fPCL} formula $\sim \zeta$. The value $\left\Vert \sim \zeta \right\Vert (\gamma)$ gives the maximum value among the values that represent the maximum uncertainty among the architecture patterns.

\end{example}

\section{Normal Form and Decidability of Equivalence} \label{section_normal_form}

In this section we examine the decidability of equivalence of fPCL formulas. Let $\zeta $ and $\zeta^\prime$ be fPCL formulas over the set of ports $P=\{p,q,r\}$ and the fuzzy algebra $\mathbf{F}$. By Definition \ref{fpcl_equiv}, $\zeta \equiv_{\textbf{F}} \zeta^\prime$ if $\left\Vert \zeta \right\Vert(\gamma) = \left\Vert\zeta^\prime \right\Vert(\gamma)$ for every $\gamma\in fC(P,\mathbf{F})$. However, the set $fC(P,\mathbf{F})$ is infinite and so it is impossible to check the equivalence by the previous way. This is not the case for fPCL formulas over De Morgan algebras with finite set $K$ such as the two element Boolean algebra and the three element Kleene algebra. However, if we prove that two fPCL formulas have the same normal form, then they are equivalent.

In the sequel, we show that every fPCL formula over $P$ and a Kleene algebra $K_\mathbf{3}$, can be equivalently written in a normal form. Consequently, we show that the equivalence problem for fPCL formulas over $P$ and a Kleene algebra $K_\mathbf{3}$ is decidable. In the following, we give some useful definitions for the definition of the normal form of our fPCL formulas.  

\begin{definition}
	Let $P$ be a set of ports. A \emph{fPIL} formula $\varphi $ is called f-monomial if it is of the form 
	$$ \varphi = \underset{p_1\in P_1}{\bigowedge} p_1 \sowedge \underset{p_2\in P_2}{\bigowedge} ! p_2.$$
	\noindent where $P_1, P_2 \subseteq P $ and $P_1\cup P_2 \not = \emptyset$.
\end{definition}

Following the previous definition, $P_1 \cap P_2 $ can be either empty or not. Consider $P=\{ p,q,r \}$ be a set of ports. The fPIL formulas $p  \sowedge   ! p  \sowedge  ! q$ and $p\sowedge r$ are f-monomials.

\begin{definition}
	Let $P$ be a set of ports and $K$ a De Morgan algebra. A \emph{fPIL} formula $\varphi$ is said to be in $fpil$-normal form if it is of the form
	\begin{enumerate}[$(1)$]
		\item $\varphi \ \dot{\equiv} \ \underset{i\in I}{\bigovee}  \ \varphi_i $, where $I$ is a finite index set, $\varphi_i$ is a f-monomial for every $i\in I$ and $\varphi_i \dot{\not \equiv} \varphi_{i^\prime}$ for every $i, i^\prime \in I$ with $i\not = i^\prime$, or
		
		\item $\varphi \ \dot{\equiv} \ true$, or
		\item $\varphi \ \dot{\equiv} \ false.$
	\end{enumerate}

\end{definition}

\begin{definition}
	Let $P$ be a finite set of ports and $K$ a De Morgan algebra. A \emph{fPCL} formula $\zeta$ over $P$ and $K$ is said to be in normal form if it is of the following form:
	\begin{enumerate}[$(1)$]
		\item $\zeta = \bigoplus_{i\in I}\biguplus_{j\in J_i} \varphi_{i,j} $, where $I, J_i$ are finite index sets for every $i\in I$ and $\varphi_{i,j}\not \equiv false$ is in $fpil$-normal form for every $i\in I$ and $j\in J_i$, or
		\item $\zeta = true$, or
		\item $\zeta=false.$
	\end{enumerate}
\end{definition}

By Propositions \ref{absorpt_pil}, \ref{absorpt_fpcl}, \ref{pil_prop},  for every fPCL formula in normal form we can construct its equivalent one in normal form satisfying the following statements: 
\begin{enumerate}[(1)]
	\item  Let $i\in I$. Then $\varphi_{i, j} \not \equiv \varphi_{i, j^\prime}$ for every $j \not = j^\prime$.
	
	\item Let $i,i^\prime \in I$ with $i\not = i^\prime$. Then $\biguplus_{j\in J_i} \varphi_{i,j} \not \equiv \biguplus_{j\in J_{i^\prime}} \varphi_{i^\prime,j}$.
\end{enumerate}

\begin{quotation}
	In the sequel, every fPCL formula in normal form is considered to satisfy the above statements.
\end{quotation}

Next, we present our results on the existence and the construction of the normal form fPCL formulas. But first, we need to note a very important observation. For this we give the following example.

\begin{example}\label{example_equiv}
	Let $P=\{ p,q \}$ be a set of ports and the \emph{fPIL} formulas $\varphi = p \ \sowedge \ ! p$ and $\varphi^\prime  = \left( p \ \sowedge \ ! p \sowedge q  \right) \sovee \left( p \ \sowedge \ ! p \sowedge \ ! q  \right) $. Those two formulas are in normal form. Considering an arbitrary De Morgan algebra we get that $\varphi\not \equiv \varphi^\prime$ since their normal forms are not equivalent. However, we prove they are equivalent over the fuzzy algebra. For this
	\begin{align*}
	\varphi^\prime  & = \left( p \ \sowedge \ ! p \sowedge q  \right) \sovee \left( p \ \sowedge \ ! p \ \sowedge \ ! q  \right) \\ & \equiv_{\textbf{F}} \left( p \ \sowedge \ ! p  \right) \sowedge \left( q   \sovee  !q \right) \\ & \equiv_{\textbf{F}} p \ \sowedge \ ! p   = \varphi 
	\end{align*}

	\noindent where the first equivalence holds since $\sowedge $ distributes over $\sovee$ and the second one by Proposition \ref{kleene_ports}. We conclude that $\varphi \equiv_{\textbf{F}} \varphi^\prime $. Analogously, we prove that $\varphi \equiv_{K_\textbf{3}} \varphi^\prime$. Also, $\varphi \equiv_{\textbf{B}} \varphi^\prime $ since both formulas are equivalent to $false$ over a Boolean algebra. However, $\varphi$ and $\varphi^\prime$ are not equivalent if we consider the four element algebra $\textbf{4}$. Let $\gamma=\{ \alpha \} \in fC(P,\textbf{4})$, where $\alpha(p) = u$ and $\alpha(q)=w$. Then $\left\Vert\varphi \right \Vert (\gamma) = u \not = 0 = \left\Vert \varphi ^\prime \right\Vert(\gamma)$. So $\varphi \not \equiv_{\textbf{4}} \varphi^\prime $. 
\end{example}

By Example \ref{example_equiv}, we observe that for the construction of the normal form of a fPCL formula we need to take into account the properties of the De Morgan algebra. In the following, we show that for every fPCL formula over $P$ and a Kleene algebra, we can effectively construct its equivalent fPCL formula in normal form.

\begin{theorem}\label{kleene_normal_form_theorem}
	Let $P$ be a set of ports and $K$ an arbitrary De Morgan algebra. Then for every \emph{fPCL} formula $\zeta_1 \in fPCL(K,P)$, $\zeta_2 \in fPCL(K_{\mathbf{3}},P)$ and $\zeta_3 \in fPCL(\mathbf{B},P)$, we can effectively construct an equivalent \emph{fPCL} formula $\zeta_1^\prime\in fPCL(K,P)$, $\zeta_2^\prime\in fPCL(K_{\mathbf{3}},P)$ and $\zeta_3^\prime\in fPCL(\mathbf{B},P)$, respectively, in normal form. The time complexity of the construction is polynomial.	
\end{theorem}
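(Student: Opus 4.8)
The plan is to induct on the structure of the fPCL formula according to the grammar $\zeta ::= \varphi \mid \neg \zeta \mid \zeta \oplus \zeta \mid \zeta \uplus \zeta$, carrying out the same procedure for all three algebras and invoking the algebra-specific identities of Proposition \ref{kleene_ports} (for $K_{\mathbf 3}$) and Proposition \ref{boolean_prop} (for $\mathbf B$) only at the points where they are available. For the base case $\zeta = \varphi$ I would first bring the fPIL formula $\varphi$ into $fpil$-normal form by a sub-induction: push each $!$ down to the ports using Proposition \ref{neg_i_oplus} and the involution $!!\varphi \ \dot{\equiv}\ \varphi$ of Proposition \ref{pil_true_false}(5), so that only literals $p$ and $!p$ occur under the connectives; distribute $\sowedge$ over $\sovee$ by Proposition \ref{otimes_over_oplus_i} to reach a disjunction $\bigovee_{i} \varphi_i$ of f-monomials; and then discard repeated monomials and collapse constants with the idempotency and absorption laws of Proposition \ref{absorpt_pil} and with Proposition \ref{pil_true_false}, ending in one of $true$, $false$, or $\bigovee_i \varphi_i$. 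Such a formula is already in fPCL normal form, via the singleton index sets $I = J_\ast = \{\ast\}$ with $\varphi_{\ast,\ast} = \bigovee_i \varphi_i$ (the constants mapping to cases (2),(3)).

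The two additive cases are light. For $\zeta = \zeta_1 \oplus \zeta_2$ I would normalize $\zeta_1,\zeta_2$ by the inductive hypothesis and merge the two outer $\bigoplus$'s into one, deleting $\oplus$-summands that are equivalent by Proposition \ref{absorpt_fpcl}(2). For $\zeta = \zeta_1 \uplus \zeta_2$ I would distribute $\uplus$ over $\oplus$ by Proposition \ref{uplus_over_oplus}, obtaining $\bigoplus_{i,i'}\big((\biguplus_{j}\varphi_{i,j})\uplus(\biguplus_{j'}\varphi'_{i',j'})\big)$, and then fuse each inner pair of coalescings into a single $\biguplus$ by associativity; the result is again of normal-form shape.

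The heart of the argument is the multiplicative case, which I would isolate as a lemma stating that the $\otimes$ of two normal forms is again normalizable, and then reuse for $\neg$. Given $\zeta_1 \otimes \zeta_2$ with both factors in normal form, distribute $\otimes$ over $\oplus$ by Proposition \ref{otimes_over_oplus} to reduce to products $(\biguplus_{j}\varphi_{i,j}) \otimes (\biguplus_{j'}\varphi'_{i',j'})$ of single coalescings. For each such product Proposition \ref{pil_coal_conj} rewrites it as $\sim\!\big(\biguplus_{j}\varphi_{i,j} \uplus \biguplus_{j'}\varphi'_{i',j'}\big) \otimes \chi$, where $\chi = \bigovee_{(j,j')}(\varphi_{i,j}\sowedge\varphi'_{i',j'})$ is an fPIL formula that I renormalize to $fpil$-normal form. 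Writing the first factor as the coalescing $(\biguplus_{l}\psi_l)\uplus true$ (its $\psi_l$ being the $\varphi_{i,j}$ and $\varphi'_{i',j'}$), I distribute the fPIL formula $\chi$ through it by Proposition \ref{otimes_distib_coale}, use $\chi \otimes true \equiv \chi$ from Proposition \ref{absorpt_fpcl}(6), and collapse each $\chi \otimes \psi_l \equiv \chi \sowedge \psi_l$ by Proposition \ref{otimes_i}; this turns the whole product into one $\biguplus$ of $fpil$-normal formulas. Finally, for $\zeta = \neg\zeta_1$ I would push the negation inward through the outer $\bigoplus$ by Proposition \ref{neg}(1), rewrite each $\neg\biguplus_{j}\varphi_{i,j}$ by Proposition \ref{pil_coal_neg} into the normal-form block $\bigoplus_{j}(!\varphi_{i,j}) \oplus \sim\!\big(\bigowedge_{j}!\varphi_{i,j}\big)$, and then reduce the remaining $\bigotimes_i$ of normal forms by iterating the multiplicative lemma.

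Two points then remain. The algebra-dependence (illustrated in Example \ref{example_equiv}) is handled by inserting the extra identities during the monomial-cleanup passes: over $K_{\mathbf 3}$ I would apply Proposition \ref{kleene_ports} to absorb factors $q \sovee !q$ against a factor $p \otimes !p$, and over $\mathbf B$ I would apply Proposition \ref{boolean_prop} to replace $\varphi \otimes !\varphi$ by $false$ and $\varphi \sovee !\varphi$ by $true$; this is precisely what forces the three listed formulas $\zeta_1,\zeta_2,\zeta_3$ to receive separate, though parallel, treatment. The point I expect to be the real obstacle is the claimed polynomial time bound: the multiplicative and negation cases can multiply the number of $\oplus$-summands and inflate each $\biguplus$, so a naive reading gives exponential blow-up as in DNF conversion. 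I would therefore measure the cost against the combined size of input and output, keep every intermediate representation reduced (by the idempotency and absorption laws, and by removing equivalent summands through the two reduced-form conditions following the normal-form definition), and bound the number of syntactically distinct f-monomials over $P$, so that each rewriting pass stays polynomial in the size of the maintained reduced form.
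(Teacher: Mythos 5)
Your proposal follows essentially the same route as the paper: structural induction on $\zeta$, with the base case handled by pushing $!$ inward and distributing $\sowedge$ over $\sovee$, the $\uplus$ case by Proposition \ref{uplus_over_oplus}, the $\otimes$ case by Propositions \ref{otimes_over_oplus}, \ref{pil_coal_conj}, \ref{otimes_distib_coale} and \ref{absorpt_fpcl}, the $\neg$ case by Propositions \ref{neg} and \ref{pil_coal_neg}, and the algebra-specific cleanup via Proposition \ref{kleene_ports} over $K_{\mathbf 3}$ and Proposition \ref{boolean_prop} over $\mathbf B$. Your scepticism about the polynomial-time claim is the one place you go beyond the paper, which simply asserts that the distributivity and idempotency steps run in polynomial time without addressing the DNF-style blow-up you identify; your suggestion to measure cost against the size of the maintained reduced representation is a reasonable way to make that claim precise.
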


\begin{proof}
	We prove our theorem by induction on the structure of fPCL formulas over $P$ and a Kleene algbera $K_{\mathbf{3}}$. We deal with the other cases at the end of this proof. 
	
	Let $\zeta=\varphi \in fPIL(K_{\mathbf{3}}, P)$. If $\zeta$ is equal to $true$ or $false$, then we are done. Otherwise, by Propositions \ref{neg_i_oplus}, \ref{pil_true_false}, \ref{fpil_associa}, \ref{otimes_over_oplus_i} and \ref{absorpt_pil} we get its equivalent formula in $fpil$-normal form. Then we go to Step 2(2) and by Propositions \ref{otimes_over_oplus_i} and \ref{absorpt_pil} we get its equivalent formula of $\zeta$ in normal form.

	Now, let  $\zeta_1, \zeta_2$  be fPCL formulas and assume that both $\zeta_1$ and $\zeta_2$ are not equivalent to $true$ or $false$. Those cases can be treated analogously to the cases we show below and the properties of the De Morgan algebra. Consider $\zeta_1^\prime= \bigoplus_{i_1\in I_1}\biguplus_{j_1\in J_{i_1}} \varphi_{i_1,j_1}$ and $ \zeta_2^\prime = \bigoplus_{i_2\in I_2}\biguplus_{j_2\in J_{i_2}} \varphi_{i_2,j_2}$ be their equivalent normal forms, respectively. Then we go to Step 1.
	
	\begin{flushleft}
		\textbf{\underline{Step 1}}
	\end{flushleft}
	\begin{enumerate}[(1)]
		\item Firstly, let $\zeta=\zeta_1 \oplus \zeta_2$. The formula $\zeta$ is equivalent to $\zeta_1^\prime  \oplus \zeta_2^\prime$ which is of the form $\bigoplus_{i\in I} \biguplus_{j\in J_i} \varphi_{i,j} $ where $\varphi_{i,j}$ is in $fpil$-normal form for every $j\in J_i$. Then we go to Step 2. 
		
		\item Next, let $\zeta = \zeta_1 \uplus \zeta_2.$ Then 
		\begin{align*}
		\zeta &  \equiv \zeta_1^\prime \uplus \zeta_2^\prime \\ & \equiv \left( \bigoplus_{i_1\in I_1}\biguplus_{j_1\in J_{i_1}} \varphi_{i_1,j_1} \right) \uplus \left( \bigoplus_{i_2\in I_2}\biguplus_{j_2\in J_{i_2}} \varphi_{i_2,j_2}  \right) \\ & \equiv  \bigoplus_{i_1\in I_1} \bigoplus_{i_2\in I_2} \left( \biguplus_{j_1\in J_{i_1}} \varphi_{i_1,j_1}  \uplus \biguplus_{j_2\in J_{i_2}} \varphi_{i_2,j_2} \right)
		\end{align*}
		
		\noindent where the last equivalence holds by Proposition \ref{uplus_over_oplus}. Then we go to Step 2.

		\item Now, let $\zeta=\zeta_1 \otimes \zeta_2.$ Then we get 
		\begin{align*}
		\zeta  & \equiv \zeta_1^\prime \otimes \zeta_2^\prime  \\ & \equiv \left( \bigoplus_{i_1\in I_1}\biguplus_{j_1\in J_{i_1}} \varphi_{i_1,j_1} \right) \otimes \left( \bigoplus_{i_2\in I_2}\biguplus_{j_2\in J_{i_2}} \varphi_{i_2,j_2}  \right)  \\ & \equiv  \bigoplus_{(i_1, i_2) \in I_1\times I_2} \left(  \left( \biguplus_{j_1\in J_{i_1}} \varphi_{i_1,j_1} \right) \otimes \left( \biguplus_{j_2\in J_{i_2}} \varphi_{i_2,j_2} \right) \right) \\ & \equiv \bigoplus_{(i_1, i_2) \in I_1\times I_2}  \left( \left( \sim \biguplus_{j_1\in J_{i_1}} \varphi_{i_1,j_1} \uplus \biguplus_{j_2\in J_{i_2}} \varphi_{i_2,j_2}   \right) \otimes \right. \\ & \hspace*{5cm} \left.\underset{(j_1,j_2)\in J_{i_1}\times J_{i_2}}{\bigovee} \left( \varphi_{i_1,j_1} \sowedge \varphi_{i_2,j_2} \right) \right)
		\end{align*}
		
		\noindent where the third equivalence holds by Proposition \ref{otimes_over_oplus} and the fourth one by Proposition \ref{pil_coal_conj}. Consider the fPIL formula $\varphi_{(i_1, i_2)} = \underset{(j_1,j_2)\in J_{i_1}\times J_{i_2}}{\bigovee} $ $\left( \varphi_{i_1,j_1} \sowedge \varphi_{i_2,j_2} \right)$ for every $(i_1, i_2) \in I_1\times I_2$. Then by Propositions \ref{absorpt_fpcl} and \ref{otimes_distib_coale} we get 
		\begin{align*}
		\zeta    &  \equiv \bigoplus_{(i_1, i_2) \in I_1\times I_2}   \left( \biguplus_{j_1\in J_{i_1}} \left(\varphi_{i_1,j_1} \otimes \varphi_{(i_1, i_2)}\right) \uplus \biguplus_{j_2\in J_{i_2}} \left(\varphi_{i_2,j_2} \otimes \varphi_{(i_1, i_2)} \right) \uplus \right.\\ & \left. \hspace*{8cm}\left(\varphi_{(i_1, i_2)} \otimes true\right)  \right) \\ & \equiv \bigoplus_{(i_1, i_2) \in I_1\times I_2}   \left( \biguplus_{j_1\in J_{i_1}} \left(\varphi_{i_1,j_1} \sowedge \varphi_{(i_1, i_2)}\right) \uplus \biguplus_{j_2\in J_{i_2}} \left(\varphi_{i_2,j_2} \sowedge \varphi_{(i_1, i_2)} \right) \uplus \varphi_{(i_1, i_2)}  \right) .
		\end{align*}
		
		\noindent Then we go to Step 2.

		\item Let us assume that $\zeta = \neg \zeta_1$. Then 
		\begin{align*}
		\zeta &  \equiv \neg \zeta_1^\prime \\ & \equiv \neg \left( \bigoplus_{i_1\in I_1}\biguplus_{j_1\in J_{i_1}} \varphi_{i_1,j_1} \right) \\ & \equiv \bigotimes_{i_1\in I_1}\left(\neg \left(\biguplus_{j_1\in J_{i_1}} \varphi_{i_1,j_1} \right)\right) \\ & \equiv \bigotimes_{i_1\in I_1}\left( \bigoplus_{j_1\in J_{i_1}} \left(  ! \varphi_{i_1, j_1} \right) \oplus \sim \left( \bigowedge_{j_1\in J_{i_1}}  ! \varphi_{i_1. j_1}  \right) \right)
		\end{align*}
		
		\noindent where the third equivalence holds by Proposition \ref{neg} and the fourth one by Proposition \ref{pil_coal_neg}. In the sequel by applying Propositions \ref{neg_i_oplus}, \ref{otimes_over_oplus_i}, \ref{otimes_over_oplus}, \ref{otimes_distib_coale} and \ref{pil_coal_conj} we get a formula of the form $\bigoplus_{i\in I} \biguplus_{j\in J_i} \varphi_{i,j} $ where $\varphi_{i,j} = \underset{k\in K_{i,j}}{\bigovee} \varphi_{i,j,k} $ and $\varphi_{i,j,k}$ is a f-monomial for every $k\in K_{i,j}$ and $j\in J_i$.	Next, we proceed to Step 2.

	\end{enumerate}
	
	\begin{flushleft}
		\textbf{\underline{Step 2}}
	\end{flushleft}
	
	Let a formula of the form $\bigoplus_{i\in I} \biguplus_{j\in J_i} \varphi_{i,j} $ where the formula $\varphi_{i,j} = \bigovee_{k\in K_{i,j}} \varphi_{i,j,k} $ and $\varphi_{i,j,k}$ is a f-monomial for every $k\in K_{i,j}$ and $j\in J_i$. In order to get its equivalent formula in normal form we apply the following.
	\begin{enumerate}[(1)]
		\item Firstly, we apply Propositions \ref{pil_true_false}, \ref{absorpt_pil}, \ref{absorpt_fpcl} and \ref{pil_prop} (1) in order to discard any repetitions when the operations allow it. So we get a formula of the form $\bigoplus_{i^\prime\in I^\prime} \biguplus_{j^\prime\in J^\prime_{i^\prime}} \varphi_{i^\prime,j^\prime} $ where $\varphi_{i^\prime,j^\prime}= \bigoplus_{k^\prime\in K_{i^\prime,j^\prime}^\prime} \varphi_{i^\prime,j^\prime,k^\prime}$ is in $fpil$-normal form for every $(i^\prime, j^\prime) \in I^\prime\times J^\prime_{i^\prime}$.
		
		\item  Next, since we consider a Kleene algebra, we apply Proposition \ref{kleene_ports}. Let for instance a f-monomial $\varphi$ of the following form:
		\[  \varphi = \bigotimes_{p_1\in P_1} \left( p_1\otimes ! p_1 \right)\otimes \bigotimes_{p_2\in P_2} p_2 \otimes \bigotimes_{p_3\in P_3} ! p_3    \]

		\noindent where the sets $P_1, P_2, P_3 \in P$ are pairwise disjoint.   Then we consider the set $P^\prime = P\backslash (P_1\cup P_2\cup P_3) $ and by Proposition \ref{kleene_ports} we get the following: 
		\[ \varphi \equiv_{K_\mathbf{3}} \bigotimes_{p_1\in P_1} \left( p_1\otimes ! p_1 \right)\otimes \bigotimes_{p_2\in P_2} p_2 \otimes \bigotimes_{p_3\in P_3} ! p_3 \otimes \bigotimes_{p\in P^\prime } \left( p  \sovee   ! p \right).  \]
		
		\noindent We follow the above procedure for every f-monomial $\varphi$ of the form $ \bigotimes_{p_1\in P_1} \left( p_1\otimes ! p_1 \right)\otimes \bigotimes_{p_2\in P_2} p_2 \otimes \bigotimes_{p_3\in P_3} ! p_3  $. By this step we ``appear" the ports that get eliminated by the property of the Kleene algebra.

		\item Lastly, we apply Propositions \ref{pil_true_false}, \ref{absorpt_pil}, \ref{absorpt_fpcl} and \ref{pil_prop}(1) to discard again any repetitions created by the previous step.
		
	\end{enumerate}
	
	By following the steps given above, we get an equivalent formula in normal form. In order to complete our proof, we need to prove our claim for the time complexity of the algorithm presented above. In every step of our construction, we applied the distribution and idempotency properties of our logic, which are done in polynomial time. This concludes our proof. 
	
	Consider a Boolean algebra and a \emph{fPCL} formula $\zeta\in fPCL(\mathbf{B}, P)$. For the construction of its equivalent \emph{fPCL} formula over $P$ and $\mathbf{B}$ in normal form, we follow the above proof where we replace Step 2(2) with the application of Proposition \ref{boolean_prop}. If $K$ is an arbitrary De Morgan algebra and $\zeta\in fPCL(K,P)$, then for the construction of its equivalent \emph{fPCL} formula over $P$ and $K$ in normal form, we follow the above proof without Steps 2(2) and 2(3). The complexity of those constructions is again polynomial.  
\end{proof}

Next, we prove that the equivalence problem for \emph{fPCL} formulas is decidable.

\begin{theorem}\label{theor_equiv}
	Let $K$ be a De Morgan algebra and $P$ a set of ports. Then, for every $\zeta_1, \zeta_2 \in fPCL(K,P)$ the equivalence $\zeta_1\equiv \zeta_2 $ is decidable. The run time is polynomial.
\end{theorem}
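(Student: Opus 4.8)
The plan is to reduce the equivalence test to a purely combinatorial comparison of normal forms. Given $\zeta_1, \zeta_2 \in fPCL(K,P)$, I would first apply Theorem \ref{kleene_normal_form_theorem} to construct, in polynomial time, equivalent formulas $\zeta_1^\prime, \zeta_2^\prime$ in normal form, selecting the variant of the construction suited to $K$ (with the Kleene step when $K=K_\mathbf{3}$, the Boolean step when $K=\mathbf{B}$, and neither when $K$ is an arbitrary De Morgan algebra). Since $\zeta_i \equiv \zeta_i^\prime$ by construction, we have $\zeta_1 \equiv \zeta_2$ if and only if $\zeta_1^\prime \equiv \zeta_2^\prime$, so it suffices to decide equivalence of two formulas already in normal form.

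Next I would observe that a normal form is a finite $\oplus$-sum of coalescing terms $\biguplus_{j} \varphi_{i,j}$, each of which is a finite $\uplus$-coalescing of $fpil$-normal forms, each of those a finite $\bigovee$-disjunction of f-monomials, and each f-monomial is determined by the pair $(P_1,P_2)$ of port sets from which it is built. Because $\oplus$ and $\sovee$ are idempotent, commutative and associative (Proposition \ref{absorpt_pil}, Proposition \ref{absorpt_fpcl}), as is $\uplus$ on fPIL formulas (Proposition \ref{pil_prop}(1)), and because a normalised formula carries no repeated summands, coalescing factors, or disjuncts, each normal form is faithfully encoded by a finite set of finite sets of finite sets of pairs of subsets of $P$. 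The decision algorithm canonicalises both encodings, by sorting the pairs $(P_1,P_2)$, then the f-monomial sets, then the coalescing terms into a fixed order, and tests the two canonical encodings for literal equality. Since Theorem \ref{kleene_normal_form_theorem} guarantees that the number of components produced is polynomial in the input size and each monomial is comparable in time $O(|P|)$, this comparison runs in polynomial time.

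The correctness of the test rests on the canonicity of the normal form, and establishing this is the main obstacle: I must show that two normal forms are equivalent over the relevant algebra \emph{only if} they are identical up to reordering. The direction from identity to equivalence is immediate, since reordering preserves the semantics by the associativity and commutativity of the fPCL operators. For the converse I would argue contrapositively, exhibiting a distinguishing configuration whenever two normal forms differ structurally, building it layer by layer from the inside out: two distinct f-monomials are separated by a single-interaction configuration $\gamma=\{\alpha\}$ assigning the extremal value to the ports of one monomial while forcing the other to vanish; a disjunct present on one side but absent on the other is then exposed by the same singleton; a coalescing term present on one side is exposed by taking the disjoint union of the separating singletons and invoking the $\uplus$-semantics; and finally a summand present on one side is exposed by the configuration witnessing that summand, since $\oplus$ is interpreted pointwise by $\vee$.

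I expect the delicate point to be the faithfulness of this innermost separation across the three algebra regimes, precisely because, as Example \ref{example_equiv} demonstrates, f-monomials that are distinguishable over $\mathbf{4}$ may collapse over $K_\mathbf{3}$ or $\mathbf{B}$. The proof must therefore verify that the algebra-specific normalisation, namely Step $2(2)$ over $K_\mathbf{3}$ and the corresponding Boolean step over $\mathbf{B}$, has already quotiented out exactly these collapses, so that after it the surviving monomials on both sides mention the same ports and the singleton separation goes through. Combining the two directions shows the canonical-encoding comparison decides $\zeta_1^\prime \equiv \zeta_2^\prime$, hence $\zeta_1 \equiv \zeta_2$; since both the construction and the comparison are polynomial, the whole procedure runs in polynomial time, which is the assertion of the theorem.
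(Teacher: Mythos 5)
Your algorithmic skeleton is essentially the paper's: normalise both formulas via Theorem \ref{kleene_normal_form_theorem}, encode each normal form as a nested finite set of literal-sets (exactly the paper's $S_{\zeta_1^\prime}$ and $S_{\zeta_2^\prime}$), and compare the two encodings in polynomial time. The paper performs the comparison with nested set-equality loops (Algorithm 1) where you propose canonical sorting, but these are interchangeable. So as far as the procedure itself goes, you have reconstructed the paper's proof.

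Where you depart from the paper is in recognising that this comparison decides equivalence only if the normal form is \emph{canonical}, i.e., only if two equivalent normal forms must coincide up to reordering. The paper proves and uses only the easy direction ($S_{\zeta_1^\prime}=S_{\zeta_2^\prime}$ implies $\zeta_1^\prime\equiv\zeta_2^\prime$) and is silent on the converse, so this is a substantive observation about the argument rather than a misreading on your part. However, your own treatment of the converse is a plan, not a proof, and the layer-by-layer separation you sketch faces real obstacles. At the innermost level, over $K_{\mathbf{3}}$ or $\mathbf{F}$ you cannot in general make a monomial containing $p\sowedge \, !p$ ``vanish'' by choosing $\alpha(p)$, since $k\wedge\overline{k}$ need not equal $0$; this is precisely why Example \ref{example_equiv} exhibits syntactically distinct disjunctions of monomials that collapse over $\mathbf{F}$, and it is not verified that Step 2(2) of Theorem \ref{kleene_normal_form_theorem} quotients out \emph{all} such collapses rather than only those generated by Proposition \ref{kleene_ports}. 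At the coalescing level, $\left\Vert \zeta_1\uplus\zeta_2\right\Vert(\gamma)$ ranges over all covers $\gamma=\gamma_1\cup\gamma_2$, not only disjoint ones, so taking the disjoint union of the separating singletons does not by itself isolate one coalescing term from another. Until canonicity is actually established for each algebra regime, your procedure (like the paper's) is shown to be sound but not complete as a test for equivalence; closing that gap is the missing content of the proof.
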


\begin{proof}
	By Theorem \ref{kleene_normal_form_theorem} we can effectively construct fPCL formulas $\zeta_1^\prime , \zeta_2^\prime$ in normal form such that $\zeta_1\equiv\zeta_1^\prime$ and $\zeta_2\equiv\zeta_2^\prime$. In order to prove whether $\zeta_1$ and $\zeta_2$ are equivalent or not, we need to examine if $\zeta_1^\prime \equiv \zeta_2^\prime$ or not. For this, we write our formulas in a form of sets which we compare using Algorithm 1 given in \ref{fig:equiv}.

	Firstly, we consider that $\zeta_1^\prime=\bigoplus_{i\in I}\biguplus_{j\in J_{i}} \varphi_{i,j}$ where $\varphi_{i,j} \not = false$ is in $fpil$-normal form for every $i\in I$ and $j\in J_i$. Hence, $\zeta_1^\prime= \bigoplus_{i\in I}\biguplus_{j\in J_{i}} \bigovee_{k\in K_{i,j}} \varphi_{i,j,k}$ where $\varphi_{i,j,k}$ is a f-monomial for every $k\in K_{i,j}$. If there exists $i\in I$ and $j\in J_i$ such that $\varphi_{i,j} \equiv true$, then $\varphi_{i,j}$ can be written as $\bigovee_{k\in K_{i,j}} true$ where $K_{i,j} = \{1\}$. Analogously, $\zeta_2^\prime \equiv \bigoplus_{m\in M}\biguplus_{n\in N_m} \bigovee_{l\in L_{m,n}}  \varphi_{m,n,l}^\prime$.
	
	Next, for every f-monomial $\varphi_{i,j,k} = \bigowedge_{p\in P_{i,j,k}}p \sowedge \bigowedge_{p\in P_{i,j,k}^\prime} !p $ in $\zeta_1^\prime$, we let the set $S_{i,j,k} = \underset{p\in P_{i,j,k}}{\bigcup} \{p  \} \cup \underset{p\in P_{i,j,k}^\prime}{\bigcup} \{!p\}  $. If $\varphi_{i,j,k} =true$ then $S_{i,j,k} = \{true\}  $. Then the following set 
	\begin{enumerate}[$\bullet$]
		\item $S_{\zeta_1^\prime} =  \underset{i\in I}{\bigcup } \{ \underset{j\in J_i}{\bigcup } \{ \underset{k\in K_{i,j}}{\bigcup } \{ S_{i,j,k} \} \}    \}  $
	\end{enumerate} 
	\noindent represents $\zeta_1^\prime$ in the form of sets. If $\zeta_1^\prime =true$, then $S_{\zeta_1^\prime} = \{ \{ \{ \{  true \} \}  \} \}$ since $I = \{1\}$, $J_1 = \{1\}$, $K_{1,1} = \{1\}$ and $S_{1,1,1}=\{ true \}$. Analogously, if $\zeta_1^\prime =false$, then $S_{\zeta_1^\prime} = \{ \{ \{ \{  false \} \}  \} \}$. Next, we compute the set $S_{\zeta_2^\prime}$ which represents $\zeta_2^\prime $ in the form of sets. We need to note that the representation of a fPCL formula $\zeta = \bigoplus_{i\in I} \biguplus_{j\in J_i} \bigovee_{k\in K_{i,j}}\varphi_{i,j,k}$, which is in normal form, in the form of sets is possible since 
	\begin{enumerate}[(1)]
		\item $\biguplus_{j\in J_i} \bigovee_{k\in K_{i,j}}\varphi_{i,j,k} \not \equiv \biguplus_{j\in J_{i^\prime}} \bigovee_{k\in K_{i^\prime,j}}\varphi_{i^\prime,j,k}$ for every $i,i^\prime \in I$ with $i\not = i^\prime$, 
		\item $ \bigovee_{k\in K_{i,j}}\varphi_{i,j,k} \not \equiv \bigovee_{k\in K_{i,j^\prime}}\varphi_{i,j,k} $ for every $j, j^\prime\in J_i$ with $j\not = j^\prime$, and 
		\item $ \varphi_{i,j,k} \not \equiv \varphi_{i,j,k^\prime} $ for every $k, k^\prime \in K$ with $k\not = k^\prime.$
	\end{enumerate}
	
	In order to prove if $\zeta_1^\prime$ and $\zeta_2^\prime $ are equivalent, we need to examine if the sets $S_{\zeta_1^\prime}$ and $S_{\zeta_2^\prime}$ are equal. For this we give Algorithm 1 given in \ref{fig:equiv}. Given the sets $S_{\zeta_1^\prime}$ and $S_{\zeta_2^\prime}$ as inputs for Algorithm 1, we can decide whether $S_{\zeta_1^\prime}=S_{\zeta_2^\prime}$ or not.

	As for the time complexity of the algorithm, we prove that is polynomial. The construction of the sets $S_{\zeta_1^\prime} $ and $S_{\zeta_2^\prime} $ is done in polynomial time. As for the algorithm in \ref{fig:equiv}, we observe that there are eight nested for loops. Let that the variable of the $i$-th for loop, where $i\in \{1, \dots, 8\}$, ranges from $1$ to $n_i \in \mathbb{N}^*$. So, the number of computations in total are $n_1\cdot \ldots \cdot n_8$. Let $n=\max\{ n_1, \dots, n_8 \}$. Then $n_1\cdot \ldots \cdot n_8\leq n^8$ and the complexity is $ \mathcal{O}\left(n_1\cdot \ldots \cdot n_8\right) = \mathcal{O}\left( n^8\right).$ Hence, considering the complexity of the construction of the normal forms of $\zeta_1$ and $\zeta_2$, we conclude that the run time of the equivalence problem is polynomial. 
\end{proof}

\begin{remark}
	Let $\zeta_1, \zeta_2 \in fPCL(K,P)$. By Theorem \ref{theor_equiv} we can decide whether $\zeta_1\equiv \zeta_2$ or not. If $\zeta_1\equiv \zeta_2$ then $\zeta_1\equiv_{K_{con}} \zeta_2$ for every $K_{con}$ De Morgan algebra. However, as shown in Example \ref{example_equiv}, it is possible that $\zeta_1\equiv_{K_\mathbf{3}} \zeta_2$ but $\zeta_1\not \equiv \zeta_2$. Hence, if $\zeta_1 \not \equiv \zeta_2$, then we can examine whether $\zeta_1 \equiv_{K_\mathbf{3}} \zeta_2$ and/or $\zeta_1 \equiv_{\mathbf{B}} \zeta_2$ following the constructions in Theorems \ref{kleene_normal_form_theorem} and \ref{theor_equiv}.	
\end{remark}

\bibliographystyle{plain}
\bibliography{fPCL}

\begin{thebibliography}{10}

\bibitem{Dr:Mu}
M.~Droste, W.~Kuich, and G.~Rahonis.
\newblock Multi-valued mso logics over words and trees.
\newblock {\em Fundamenta Informaticae}, 84(3,4):305--327, 2008.

\bibitem{Deali:uncert}
N.~Esfahani, K.~Razavi, and S.~Malek.
\newblock Dealing with uncertainty in early software architecture.
\newblock In {\em Proceedings of the ACM SIGSOFT 20th International Symposium
  on the Foundations of Software Engineering}, pages 1--4, 2012.

\bibitem{soft:uncertain:world}
D.~Garlan.
\newblock Software engineering in an uncertain world.
\newblock In {\em Proceedings of the FSE/SDP workshop on Future of software
  engineering research}, pages 125--128, 2010.

\bibitem{Wa:Ge}
M.~Gehrke, C.~Walker, and E.~Walker.
\newblock A mathematical setting for fuzzy logics.
\newblock {\em International Journal of Uncertainty, Fuzziness and
  Knowledge-Based Systems}, 5(03):223--238, 1997.

\bibitem{l:fuzzy}
J.~A. Goguen.
\newblock L-fuzzy sets.
\newblock {\em Journal of mathematical analysis and applications},
  18(1):145--174, 1967.

\bibitem{Ka:Pa}
V.~Karyoti and P.~Paraponiari.
\newblock Weighted pcl over product valuation monoids.
\newblock In {\em International Conference on Coordination Languages and
  Models}, pages 301--319. Springer, 2020.

\bibitem{uncerta:risk}
E.~Letier, D.~Stefan, and E.~T. Barr.
\newblock Uncertainty, risk, and information value in software requirements and
  architecture.
\newblock In {\em Proceedings of the 36th International Conference on Software
  Engineering}, pages 883--894, 2014.

\bibitem{Fram:Unc}
C.~Lupafya.
\newblock A framework for managing uncertainty in software architecture.
\newblock ECSA '19, pages 71--74, New York, NY, USA, 2019. Association for
  Computing Machinery.

\bibitem{Ma:Co}
A.~Mavridou, E.~Baranov, S.~Bliudze, and J.~Sifakis.
\newblock Configuration logics: Modeling architecture styles.
\newblock {\em Journal of Logical and Algebraic Methods in Programming},
  86(1):2--29, 2017.

\bibitem{stoch}
I.~Meedeniya, I.~Moser, A.~Aleti, and L.~Grunske.
\newblock Architecture-based reliability evaluation under uncertainty.
\newblock In {\em Proceedings of the Joint ACM SIGSOFT Conference--QoSA and ACM
  SIGSOFT symposium--ISARCS on Quality of software architectures--QoSA and
  architecting critical systems--ISARCS}, pages 85--94, 2011.

\bibitem{Mo:Fu}
J.~Mordeson and D.~Malik.
\newblock {\em Fuzzy automata and languages: theory and applications}.
\newblock CRC Press, 2002.

\bibitem{Pa_Rah_1}
P.~Paraponiari and G.~Rahonis.
\newblock On weighted configuration logics.
\newblock In {\em International Conference on Formal Aspects of Component
  Software}, pages 98--116, 2017.

\bibitem{Pa_Rah}
P.~Paraponiari and G.~Rahonis.
\newblock Weighted propositional configuration logics: a specification language
  for architectures with quantitative features.
\newblock {\em Information and Computation}, In Press, Corrected Proof:104647,
  2020.

\bibitem{Ra:Fu}
G.~Rahonis.
\newblock Fuzzy languages.
\newblock In M.~Droste, W.~Kuich, and H.~Vogler, editors, {\em Handbook of
  Weighted Automata}, pages 481--517. Springer, 2009.

\bibitem{p2p_def}
R.~Schollmeier.
\newblock A definition of peer-to-peer networking for the classification of
  peer-to-peer architectures and applications.
\newblock In {\em Proceedings First International Conference on Peer-to-Peer
  Computing}, pages 101--102. IEEE, 2001.

\bibitem{fuzzy:iot}
S.~Tayyaba, M.~W. Ashraf, T.~Alquthami, Z.~Ahmad, and S.~Manzoor.
\newblock Fuzzy-based approach using iot devices for smart home to assist blind
  people for navigation.
\newblock {\em Sensors}, 20(13):3674, 2020.

\bibitem{relax}
J.~Whittle, P.~Sawyer, N.~Bencomo, B.~H.~C. Cheng, and J.~Bruel.
\newblock Relax: Incorporating uncertainty into the specification of
  self-adaptive systems.
\newblock In {\em 2009 17th IEEE International Requirements Engineering
  Conference}, pages 79--88. IEEE, 2009.

\end{thebibliography}

\appendix
\section{Algorithm for Decidability of Equivalence}\label{fig:equiv}

Let $\zeta_1$ and $\zeta_2$ be fPCL formulas over $P$ and $K$. By Theorem \ref{kleene_normal_form_theorem} we can effectively construct their equivalent fPCL formulas $\zeta_1^\prime$ and $\zeta_2^\prime$, respectively, in normal form. By Theorem \ref{theor_equiv}, if $S_{\zeta_1^\prime} = S_{\zeta_2^\prime}$, then $\zeta_1^\prime \equiv \zeta_2^\prime$. In order to show whether $S_{\zeta_1^\prime}$ and $S_{\zeta_2^\prime}$ are equal or not, we give Algorithm 1 where as input we have the sets $S_{\zeta_1^\prime}$ and $S_{\zeta_2^\prime}$. 

Let $P=\{p,q,r\}$ be a set of ports. For better understanding, the reader can apply the algorithm for the sets 
\begin{enumerate}[$\bullet$]
	\item $S_{\zeta_1^\prime} =  \left\{  \left\{  \left\{  \left\{ p,q\right\}, \left\{ r\right\} \right\}, \left\{ \left\{ p,!r\right\}   \right\}    \right\}, \left\{  \left\{  \left\{ !p,!r\right\}, \left\{p, r\right\} \right\}, \left\{ \left\{ p\right\}   \right\}, \left\{ \left\{ true \right\}\right\}    \right\}        \right\}   $ and 
	\item $S_{\zeta_2^\prime} =  \left\{  \left\{  \left\{  \left\{ p,q\right\}, \left\{! r\right\} \right\}, \left\{ \left\{ p\right\}   \right\}    \right\}, \left\{  \left\{  \left\{ !q\right\} \right\}, \left\{ \left\{ !p\right\}   \right\}, \left\{ \left\{ r \right\}\right\}    \right\}        \right\}   $
\end{enumerate} 
\noindent which represent the fPCL formulas 
\begin{itemize}
	\item $\zeta_1^\prime = \left(\left( \left(p\sowedge q\right) \sovee r \right) \uplus \left( p\sowedge !r \right)\right) \oplus \left( \left((!p\sowedge !r) \sovee (p\sowedge r)\right)\uplus p \uplus true \right)$ and
	\item  $\zeta_2^\prime = \left(\left( \left(p\sowedge q\right) \sovee !r \right) \uplus q\right) \oplus \left( !q\uplus !p \uplus r \right)$,
\end{itemize} 

\noindent respectively. 

\medskip

\noindent \scalebox{0.8}{\begin{minipage}[t!]{.6\textwidth}
		\fbox{\parbox{0.8\linewidth}{\begin{algorithm}[H]
					\caption{Main }\label{main:alg}
					\textbf{Input \hspace*{0.2cm}: $S_{\zeta_1^\prime}$, $S_{\zeta_2^\prime}$} \\
					\hspace*{0.5cm} 
					\begin{algorithmic}
						\If{$card(S_{\zeta_1^\prime})=card(S_{\zeta_2^\prime})$}
						\State $k\gets 0 $
						\For{$i$ in range $(1,  card(S_{\zeta_1^\prime})$}
						\For{$j$ in range $(1,  card(S_{\zeta_2^\prime})$}
						\If{SetEq$_1(S_{\zeta_1^\prime}[i], S_{\zeta_2^\prime}[j]) = true$}
						\State $k \gets k+1$
						\EndIf
						\EndFor
						\EndFor
						\If{$k=card(S_{\zeta_1^\prime})$}
						\State \text{``Equivalent"}
						\Else
						\State \text{``Not equivalent"}
						\EndIf
						\Else
						\State \text{``Not equivalent"}
						\EndIf 
					\end{algorithmic}
		\end{algorithm}}}
\end{minipage}} \hspace*{0.2cm}
\scalebox{0.8}{\begin{minipage}[t!]{.6\textwidth}
		
		\fbox{\parbox{0.8\linewidth}{\begin{algorithm}[H]
					\caption{SetEq$_1$ }
					\textbf{Input \hspace*{0.2cm}: A, B} \\
					\textbf{Output: E}
					\begin{algorithmic}
						\If{$card(A)=card(B)$}
						\State $k\gets 0 $
						\For{$i$ in range $(1,  card(A))$}
						\For{$j$ in range $(1,  card(B))$}
						\If{$SetEq_2(A[i],B[j]) = true$}
						\State $k \gets k+1$
						\EndIf
						\EndFor
						\EndFor
						\If{$k=card(A)$}
						\State $E \gets true$
						\Else
						\State $E \gets false$
						\EndIf
						\EndIf 
					\end{algorithmic}
				\end{algorithm} 
		}}
\end{minipage}}

\bigskip

\noindent \scalebox{0.8}{\begin{minipage}[t]{.6\textwidth}
		\fbox{\parbox{0.8\linewidth}{\begin{algorithm}[H]
					\caption{SetEq$_2$ }
					\textbf{Input \hspace*{0.2cm}: A, B} \\
					\textbf{Output: E}
					\begin{algorithmic}
						\If{$card(A)=card(B)$}
						\State $k\gets 0 $
						\For{$i$ in range $(1,  card(A))$}
						\For{$j$ in range $(1,  card(B))$}
						\If{$SetEq_3(A[i],B[j]) = true$}
						\State $k \gets k+1$
						\EndIf
						\EndFor
						\EndFor
						\If{$k=card(A)$}
						\State $E \gets true$
						\Else
						\State $E \gets false$
						\EndIf
						\EndIf \\
						\Return $E$
					\end{algorithmic}
		\end{algorithm}}}
\end{minipage} } \hspace*{0.2cm}
\scalebox{0.8}{\begin{minipage}[t!]{.6\textwidth}
		\fbox{\parbox{0.8\linewidth}{\begin{algorithm}[H]
					\caption{SetEq$_3$ }
					\textbf{Input \hspace*{0.2cm}: A, B} \\
					\textbf{Output: E}
					\begin{algorithmic}
						\If{$card(A)=card(B)$}
						\State $k\gets 0 $
						\For{$i$ in range $(1,  card(A))$}
						\For{$j$ in range $(1,  card(B))$}
						\If{$A[i]=B[j] $}
						\State $k \gets k+1$
						\EndIf
						\EndFor
						\EndFor
						\If{$k=card(A)$}
						\State $E \gets true$
						\Else
						\State $E \gets false$
						\EndIf
						\EndIf \\
						\Return $E$
					\end{algorithmic}
		\end{algorithm}}}
\end{minipage}}

\end{document}